\documentclass[english]{article}
\usepackage{xcolor}
\usepackage{graphicx}
\usepackage{geometry}
\usepackage[T1]{fontenc}
\usepackage[latin9]{inputenc}
\usepackage{amsbsy}
\usepackage{caption}
\usepackage{subcaption}
\usepackage{comment}
\usepackage{setspace}
\usepackage{multirow}
\usepackage{hyperref}
\usepackage[authoryear,round]{natbib}
\usepackage{amsmath}
\usepackage{amssymb}
\usepackage{booktabs}
\usepackage{bbm}
\usepackage{algorithm}
\usepackage{algpseudocode}
\usepackage[normalem]{ulem}
\usepackage{amsthm}
\usepackage{tikz}
\usetikzlibrary{decorations.markings}
\newtheorem{prop}{Proposition}
\newtheorem{lemma}{Lemma}

\newcommand{\RNum}[1]{\uppercase\expandafter{\romannumeral #1\relax}}

\newcommand{\parentheses}[1]{\left ( #1\right )}

\begin{document}
\title{Locally adaptive step sizes for Hamiltonian Monte Carlo methods}

\author{Jimmy Huy Tran\footnote{Corresponding author, email: jimmy.tran@uis.no, Department of Mathematics and Physics, University of Stavanger, 4036 Stavanger, Norway} 
\and Tore Selland Kleppe\footnote{Department of Mathematics and Physics, University of Stavanger, 4036 Stavanger, Norway}
\and Nawaf Bou-Rabee\footnote{Department of Mathematical Sciences, Rutgers University, 311 N 5th Street,
Camden, NJ 08102, USA}}
\date{}
\maketitle


\begin{abstract}
Exact reversible locally adaptive step sizes are introduced by \cite{bou2025within} in the context of the No-U-Turn Hamiltonian Monte Carlo (HMC) sampler for the purpose of high performance HMC sampling in challenging target distributions. This paper carefully explains and illustrates the locally adaptive step size procedure under more general assumptions. Alternative error measures are introduced with the aim of further enhancing MCMC performance by reducing the prevalence of reversibility failure rejections. In addition, the paper shows how to introduce the locally adaptive step sizes also within standard HMC and Metropolis-corrected kinetic Langevin/generalized HMC (GHMC). Finally, a numerical study of the relative merits of the proposed error measures in the context of different Hamiltonian flow-based MCMC methods is provided.
\end{abstract}



\doublespacing

\section{Introduction} \label{sec:intro}

Markov chain Monte Carlo (MCMC) \citep{GelmanBDA3} is a cornerstone of modern Bayesian analysis when the target posterior density is not analytically computable. In recent years, gradient-based MCMC methods have risen in popularity, mainly due to the target-informed proposals. Several of these algorithms are based on numerically integrated Hamiltonian flows. Such flows are naturally incorporated in the MCMC framework as they have the ability to preserve a desired target distribution after an appropriate Metropolis correction. Thus, a range of applications in fields like theoretical physics, molecular simulations and Bayesian statistics have adopted these methods. Examples of prominent algorithms in this family include kinetic/second-order/underdamped Langevin algorithms \citep{Leimkuhler2015molecular, durmus2025uniform}, which include generalized Hamiltonian Monte Carlo \citep{horowitz1991generalized}, standard Hamiltonian/Hybrid Monte Carlo \citep{duane1987hybrid, mackenze1989improved, neal2011mcmc} and highly specialized Bayesian inference MCMC algorithms based on the former with the No-U-Turn sampler (NUTS) \citep{hoffman2014no} as the prominent example. 

The methods above typically require some sort of numerical integration to approximate the Hamiltonian flow. To this end, the Verlet/St{\o}rmer leapfrog integrator is typically used in conjunction with fixed (time) step sizes to guarantee time reversibility of the numerically integrated flow. However, to ensure a successful and efficient numerical integration procedure, the fixed step sizes must be tailored explicitly to each particular target distribution. Usually, the step size is chosen so that the numerically approximated flow is accurate enough even in the parts of the support where the logarithm of the target density has the largest curvature. Consequently, the parts of the support with more moderate curvature will be traversed using unnecessarily small step sizes, which in turn leads to wasted computational effort. In contrast, if the fixed step sizes are too large, this may lead to very slow exploration of the higher curvature regions, as stability limits may be breached for almost all simulated trajectories and subsequently lead to divergent transitions \citep{stan2018}.

Adaptive step sizes for the numerical integration of ordinary differential equations are ubiquitous in resolving the weaknesses mentioned in the preceding paragraph \citep[see e.g.,][]{hairer2008solving}. Nevertheless, the naive use of such step size controllers to Hamiltonian involution flows obtained by numerical integrators would generally lead to the loss of both reversibility/involutional nature and the volume preservation property. With numerical flows no longer being involutions, it is not clear how to proceed with Metropolis-correcting for incurred numerical errors (even if the Jacobian determinant of the numerical flow is available). \cite{kleppe2022connecting} considers uncorrected MCMC based on Hamiltonian dynamics integrated using such conventional step size controllers. This approach may lead to competitive performance for challenging target distributions and numerical results indicate only minor biases relative to the overall Monte Carlo variation. However, it appears very difficult to quantify the incurred biases a priori.

Building on the adaptive step size mechanism of the recently proposed WALNUTS sampler \citep{bou2025within}, this paper considers adaptive step sizes for Hamiltonian flow-based MCMC while at the same time retaining reversibility/correctness. The first contribution involves both a more detailed description and further extensions to the basic energy error-based algorithm of \cite{bou2025within} by leveraging insights from the numerical ODE literature.

More concretely, the adaptive step size algorithm of \citet{bou2025within} proceeds by dividing the desired numerically approximated Hamiltonian flow into a sequence of \emph{macro} steps with fixed (time-) step size. Each macro step is traversed by a varying (integer) number of \emph{micro} steps. The number of micro steps in each macro step is determined to ensure sufficient accuracy and therefore provides the desired adaptivity.

However, to ensure reversibility, the ideal procedure for determining the number of micro steps in each macro step should be such that the same number of micro steps would be chosen regardless of whether the integration happens forward or backward in time. Such exact reversibility is in general not feasible and/or not computationally efficient. To account for such imperfect reversibility, macro steps that fail to be reversible must either be rejected or the corresponding micro step numbers must be randomized \citep[see also][]{bou2024incorporating}. In both cases, this will impact the performance of the overall algorithm. To reduce the frequency of such reversibility breaches, the present paper describes further procedures to select the number of micro steps in each macro step beyond the energy error-based criterion of \cite{bou2024incorporating,bou2025within}.

A further contribution of the paper is a detailed description of how such adaptive step size mechanisms may also be incorporated in other Hamiltonian flow-based MCMC algorithms beyond WALNUTS. Like for WALNUTS, the recently proposed Gibbs Self-tuning (GIST) framework of \citet{bou2024gist} is utilized to this end. More specifically, it is demonstrated that these alternative adaptive step size methods can be formulated as a GIST sampler. Consequently, the corresponding transition kernel will be reversible with respect to the target distribution, thus ensuring the correctness of the methods. 

A key finding of this paper is that the benefit of the proposed alternative micro step controllers depends strongly on the overarching Hamiltonian flow-based MCMC it is used within. Potentially larger benefits are observed for Hamiltonian flow-based MCMC methods where the complete proposal must be discarded if a reversibility failure occurs.

The remainder of this paper is laid out as follows: Section \ref{sec:background} provides the background on HMC, GIST and the macro/micro locally adaptive step sizes of \cite{bou2025within}. Section \ref{sec:micro_step_criterion} introduces and compares different error measures under the framework of locally adaptive step size selection with the aim of reducing reversibility failures. Section \ref{sec:other_HMC_methods} explains how to include locally adaptive step sizes within other Hamiltonian flow-based MCMC methods beyond WALNUTS. Finally, Section \ref{sec:numerical_ill} provides numerical illustrations and Section \ref{sec:discussion} concludes with a discussion. Source code for the implementations and numerical experiments is available at \url{https://github.com/jihut/locally_adaptive_step_size_hmc}. 

\section{Background and setup} \label{sec:background}

The main goal is to sample from a given target distribution characterized by a probability density function (PDF) $\pi(\theta) := \Tilde{\pi}(\theta) / Z,\;\theta \in \mathbb{R}^d$. Here, $1 / Z$ represents the unknown normalizing constant while $\Tilde{\pi}(\theta)$ is a density kernel of the PDF that can be evaluated pointwise. 


\subsection{Standard Hamiltonian Monte Carlo} \label{sec:standard_hmc}
The idea of the standard HMC method is to extend the state space $\mathbb R^d$ to the phase space $\mathbb R^{2d}$ by introducing an auxiliary momentum variable $\rho \in \mathbb R^d$ so that $z = (\theta, \rho) \in \mathbb R^{2d}$. Furthermore, let $M \in \mathbb R^{d \times d}$  be a symmetric and positive definite mass matrix. An ideal HMC method proceeds by constructing a Markov chain using proposals obtained by evolving the Hamiltonian system using the Hamiltonian flow $\varphi_t$ given as $\varphi_t(z(\tau)) := z(\tau + t)$. This flow preserves the joint density $\mu(\theta, \rho) \propto \exp(-H(\theta, \rho))$, which is also commonly known as the density of the Boltzmann-Gibbs distribution. Here, $H$ simply denotes the Hamiltonian of the system at a given state $z$: 
\begin{equation*} \label{eq:hamiltonian_def}
    H(z)=H(\theta, \rho):= - \log \Tilde{\pi}(\theta) + \frac{1}{2} \rho^T M^{-1} \rho
\end{equation*}
The evolution of the state in time, and hence related to the flow $\varphi_t$, is obtained from solving Hamilton's equations: 
\begin{align}
    {\Dot{\theta}}(t) &= M^{-1} \rho(t) \label{eq:hamiltonian_ode_nr1}, \\
    {\Dot{\rho}}(t) &= \nabla_{\theta} \log \Tilde{\pi}(\theta(t)) \label{eq:hamiltonian_ode_nr2}
\end{align}

As Equations \eqref{eq:hamiltonian_ode_nr1} and \eqref{eq:hamiltonian_ode_nr2} are not analytically solvable in most cases, the flow must be approximated numerically using a numerical integrator. The standard choice of such an integrator is the (Verlet/St{\o}rmer) leapfrog integrator \citep[see e.g.,][]{Leimkuhler:2004,bou-rabee_sanz-serna_2018} as it preserves several key traits of the true Hamiltonian flow that are required to construct a simple but reversible Markov chain based on these proposals. More specifically, assume a current initial state in the phase space simply denoted as $z = (\theta, \rho)$. Then, given a fixed step size $h>0$, the leapfrog integrator $\Phi_h$ evolves the system by $h$ time units from $z$ to $z' = \Phi_h(z)$ in the following way: 
\begin{align}
    \theta' &= \theta + M^{-1} \parentheses{h \rho + \frac{h^2}{2} \nabla_{\theta} \log \Tilde{\pi}(\theta)} \label{eq:leapfrog_update_theta}, \\
    \rho' &= \rho + \frac{h}{2}(\nabla_{\theta} \log \Tilde{\pi}(\theta) + \nabla_{\theta} \log \Tilde{\pi}(\theta')) \label{eq:leapfrog_update_rho}
\end{align}
Some of the most important properties that the leapfrog integrator has in common with $\varphi_t$ are:
\begin{itemize}
    \item Volume preservation: The leapfrog map preserves the volume in the phase space so that the determinant of the Jacobian of the map is 1.
    \item Inverse map: The inverse map is the same as evolving back in time, i.e., $\Phi_h^{-1} = \Phi_{-h}$. 
    \item Time reversibility: Let $\mathcal F$ denote the momentum flip operator, i.e., $\mathcal F(z)=(\theta, -\rho)$. Then, the inverse map can be expressed as 
    \begin{equation} \label{eq:leapfrog_time_reversibility}
        \Phi_h^{-1} = \mathcal F \circ \Phi_h \circ \mathcal F. 
    \end{equation}
    Based on this fact, the map $\mathcal F \circ \Phi_h$ is therefore an involution, i.e., the inverse of this map is itself. 
\end{itemize}
By extending to a situation where one evolves the initial state by $L \in \mathbb Z$ leapfrog steps, it is convenient to denote this composition of maps as $\Phi_h^L$. Then, one can express the map recursively based on the definition of the single leapfrog map as
\begin{equation*}
    \Phi_h^{L+1}:= \Phi_h \circ \Phi_h^{L}, \quad \Phi_h^{L-1}:= \Phi_h^{-1} \circ \Phi_h^{L}. 
\end{equation*}
This notation makes it possible to take both $L >0$ leapfrog steps forward in time and $L<0$ steps backward, which is useful when defining the more sophisticated HMC-like samplers later in the text. Given that the step size $h$ is fixed, the map $\Phi_h^L$ will also inherit the property of time reversibility. 

Based on the properties of the leapfrog integrator, a simple Markov kernel that is $\mu$-invariant and uses proposals from the Hamiltonian system can be defined by employing the involution-based Metropolis-Hastings-Green algorithm \citep{green1995reversible} in the following way: 
\begin{enumerate}
    \item  Momentum refresh: Sample $\rho \sim N(0, M)$. 
    \item Given $\rho$ and the previous position state $\theta$, evolve this state by $L$ leapfrog steps so that $(\theta', \rho') = \Phi_h^{L} (\theta, \rho)$. 
    \item Although not necessary for a situation with full momentum refresh after every iteration like in the standard HMC setting currently considered, the Metropolis-Hastings-Green algorithm requires that the deterministic map that leads to the proposed state is an involution. Thus, based on \eqref{eq:leapfrog_time_reversibility}, the relevant proposed state is $z^* = (\theta', -\rho')$. 
    \item Using the volume preservation property, the acceptance probability given by the Metropolis-Hastings-Green algorithm simply becomes 
    \begin{equation*}
        \alpha(z, z^*) := \min \parentheses{1, \exp(H(z) - H(z^*))}.
    \end{equation*}
    Then, the new proposed state $z^*$ is accepted if $u < \alpha(z, z^*)$ and rejected otherwise, where $u \sim \text{Uniform[0,1]}$. 
\end{enumerate}
The above constitutes the standard HMC sampler and will be the foundation of the more elaborate samplers later. 

For the standard HMC setup, it is crucial that the step size $h$ is fixed for the $L$ leapfrog steps that are performed (or more generally, at least the array consisting of the corresponding step sizes at each leapfrog step should be palindromic) in order to obtain a valid proposal. Otherwise, the time reversibility of $\Phi_h^L$ will be violated and the map $\mathcal F \circ \Phi_h^L$ is not an involution so that the involution-based Metropolis-Hastings-Green algorithm is no longer valid.

In particular, suppose $h=\gamma(\theta,\rho)$ is some non-trivial step size selection mechanism, and let $\hat \Phi(\theta,\rho) = \Phi_{\gamma(\theta,\rho)}(\theta,\rho)$. For a given $(\theta,\rho)$, denote $(\theta',\rho')=\hat \Phi(\theta,\rho)$. Note that for $\mathcal F \circ \hat \Phi$ to be an involution, one would require that $\gamma(\theta,\rho)=\gamma(\theta',-\rho')$, i.e., a reversible step size controller, which to the authors' current knowledge is not readily available. To avoid these problems, one can proceed by representing the step sizes themselves as additional auxiliary random variables, which is the core idea of the GIST framework.

\subsection{HMC-based GIST basics}
\label{sec:GIST-basics}
In the context of Hamiltonian flow-based proposals, a GIST sampler (see \citet{bou2024gist} for a more detailed and rigorous description) is a (Metropolized) Gibbs sampler targeting an enlarged space consisting of $(\theta,\rho)$ and the tuning parameters used for determining the Hamiltonian flow. Specifically, denote $v \in \mathbb V$ as the collection of auxiliary/tuning parameters for the flow step (e.g., the number of leapfrog steps, step sizes, mass matrices, etc.), and suppose that a tuning parameter distribution $p(v|\theta,\rho)$ has been selected. Then, the GIST sampler proceeds to target the joint density $\mu(\theta,\rho,v)\propto \exp(-H(\theta,\rho))p(v|\theta,\rho)$ on the augmented space $\mathbb A = \mathbb R ^ {2d} \times \mathbb V$.

The final ingredient of a GIST specification is a mapping of the form
\begin{equation}
    ((\theta',\rho'),v') =(\mathcal F \circ G_v(\theta,\rho),g_{(\theta,\rho)}(v)),
\label{eq:genGISTInvolution}
\end{equation}  
where $G_v$ is a numerically integrated Hamiltonian flow with tuning parameters $v$ and $g_{(\theta,\rho)}$ is a ``parameter mapping.'' 
Throughout the text, $\mathbb V$ is a countable set (which corresponds to a set of vectors of micro step counts in this paper), and the reference measure on the augmented space $\mathbb A=\mathbb R^{2d}\times\mathbb V$ is the product measure
\begin{equation*}
  \nu := \mathrm{Leb}_{2d}\otimes \#_{\mathbb V},
  \label{eq:reference_measure}
\end{equation*}
where $\mathrm{Leb}_{2d}$ denotes the Lebesgue measure on $\mathbb R^{2d}$ and $\#_{\mathbb V}$ the counting measure on $\mathbb V$. For each $(\theta,\rho)$, $p(\cdot\mid\theta,\rho)$ is a probability mass function on $\mathbb V$, and $\mu(\theta,\rho,v)$ is a probability density with respect to $\nu$. The complete mapping in \eqref{eq:genGISTInvolution} must be a $\nu$-preserving involution of $\mathbb A$. In particular, it must satisfy
\begin{align*}
    \mathcal F
    \circ G_{g_{(\theta,\rho)}(v)}
    \circ \mathcal F
    \circ G_v(\theta,\rho)
    &=
    (\theta,\rho), \\
    g_{\mathcal F\circ G_v(\theta,\rho)}
    \left(g_{(\theta,\rho)}(v)\right)
    &=
    v.
\end{align*}
If the complete mapping in \eqref{eq:genGISTInvolution} is not
reference-measure preserving, the corresponding Radon--Nikodym derivative must instead be included in the Metropolis-Hastings acceptance probability.
A sufficient condition for $\nu$-preservation, which covers every construction in this paper, is that the parameter mapping does not depend on the state, i.e., $g_{(\theta,\rho)}=g$ for a bijection $g$ of $\mathbb V$, and that for every fixed $v$, the map $G_v$ is a volume preserving bijection of $\mathbb R^{2d}$. Indeed, writing $\Psi(\theta,\rho,v):=((\mathcal F\circ G_v)(\theta,\rho),g(v))$, one has $\Psi^{-1}(A\times\{v'\})=\{(\theta,\rho,g^{-1}(v')):(\mathcal F\circ G_{g^{-1}(v')})(\theta,\rho)\in A\}$ for every measurable $A\subset\mathbb R^{2d}$ and $v'\in\mathbb V$, and this set has $\nu$-measure $\mathrm{Leb}_{2d}(A)$. When the macro step size is randomized (like in Section~\ref{sec:other_HMC_methods}), the step size vector is included in $v$, $\mathbb V$ becomes the product of a countable set and a Euclidean space, and $\#_{\mathbb V}$ is replaced by the product of counting and Lebesgue measure while nothing else changes.

An instructive non-trivial example of such a joint mapping would be two leapfrog steps with individual step sizes, i.e.,  $v=(h_1,h_2)$, $G_v = \Phi_{h_2} \circ \Phi_{h_1}$ and $g_{(\theta,\rho)}(h_1,h_2)=(h_2,h_1)$ (so the order of the step size array is reversed). Then, it is straightforward to verify that (\ref{eq:genGISTInvolution}) is an involution as $\mathcal F \circ G_{g_{(\theta,\rho)}(v)} \circ \mathcal F \circ G_v(\theta, \rho) =  \mathcal F \circ \Phi_{h_1} \circ \Phi_{h_2} \circ \mathcal F \circ \Phi_{h_2} \circ \Phi_{h_1} (\theta, \rho) = (\theta, \rho)$ and $g_{\mathcal F\circ G_v(\theta,\rho)}\circ g_{(\theta,\rho)}=\mathrm{id}$, since $g$ does not depend on the state and reverses the order twice.

Provided that the above elements are specified, a GIST sampler would perform the following steps: 
\begin{enumerate}
    \item (Partial) momentum refresh that preserves $\rho \sim N(0,M)$. 
    \item Update the auxiliary parameters by $v \sim p(\cdot|\theta,\rho)$.
    \item Compute the Metropolis-Hastings-Green (MHG) involution proposal (\ref{eq:genGISTInvolution}).
    \item Accept $((\theta',\rho'),v')$ with probability
\begin{equation}
    \min \left(1,\exp(H(\theta,\rho)-H(\theta',\rho'))\frac{p(v'|\theta',\rho')}{p(v|\theta,\rho)}\right).
    \label{eq:GIST-acc-prob}
\end{equation}
Otherwise, keep $((\theta,\rho),v)$. In either case, $v$ is then discarded.
\end{enumerate}

The momentum refresh in the first step preserves the phase-space
target proportional to $\exp(-H)$. Consequently, if
$(\theta,\rho)$ has this distribution, then sampling
$v\sim p(\cdot\mid\theta,\rho)$ gives
$(\theta,\rho,v)$ the augmented target density
$\mu(\theta,\rho,v)$. The Metropolis--Hastings step in steps 3--4
preserves this augmented target. Finally, discarding $v$ after the
accept--reject step recovers the correct phase-space marginal.

The standard HMC method above clearly fits within the above GIST framework (with $\mathbb V$ being a singleton), as does the (time-discretized) randomized HMC \citep{bou2017randomized}, where the number of integration steps $v = L$ has some distribution $p(L)$ on the positive integers that is independent of $(\theta,\rho)$ (and $g$ is the identity).



\subsection{Locally adaptive step size for standard Hamiltonian Monte Carlo} \label{sec:local_adapt_standard_hmc}


The GIST framework \citep{bou2024gist} provides a recipe for incorporating locally adaptive step size procedure. Consequently, this work follows \cite{bou2025within} in introducing the concepts of \emph{macro} and \emph{micro} steps and correcting for reversibility failures in the step size controller using the GIST Metropolis correction. 

For the standard HMC algorithm, one would perform a fixed number $L > 0$ of macro steps. Each of these steps will correspond to an evolution of the system by a macro step size $h$, which will now be fixed so that the integration time from the initial state to the final state is identical in each iteration. For convenience, define also the uniform time grid as $t_j := jh$, $j \in \mathbb Z$. This definition is rather general and useful for more elaborate samplers where one might perform both backward and forward integration steps. For the standard HMC, it is sufficient to assume that $j \in \mathbb N_0$.  

Within the macro step between $t_j$ and $t_{j+1}$, one performs $l_{j, j+1} \in \mathbb N$ micro (leapfrog) steps, each with step size $hl_{j, j+1}^{-1}$.
The leapfrog map evolving the system from time $t_j$ to $t_{j+1}$ is simply given as $\Phi_{hl_{j, j+1}^{-1}}^{l_{j, j+1}}$ and inherits the time reversibility property (conditionally on $l_{j,j+1}$). See Figure 6 of \cite{bou2025within} for an illustration of the macro/micro step distinction.

Composing $L$ such macro steps leads to the $(\theta,\rho)$-proposal $\Phi_{L, 0} \mid_{\mathbf l}$
\begin{equation*} \label{eq:locally_leapfrog_composition_standard_hmc}
    \Phi_{L, 0} \mid_{\mathbf l} := \begin{cases}
        \Phi_{hl_{L-1, L}^{-1}}^{l_{L-1, L}} \circ \dots \circ \Phi_{hl_{0, 1}^{-1}}^{l_{0, 1}}, \text{ if } L >0, \\
        \mathcal F \circ \Phi_{hl_{L, L+1}^{-1}}^{l_{L, L+1}} \circ \dots \circ \Phi_{hl_{-1, 0}^{-1}}^{l_{-1, 0}} \circ \mathcal F , \text{ if } L <0, 
    \end{cases}
\end{equation*}
where $\mathbf l$ is a vector containing all the micro step size counts $l_{j, j+1}$ such that $(\mathbf l)_j = l_{j - 1, j}$, $j = 1, \dots, L$, for $L \geq 1$. A similar definition is also available for $L\leq -1$ by letting $(\mathbf l)_{-j} = l_{j, j+1}$, $j = L, \dots, -1$.  For $L=0$, which may occur when $L$ is randomized, $\Phi_{0,0}\mid_{\mathbf l}$ is the identity map and $\mathbf l$ is empty.
Note that $\mathcal F \circ \Phi_{L, 0} \mid_\mathbf l$ itself is generally not an involution (unless $\mathbf l$ happens to be palindromic). 

The failure to be an involution may still be resolved by embedding this construction into the GIST framework. More specifically, let the collection of auxiliary parameters be $v=\mathbf l$ and $G_{\mathbf l}= \Phi_{L, 0} \mid_\mathbf l$. Furthermore, by choosing the parameter mapping $g_{(\theta,\rho)}(\mathbf l)=\mathcal R(\mathbf l)$, where $\mathcal R$ reverses the order of the argument, it is clear that
\begin{equation}
(\theta',\rho',\mathbf l')=(\mathcal F \circ \Phi_{L, 0} \mid_\mathbf l(\theta,\rho),\mathcal R(\mathbf l))
\label{eq:stepsize-involution}
\end{equation}
is a volume-preserving involution which may be used within the Metropolis-Hastings-Green framework.


To complete the GIST sampler, it is required to specify the parameter distribution $p(\mathbf l|\theta,\rho)$. This distribution should reflect the main objective of adapting the step sizes locally, i.e., choosing the largest leapfrog step size possible such that a given criterion related to the error in the flow approximation is below a prescribed threshold. In practice, $p(\mathbf l|\theta,\rho)$ is constructed in a sequential manner as
\begin{equation}
p(\mathbf l|\theta,\rho) = \prod_{j=1}^L p(l_{j-1,j}|\theta_{j-1},\rho_{j-1}), 
\label{eq:pl-joint-distr}
\end{equation}
where $(\theta_0,\rho_0)=(\theta,\rho)$, $ (\theta_j,\rho_j)= \Phi_{j, 0} \mid_{ (l_{0,1},\dots,l_{j-1,j})}(\theta,\rho),\;j=1,\dots,L$ (and $(\theta_L,\rho_L)=(\theta',\rho')$). Note that the same single macro step law $p(l\mid\theta,\rho)$ appears in every factor of \eqref{eq:pl-joint-distr}. This time-homogeneity property is what makes the ratio in \eqref{eq:GIST_MHG-alpha-general} below factorize over the macro steps.

The exact specifications of the distribution of the number of micro steps in a given macro step $p(l_{j-1,j}|\theta_{j-1},\rho_{j-1})$ will be discussed in detail shortly. Also, notice that due to the sequential nature of (\ref{eq:pl-joint-distr}), steps 2 and 3 in the GIST sampler described in Section \ref{sec:GIST-basics}  are in practice performed concurrently.

Contingent on the specification from (\ref{eq:pl-joint-distr}), the GIST-MHG acceptance probability (\ref{eq:GIST-acc-prob}) for the proposal represented by (\ref{eq:stepsize-involution}) reduces to 
\begin{equation}
\min \left(1,\;\exp(H(\theta,\rho)-H(\theta',\rho'))\prod_{j=1}^L\frac{p(l_{j-1,j}|\theta_j,-\rho_j)}{p(l_{j-1,j}|\theta_{j-1},\rho_{j-1})} \right).
\label{eq:GIST_MHG-alpha-general}
\end{equation}
Note that an ideal micro step number distribution would be ``reversible'' in the sense that the distribution of the number of micro steps would be identical regardless of whether the macro step would be carried out forward or backward in time, i.e., $p(l|\theta,\rho)=p(l|\mathcal F \circ \Phi_{hl^{-1}}^l(\theta,\rho))$ for all $h>0$, $l\in\mathbb N$ and $(\theta,\rho)\in\mathbb R^{2d}$. In practice, finding such exactly reversible  distributions seems infeasible, and instead this paper considers choices of $p(l|\theta,\rho)$ which are only approximately reversible.  

Although the numerically integrated Hamiltonian flow obtained via (\ref{eq:pl-joint-distr}) may be Metropolis-corrected using (\ref{eq:GIST_MHG-alpha-general}) for a very general choice of $p(l|\theta,\rho)$, only $p(l|\theta,\rho)$ with a specific structure will be considered. More specifically, let  $\kappa(\theta, \rho, h, l)$ be a general measure of error associated with a macro step based on $l$ micro steps. Also, denote the desired error tolerance by $\delta$. As an example, $\kappa(\theta, \rho, h, l) = |H(\theta,\rho) - H(\Phi_{h/l}^{l}(\theta,\rho))|$ for the energy error. Finally, define $\mathcal L$ as the smallest (power of 2) number of micro steps fulfilling the error criterion:
\begin{equation}
    \mathcal L(\theta, \rho, h, \delta) = 2^{c^*}, \text{ where } c^*=\min \{c \in \{c_{\min}, c_{\min}+1, \dots  \} \; : \; \kappa(\theta, \rho,  h, 2^c) < \delta  \}. 
    \label{eq:gen_I_def}
\end{equation}
Here, $c_{\min} \in \mathbb N_0$ is the $\log_2$ of the smallest number of micro steps allowed within a macro step according to the prescribed criterion.  

Throughout the theoretical development, all numerical maps are
understood in exact arithmetic. It is assumed that, for every fixed
$h>0$ and $\delta>0$, the mappings
\[
    (\theta,\rho)
    \longmapsto
    \kappa(\theta,\rho,h,2^c)
\]
are measurable and satisfy
\[
    \lim_{c\to\infty}
    \kappa(\theta,\rho,h,2^c)
    =
    0
\]
for almost every $(\theta,\rho)$ with respect to the phase-space
target distribution. If a numerical trajectory is not well-defined
for a particular value of $c$, its error is assigned the value
$+\infty$. These assumptions ensure that the set in
\eqref{eq:gen_I_def} is nonempty almost surely, that
$\mathcal L(\theta,\rho,h,\delta)$ is finite, and that
$\mathcal L$ is measurable.

Two remarks are in order. First, in Algorithm~\ref{alg:alg_hmc} below, $\mathcal L$ is evaluated not only at the initial state, but also at the intermediate states $(\theta_j,\pm\rho_j)$. The map $(\theta,\rho)\mapsto(\theta_j,\pm\rho_j)$ is defined piecewise: on the set of initial states for which a given vector of micro step counts $(l_{0,1},\dots,l_{j-1,j})$ is selected, it is equal to a fixed composition of leapfrog maps and $\mathcal F$, which is a volume preserving bijection of $\mathbb R^{2d}$ and therefore maps Lebesgue null sets to null sets. There are countably many such sets and they are measurable, so the map sends null sets to null sets on all of $\mathbb R^{2d}$. Consequently, $\mathcal L$ is finite at every state at which the algorithm evaluates it, for $\mu$-almost every initial state. On the exceptional null set, the transition may be defined arbitrarily (for instance as a rejection) without affecting invariance.

Second, the limit assumption above holds for the three error measures considered in this paper under mild conditions. Suppose that $\log\tilde\pi\in C^4(\mathbb R^d)$ is bounded above. Then, the Hamiltonian flow $\varphi_t$ is complete since the conservation of $H$ confines each trajectory to a compact set, and for every $(\theta,\rho)$, the leapfrog approximation $\Phi_{h2^{-c}}^{2^c}(\theta,\rho)$ converges to $\varphi_h(\theta,\rho)$ as $c\to\infty$ \citep[Chapter~II]{hairer2008solving}. Consequently, $\kappa_H\to|H-H\circ\varphi_h|=0$; $\kappa_{MRF}\to0$ because both the fine and the coarse leapfrog flows converge to $\varphi_h$; and $\kappa_{RKNF}\to0$ because each of the terms $f_k,b_k$ given by \eqref{eq:flow_criterion_eq_2}-\eqref{eq:flow_criterion_eq_3} is a local error of size $O((h2^{-c})^3)$ uniformly in $k$ by Lemma~\ref{lem:rkn_order} from Appendix \ref{appendix:rkn_order}. 

In the implementation, the search in \eqref{eq:gen_I_def} is truncated at a maximal level $c_{\max}$ (here, $c_{\max}=20$ is used), and $\mathcal L:=2^{c_{\max}}$ when no $c\leq c_{\max}$ satisfies the criterion. This truncated rule is itself a deterministic choice of $\mathcal L$, it removes the need for the limit assumption, and Proposition~\ref{prop:approx_reverse} below continues to hold for it: if $\mathcal L(\theta,\rho,h,\delta)=2^{c^*}$ with $c^*<c_{\max}$, the proof applies verbatim, and if $\mathcal L(\theta,\rho,h,\delta)=2^{c_{\max}}$, the inequality \eqref{eq:I-inequality} is trivial. Accordingly, in the backward check of Algorithm~\ref{alg:macro-step}, the value $l_j^b:=l_j^f$ is assigned whenever no $c<c^*$ satisfies the criterion. None of the experiments reported below reached $c_{\max}$.


In this paper, only a deterministic variant of $p(l|\theta,\rho)$ is considered. This is specified by
\begin{equation}
p(l|\theta,\rho) = \begin{cases}
    1, & \text{if}\; l=\mathcal L(\theta, \rho, h, \delta),\\
    0, & \text{otherwise}.
\end{cases}
\label{eq:pl_deterministic}
\end{equation}
Consequently, this leads to the acceptance probability
\begin{equation} \label{eq:metropolis_adapt_standard_hmc}
    \alpha((\theta_0, \rho_0), (\theta_L, -\rho_L)) = \min\parentheses{1, \exp(H(\theta_0, \rho_0) - H(\theta_L, -\rho_L))} \prod_{j = 1}^L w_j',
\end{equation}
where $w_j' = 1$ if $l_j^f = l_j^b$ and 0 otherwise. Here, $l_j^f := \mathcal L (\theta_{j-1}, \rho_{j-1}, h, \delta)$ and $l_j^b := \mathcal L (\theta_{j}, -\rho_{j}, h, \delta)$. Overall, a single iteration of the locally adaptive step size version of standard HMC as a GIST sampler is summarized in Algorithm \ref{alg:alg_hmc} after omitting the auxiliary variables (except for the momentum) for clarity.   Because the momentum is sampled independently from $N(0,M)$ at every
iteration and discarded after the accept--reject step, Algorithm
\ref{alg:alg_hmc} defines a Markov transition kernel on the position
variable $\theta$ alone.
\begin{algorithm}[t]
\caption{A single iteration of the locally adaptive step size standard HMC algorithm}\label{alg:alg_hmc}
\begin{algorithmic}[1]
\State Predefine the number of macro steps $L$ and the macro step size $h$. 
\State Let $\theta^{(n)}$ denote the current position and set
$\theta_0=\theta^{(n)}$.
Independently sample a new momentum $\rho_0\sim N(0,M)$.
\For{$j = 1,\dots,L$} 
\State \textcolor{blue}{Compute $l_j^f = \mathcal L(\theta_{j-1}, \rho_{j-1},  h, \delta)$.}
\State \textcolor{blue}{Set $(\theta_{j}, \rho_{j}) = \Phi_{h(l_j^f)^{-1}}^{l_j^f} (\theta_{j-1}, \rho_{j-1})$.} 
\State \textcolor{blue}{Compute $l_j^b = \mathcal L(\theta_{j}, -\rho_{j},  h, \delta)$.}
\State \textcolor{blue}{Set $w_j' = 1$ if $l_j^f = l_j^b$ and 0 otherwise.} 
\EndFor
\State With probability given by
\eqref{eq:metropolis_adapt_standard_hmc}, set
$\theta^{(n+1)}=\theta_L$. Otherwise, set
$\theta^{(n+1)}=\theta^{(n)}$. The momentum variables are then
discarded.
\end{algorithmic} 
\end{algorithm}

Before proceeding, a few remarks on the choices made above are discussed. Firstly, it should be noted that the practice of considering only $2^c$, $c=c_{\min},c_{\min}+1,\dots,$ micro steps per macro step is somewhat arbitrary, but has been found to work well in practice \citep{bou2024incorporating,bou2025within}. Different schedules of progressively finer discretization could also be possible, but would have to balance the need for arriving fast enough at an acceptable $l$ (without computationally wasteful search through unnecessary coarser discretizations) while not being too aggressive at the same time (and thereby wasting computation by choosing substantially finer discretizations than what is needed). Secondly, the choice of a deterministic $p(l|\theta,\rho)$ is also made based on experience with computational performance. \cite{bou2024incorporating,bou2025within} also consider randomized variants of (\ref{eq:pl_deterministic}) that place mass on both $\mathcal L$ and the next finer discretization (with equal mass on the two in \cite{bou2025within}), which may preclude outright rejection of the complete proposal in cases where $l^b<l^f$ (see Proposition \ref{prop:approx_reverse} below). However, this practice will also entail that overly fine discretizations are being used (when $l=\mathcal L +1$), and the approach was found by \cite{bou2025within} to be generally less computationally efficient than (\ref{eq:pl_deterministic}).

\subsection{On approximate reversibility}

\begin{figure}
    \centering
    \includegraphics[scale=0.4]{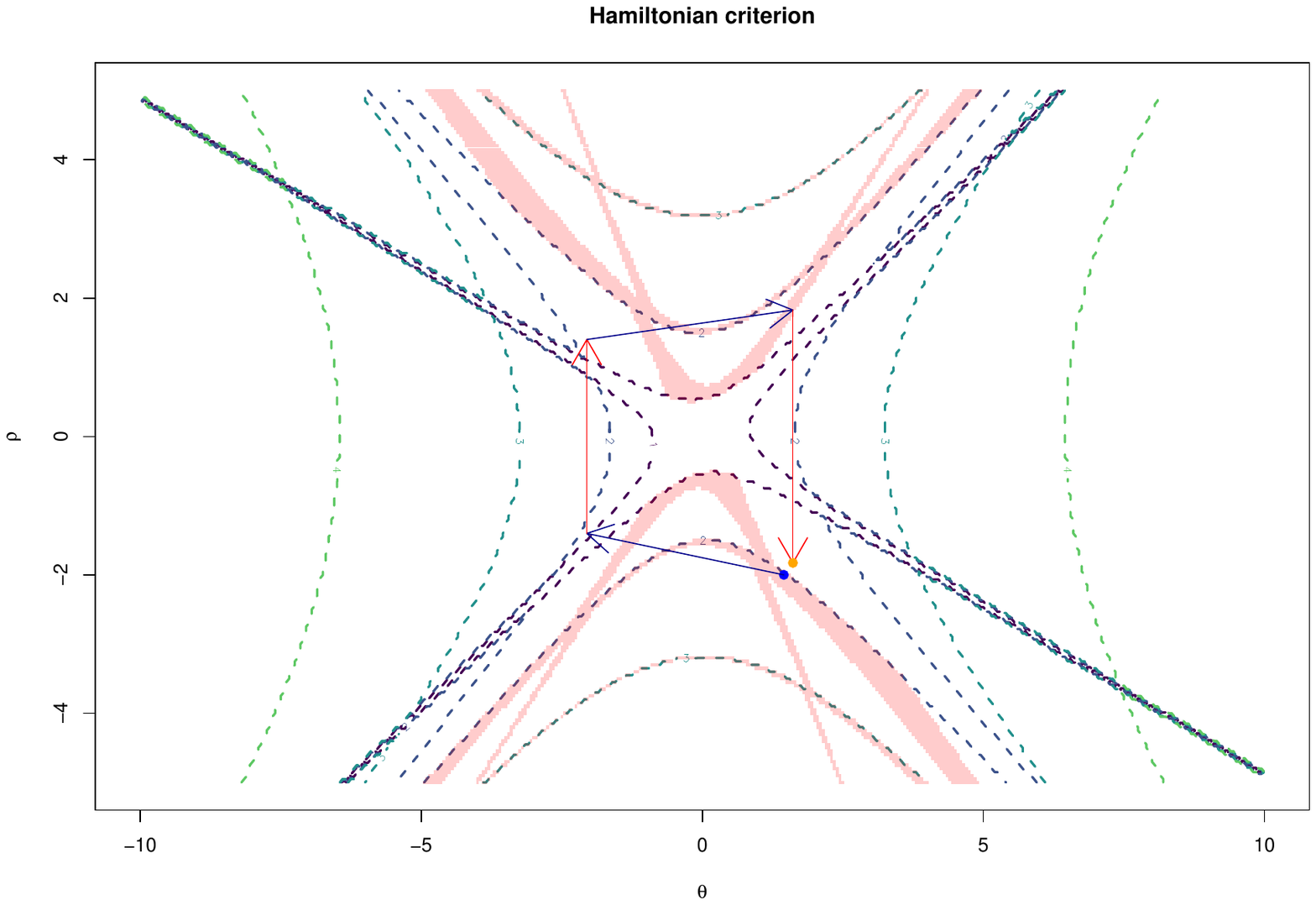}
    \caption{A concrete example of a scenario where reversibility fails for the Hamiltonian criterion. The target distribution is a univariate standard Gaussian, $M=1$, $h=\pi/2$, $c_{\min}=0$ and $\delta=0.2$. The blue point represents an initial state $(\theta, \rho) = (1.45, -2.00)$. The blue arrows denote forward macro steps while the red arrows correspond to momentum flips. The first blue arrow therefore shows the first forward macro step while the remaining constitute the backward macro step with the orange point as the final destination. The red shaded area illustrates the region of points that will experience reversibility failures, the dashed lines are the boundaries between areas of two neighbouring micro step sizes. }
\label{fig:concrete_example_hamiltonian_criterion_1d_normal}
\end{figure} 

    \label{fig:concrete_example_og_hamiltonian_criterion_1d_normal}

Before proceeding, a further notion is useful. An error measure $\kappa$ is said to be reversible if, for all $h>0$, all $l\in\{2^c:c\geq c_{\min}\}$ and all $(\theta,\rho)\in\mathbb R^{2d}$ for which both sides are defined, the following condition is satisfied:
\begin{equation} \label{eq:error_measure_reversible_def}
\kappa(\theta,\rho,h,l)=\kappa(\theta',-\rho',h,l), \; \text{where } (\theta',\rho')=\Phi_{h/l}^l(\theta,\rho).
\end{equation}
For instance, because the leapfrog integrator is exactly reversible
and $H\circ\mathcal F=H$,  it is clear that the absolute energy error $\kappa(\theta, \rho, h, l) = |H(\theta,\rho) -H(\Phi_{h/l}^{l}(\theta,\rho))|$ is an example of a reversible error measure. In the remaining part of the text, only reversible error measures will be considered. For reversible error measures, the following result, which generalizes Lemma 14 of \cite{bou2025within}, holds:

\begin{prop}
\label{prop:approx_reverse}
Suppose that $\kappa$ is a reversible error measure. Then
\begin{equation}
\mathcal L(\theta,\rho,h,\delta) \geq \mathcal L(\theta^*,-\rho^*,h,\delta)\; \text{ where } (\theta^*,\rho^*) = \Phi^{\mathcal L(\theta,\rho,h,\delta)}_{h/\mathcal L(\theta,\rho,h,\delta)}(\theta,\rho). 
\label{eq:I-inequality}
\end{equation}
\end{prop}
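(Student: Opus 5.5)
The plan is to unwind the definition of $\mathcal L$ in \eqref{eq:gen_I_def} and use the reversibility property \eqref{eq:error_measure_reversible_def} of $\kappa$ at a single micro step count. Write $\mathcal L(\theta,\rho,h,\delta)=2^{c^*}$, where $c^*$ is the minimal index $c\geq c_{\min}$ with $\kappa(\theta,\rho,h,2^c)<\delta$. Then $(\theta^*,\rho^*)=\Phi^{2^{c^*}}_{h/2^{c^*}}(\theta,\rho)$ is precisely the endpoint appearing in \eqref{eq:error_measure_reversible_def} with $l=2^{c^*}$.

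The key step is to apply reversibility with $l=2^{c^*}$, which gives
\[
\kappa(\theta^*,-\rho^*,h,2^{c^*})=\kappa(\theta,\rho,h,2^{c^*})<\delta .
\]
Hence $c^*$ belongs to the set $\{c\geq c_{\min}:\kappa(\theta^*,-\rho^*,h,2^c)<\delta\}$ defining $\mathcal L(\theta^*,-\rho^*,h,\delta)$. This set is therefore nonempty, and its minimum is at most $c^*$. Since $c\mapsto 2^c$ is increasing, it follows that $\mathcal L(\theta^*,-\rho^*,h,\delta)\leq 2^{c^*}=\mathcal L(\theta,\rho,h,\delta)$, which is \eqref{eq:I-inequality}.

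The remaining care is bookkeeping rather than a real obstacle. First, reversibility is only asserted where both sides are defined. Because $\kappa(\theta,\rho,h,2^{c^*})<\delta<\infty$, the forward trajectory is well defined. The backward trajectory from $(\theta^*,-\rho^*)$ with the same $2^{c^*}$ micro steps is the flipped forward trajectory, by the time reversibility \eqref{eq:leapfrog_time_reversibility}, so it is defined as well. Second, the inequality is one-sided because reversibility is only available at the particular count $2^{c^*}$. At a coarser level $c<c^*$, the value $\kappa(\theta^*,-\rho^*,h,2^c)$ concerns a different trajectory from the one starting at $(\theta,\rho)$, so it may fall below $\delta$ even though $\kappa(\theta,\rho,h,2^c)\geq\delta$. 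This is exactly the failure mode illustrated in Figure~\ref{fig:concrete_example_hamiltonian_criterion_1d_normal}.

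Finally, for the truncated rule with $c_{\max}$, the argument goes through unchanged when $c^*<c_{\max}$. When $\mathcal L=2^{c_{\max}}$, the bound is immediate, since the truncated $\mathcal L$ never exceeds $2^{c_{\max}}$.
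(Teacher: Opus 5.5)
Your proof is correct and follows the paper's argument exactly. You write $\mathcal L(\theta,\rho,h,\delta)=2^{c^*}$, use reversibility of $\kappa$ at $l=2^{c^*}$ to get $\kappa(\theta^*,-\rho^*,h,2^{c^*})<\delta$, and conclude that the minimum defining $\mathcal L(\theta^*,-\rho^*,h,\delta)$ is at most $c^*$; your remarks on well-definedness and on the truncated rule at $c_{\max}$ also match the paper's accompanying discussion.
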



\begin{proof}
Let
$\mathcal L(\theta,\rho,h,\delta)=2^{c^*}$,
where $c^*$ is the minimum in \eqref{eq:gen_I_def}. By definition,
\[
\kappa(\theta,\rho,h,2^{c^*})<\delta.
\]
Because $\kappa$ is reversible and
\[
(\theta^*,\rho^*)=\Phi_{h2^{-c^*}}^{2^{c^*}}(\theta,\rho),
\]
it follows that
\[
\kappa(\theta^*,-\rho^*,h,2^{c^*})<\delta.
\]
Thus, $c^*$ belongs to the set over which the minimum defining
$\mathcal L(\theta^*,-\rho^*,h,\delta)$ is taken. Consequently,
\[
\mathcal L(\theta^*,-\rho^*,h,\delta)
\leq 2^{c^*}
=
\mathcal L(\theta,\rho,h,\delta),
\]
which proves the result.
\end{proof}

Reversibility of the error measure $\kappa$ in the sense of \eqref{eq:error_measure_reversible_def} does not make the micro step number distribution \eqref{eq:pl_deterministic} reversible in the sense of Section~\ref{sec:local_adapt_standard_hmc}: Proposition~\ref{prop:approx_reverse} yields only the inequality $\mathcal L(\theta^*,-\rho^*,h,\delta)\leq\mathcal L(\theta,\rho,h,\delta)$. The indicator $w_j'$ in \eqref{eq:metropolis_adapt_standard_hmc} is equal to one when the equality holds in the macro step $j$ and zero when the inequality is strict.

Proposition \ref{prop:approx_reverse} provides several insights. First, it is clear that any reversibility failures originate from the fact that a coarser discretization is acceptable when integrating backward in time. An example of such a reversibility failure for a standard Gaussian target distribution with $c_{\min}=0$ and the energy error measure is presented in Figure \ref{fig:concrete_example_hamiltonian_criterion_1d_normal}. For a fixed combination of $h$ and $\delta$, the regions in the phase space leading to reversibility failures are indicated by the red shaded area. The dashed lines represent the boundaries between different choices of $c=\log_2(l)$. The reversibility failures occur when $(\theta^*,-\rho^*)$ (the state after the first blue arrow originating from the blue dot and the first red arrow) is in a lower $c$-region ($c=1$, $l=2$) than $(\theta,\rho)$ (i.e., the blue dot itself, which is located in the region of $c=2$, $l=4$). As the boundaries between different micro step numbers are determined by the particular error measure $\kappa$ employed, the first motivation of this paper is to investigate and develop other error measures beyond the energy error that are less susceptible to such reversibility failures. Using Figure \ref{fig:concrete_example_hamiltonian_criterion_1d_normal}, this corresponds to reducing the total area (or volume in higher dimensions) of the red shaded regions. 

A second insight provided by Proposition \ref{prop:approx_reverse} is that if $\mathcal L(\theta,\rho,h,\delta)=2^{c_{\min}}$, one necessarily obtains equality in (\ref{eq:I-inequality}) and therefore with certainty achieves a reversible macro step. In the graphical representation in Figure \ref{fig:concrete_example_hamiltonian_criterion_1d_normal}, this observation translates to the absence of red shaded area within the (innermost) region where $c=c_{\min}=0$, $l=1$. This property may be leveraged to reduce the number of reversibility failures, typically by reducing the macro step size $h$ for a fixed error threshold $\delta$ so that the majority of macro steps involve only a single micro step. A value of $P(c=c_{\min})$ around 0.8 has been found to work well in practice \citep{bou2025within}. Such choices of $h$ appear to be a reasonable balance between (for small $h$) certain reversibility, but also the potential for unnecessarily high accuracy/inefficient simulations, against (for large $h$) better adapted micro step sizes, but also higher rates of reversibility failures which may lead to inefficiencies in the overarching MCMC simulation.

\begin{algorithm}[t]
\caption{A single optimized macro step assuming a reversible error criterion $\kappa$ (replaces the macro step $j$ indicated by the blue lines in Algorithm \ref{alg:alg_hmc}). Note that the computation of $\Phi_{h2^{-c}}^{2^c}$ and the associated error measure $\kappa$ is typically done concurrently as indicated below.}\label{alg:macro-step}
\begin{algorithmic}[1]
\For{$c = c_{\min},c_{\min}+1,\dots$} 
    \State Compute $((\theta_j,\rho_j),k)=\left(\Phi_{h2^{-c}}^{2^c}(\theta_{j-1},\rho_{j-1}),\kappa(\theta_{j-1},\rho_{j-1},h,2^c)\right)$
    \If{$k<\delta$}
        \State $c^* =c$
        \State \texttt{break}
    \EndIf
\EndFor
\State $w_j'=1$
\If{$c^*>c_{\min}$}
    \For{$c=c_{\min},c_{\min}+1,\dots,(c^*-1)$}
        \State Compute $k=\kappa(\theta_{j},-\rho_{j},h,2^c)$ (the state $\Phi_{h2^{-c}}^{2^c}(\theta_{j},-\rho_{j})$ is computed as a by-product, but is not needed)
        \If{$k<\delta$}
            \State $w_j'=0$
            \State \texttt{break}
        \EndIf
    \EndFor
\EndIf
\end{algorithmic} 
\end{algorithm}

Algorithm \ref{alg:macro-step} summarizes a single macro step (equivalent to the blue lines in Algorithm \ref{alg:alg_hmc}) which is optimized via the utilization of Proposition \ref{prop:approx_reverse}. In particular, no reversibility check needs to be done if the minimal number of micro steps, i.e., $c^*=c_{\min}$, is acceptable in the forward step. When $c^*>c_{\min}$, the $c=c^*$ backward check is not needed as it is already known from the forward pass to be acceptable due to Proposition \ref{prop:approx_reverse} (since $\kappa$ is assumed to be reversible).

\section{Error measures for the micro step selection criterion} \label{sec:micro_step_criterion}
So far, the error measures for adaptive step size MCMC methods have predominantly been based on energy/Hamiltonian error \citep{SJOS:SJOS12204,pmlr-v238-biron-lattes24a,bou2024incorporating,bou2025within}.
Based on the above discussion, it is clear that one would prefer to avoid using error measures that result in too many $w_j'$ being zero, at least from the perspective of the standard HMC method. Two alternatives that are different from the former will be developed and presented, with the goal of reducing the frequency of reversibility failures. However, such improvements generally come at the cost of additional gradient evaluations per micro step. Whether such added costs pay off in the overall MCMC simulation efficiency will be explored later.

\subsection{Hamiltonian criterion} \label{sec:hamiltonian_criterion}
As presented before, the energy error measure $\kappa_H$ may be summarized by
\begin{equation} \label{eq:standard_hamiltonian_criterion}
    \kappa_H(\theta,\rho,h,l)=|H(\theta,\rho) - H \circ \Phi_{h/l}^{l}(\theta,\rho)|.
\end{equation}
It is straightforward to verify that $\kappa_H$ is reversible, and the additional computational cost of $\kappa_H$ is negligible (since $\Phi_{h/l}^{l}(\theta, \rho)$ must be computed anyway as it corresponds to the final state after a macro step of size $h$). 

For a given combination of the macro step size $h$ and the criterion threshold $\delta$, the $c$-boundaries associated with (\ref{eq:standard_hamiltonian_criterion}) for a univariate standard Gaussian target are illustrated in Figure \ref{fig:concrete_example_hamiltonian_criterion_1d_normal}. The energy error of a single leapfrog step in this case is quadratic in $(\theta,\rho)$. Consequently, the boundary between the region of $c=0$ and $c=1$ is a pair of hyperbolas that both approach asymptotically the roots of the energy error (i.e., the roots of (\ref{eq:standard_hamiltonian_criterion}) in $(\theta,\rho)$). In the case of Figure \ref{fig:concrete_example_hamiltonian_criterion_1d_normal}, these are given as $\rho = (\pi/4) \theta$ and $\rho=[(\pi^2-16)/(4\pi)] \theta$. On the other hand, the error in the position and momentum vector increases linearly in $(\theta,\rho)$. Hence, the energy error criterion (\ref{eq:standard_hamiltonian_criterion}) may in fact allow for an arbitrarily poor representation of the underlying Hamiltonian flow obtained by the leapfrog integrator for an arbitrarily small $\delta$ when the initial $z = (\theta,\rho)$ is close to the roots of the energy error boundary defining the region of one single micro step.

One is not concerned with the accuracy of the representation of the Hamiltonian flow per se (as only the energy error enters the acceptance probability, e.g., (\ref{eq:metropolis_adapt_standard_hmc})). However, tolerating such large flow errors may allow for instabilities in the computation of proposals. Furthermore, and possibly more importantly, it is seen from Figure \ref{fig:concrete_example_hamiltonian_criterion_1d_normal} that the regions of reversibility failures tend (arguably to a varying degree) to be concentrated close to the $c$-boundaries. To utilize this insight in order to achieve fewer reversibility failures, it seems natural to look for error criteria $\kappa$ that induce less complicated $c$-boundaries, and possibly also such that the boundaries are more aligned with the flow (vector) field in the phase space.

\subsection{Embedded integration step flow criterion} \label{sec:flow_criterion}
In this subsection, to obtain an error measure with simpler $c$-boundaries, an alternative criterion based on measuring flow error is developed and presented. More specifically, the embedded integration step approach, commonly used for step size control for Runge-Kutta methods \citep[see e.g.,][]{hairer2008solving}, is employed. Embedded integration steps involve at least two methods for advancing the flow with different error orders, while using at least some of the same force (which corresponds to the gradient of the log-target density here) evaluations. The difference between these methods is interpreted as an approximation to the local error of the method with the lowest error order, and this estimate may subsequently be used for step size control.  

Here, to the authors' knowledge, a novel third-order Runge-Kutta-Nystr\"{o}m (RKN) step \citep{hairer2008solving} is embedded with a leapfrog step for measuring the second-order leapfrog flow error in a reversible manner. The RKN step exploits an already computed leapfrog step and adds a single additional gradient evaluation to obtain a higher order method. In particular, for a general step size $\epsilon$, denote by $(\theta, \rho)$ the initial state and $(\theta', \rho') = \Phi_{\epsilon}(\theta, \rho)$. To get a third-order integrator based on the already computed leapfrog state $(\theta', \rho')$ (and the associated log-target density gradients), define $\Bar{\Phi}_{\epsilon}: (\theta, \rho) \mapsto (\theta'', \rho'')$ in the following manner:
\begin{eqnarray} \label{eq:third_order_integrator_flow}
\theta_m &=& \frac{1}{2}(\theta+\theta') + \frac{\epsilon}{8} M^{-1}(\rho-\rho'), \label{eq:third_order_integrator_flow_nr1} \\
\theta'' &=& \theta + \epsilon M^{-1} \rho + \epsilon^2M^{-1} \left(\frac{1}{3} \nabla_{\theta} \log \Tilde{\pi}(\theta)
 + \frac{1}{6} \nabla_{\theta} \log \Tilde{\pi}(\theta')
 \right), \label{eq:third_order_integrator_flow_nr2} \\
\rho'' &=& \rho + \epsilon \left( \frac{1}{6} \nabla_{\theta} \log \Tilde{\pi}(\theta)
 + \frac{2}{3} \nabla_{\theta} \log \Tilde{\pi}(\theta_m)
 + \frac{1}{6} \nabla_{\theta} \log \Tilde{\pi}(\theta')
 \right). \label{eq:third_order_integrator_flow_nr3}
\end{eqnarray}
Note that the intermediate \emph{$\theta_m$ has been selected so that it coincides in the calculation of both $\bar \Phi_\epsilon(\theta,\rho)$ and $\bar \Phi_\epsilon \circ \mathcal F(\theta',\rho')$}. Therefore, obtaining a reversible/symmetric error measure, say $\max( \parallel (\theta',\rho')- \bar \Phi_\epsilon(\theta,\rho) \parallel,\parallel \mathcal F(\theta,\rho)-\bar \Phi_\epsilon \circ \mathcal F(\theta',\rho')\parallel )$, requires only a single additional gradient evaluation $\nabla_{\theta} \log \Tilde{\pi}(\theta_m)$ (and a handful of vector operations) relative to the underlying leapfrog integration. 

Combining \eqref{eq:third_order_integrator_flow_nr1}-\eqref{eq:third_order_integrator_flow_nr3} and the macro/micro-construction yields the RKN-flow-based (RKNF) error measure
\begin{align}
    \kappa_{RKNF}(\theta,\rho,h,l) &= \max_{k\in \{1,\dots,l\}} \left\{ \max(f_{k},b_{k})\right\}, \; \text{where} \label{eq:flow_criterion_eq_1} \\
    f_{k} &= \parallel (\theta^{(k)}, \rho^{(k)}) - \Bar{\Phi}_{h2^{-c}} (\theta^{(k-1)}, \rho^{(k-1)}) \parallel, \label{eq:flow_criterion_eq_2} \\
    b_{k} &= \parallel  \mathcal F(\theta^{(k-1)}, \rho^{(k-1)}) - \Bar{\Phi}_{h2^{-c}} \circ \mathcal F(\theta^{(k)}, \rho^{(k)})\parallel \label{eq:flow_criterion_eq_3}
\end{align} 
with $(\theta^{(k)},\rho^{(k)})=\Phi_{h/l}^k(\theta,\rho)$ (and $\Phi^0=I$). 
Here, $\parallel \cdot \parallel$ is an arbitrary (vector) norm, which in the numerical computations of this paper will be the $L^\infty$-norm, and $l = 2^c$, where $c \geq c_\text{min}$ for some $c_\text{min} \geq 0$. Note that the leapfrog flow is used for advancing the integrator in accordance with (\ref{eq:stepsize-involution}). The RKNF steps, which are generally not reversible, are only used for the error measurement so that the resulting RKN states are subsequently discarded. 

Due to the measurement of error both for the forward and backward micro steps, the following result holds:
\begin{prop} \label{prop_flow_criterion}
    $\kappa_{RKNF}$ defined by \eqref{eq:flow_criterion_eq_1}-\eqref{eq:flow_criterion_eq_3} is reversible  in the sense of \eqref{eq:error_measure_reversible_def}.
\end{prop}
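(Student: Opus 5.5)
The plan is to compare the two micro-trajectories directly and show that the forward and backward local error terms swap roles. Fix $h>0$, $l=2^c$ and $(\theta,\rho)$, and write $\epsilon=h/l$. Let $(\theta^{(k)},\rho^{(k)})=\Phi_\epsilon^k(\theta,\rho)$ for $k=0,\dots,l$, so that $(\theta',\rho')=(\theta^{(l)},\rho^{(l)})$. Now start from $(\tilde\theta^{(0)},\tilde\rho^{(0)})=(\theta',-\rho')$ and set $(\tilde\theta^{(k)},\tilde\rho^{(k)})=\Phi_\epsilon^k(\theta',-\rho')$. The first step is to identify this reversed trajectory explicitly. Iterating the time reversibility identity \eqref{eq:leapfrog_time_reversibility} in the form $\Phi_\epsilon\circ\mathcal F=\mathcal F\circ\Phi_\epsilon^{-1}$, and using that the step size $\epsilon$ is the same for all micro steps, gives
\[
(\tilde\theta^{(k)},\tilde\rho^{(k)})=\mathcal F(\theta^{(l-k)},\rho^{(l-k)}),\qquad k=0,\dots,l.
\]
This short induction on $k$ is routine. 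It means the backward micro path visits the same states in reverse order, with momenta flipped.

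The second step is to compute the error terms $\tilde f_k,\tilde b_k$ of \eqref{eq:flow_criterion_eq_2}--\eqref{eq:flow_criterion_eq_3} along the reversed trajectory and match them to the original ones. Substituting the identification above, and using $\mathcal F\circ\mathcal F=\mathrm{id}$, gives
\[
\tilde f_k=\bigl\|\mathcal F(\theta^{(l-k)},\rho^{(l-k)})-\bar\Phi_\epsilon\circ\mathcal F(\theta^{(l-k+1)},\rho^{(l-k+1)})\bigr\|=b_{l-k+1},
\]
and similarly
\[
\tilde b_k=\bigl\|(\theta^{(l-k+1)},\rho^{(l-k+1)})-\bar\Phi_\epsilon(\theta^{(l-k)},\rho^{(l-k)})\bigr\|=f_{l-k+1}.
\]
Since $k\mapsto l-k+1$ is a bijection of $\{1,\dots,l\}$, the multisets $\{\max(\tilde f_k,\tilde b_k)\}_k$ and $\{\max(f_k,b_k)\}_k$ coincide. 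Hence their maxima agree, which is exactly $\kappa_{RKNF}(\theta',-\rho',h,l)=\kappa_{RKNF}(\theta,\rho,h,l)$. Note that this argument uses neither the specific form of $\bar\Phi_\epsilon$ nor any property of the norm. The symmetric construction of the pair $(f_k,b_k)$ does all the work; the choice of $\theta_m$ only matters for computational cost.

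The main point needing care is the bookkeeping. The main obstacle is making sure that $b_k$ is defined with $\bar\Phi$ applied to the flipped \emph{later} state and compared with the flipped \emph{earlier} state, so that the index reversal maps $f$ exactly onto $b$ with no stray $\mathcal F$ left over. The remaining check concerns states where the error is $+\infty$. If some micro step is undefined along one trajectory, the identification of the two paths shows the corresponding step is undefined along the other, so both sides equal $+\infty$ or both are defined, consistent with the phrase ``for which both sides are defined''.
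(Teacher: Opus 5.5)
Your proposal is correct and matches the paper's proof step for step. You use time reversibility to identify the reversed leapfrog trajectory as $\mathcal F(\theta^{(l-k)},\rho^{(l-k)})$, derive $\tilde f_k=b_{l-k+1}$ and $\tilde b_k=f_{l-k+1}$, and conclude via the bijection $k\mapsto l-k+1$ of $\{1,\dots,l\}$. Your side remarks are accurate: the argument uses neither norm properties nor the specific form of $\bar\Phi_\epsilon$, and both sides are defined or undefined together.
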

\begin{proof}
    See Appendix \ref{appendix:proof_prop_flow_criterion}.
\end{proof}
The proof uses no property of $\parallel\cdot\parallel$ beyond it being a norm on $\mathbb R^{2d}$. In particular, the norm does not need to be invariant under $\mathcal F$.
Note that the error tolerance $\delta$ employed in (\ref{eq:gen_I_def}) is no longer interpretable in terms of energy errors when (\ref{eq:gen_I_def}) is used together with \eqref{eq:flow_criterion_eq_1}-\eqref{eq:flow_criterion_eq_3}, and must in general be tuned so that the energy error entering an acceptance probability (e.g., (\ref{eq:metropolis_adapt_standard_hmc})) is in an appropriate range. However, it should be noted that for proposals involving multiple macro steps, tuning of $\delta$ in conjunction with the energy error (\ref{eq:standard_hamiltonian_criterion}) is generally also needed \citep[see][]{bou2025within}. 

The RKN map
\eqref{eq:third_order_integrator_flow_nr1}-\eqref{eq:third_order_integrator_flow_nr3}
is third-order accurate, and a verification is given in
Appendix~\ref{appendix:rkn_order}. It requires only one additional
gradient evaluation relative to leapfrog. However, it is not optimized
for stability, so the RKN comparison map may
dominate the error criterion and force the selection of an unnecessarily 
fine discretization for large micro step sizes.

\subsection{Multi-resolution flow criterion} \label{sec:bs_criterion}

The flow criterion from the previous subsection considers the norm of the difference between the leapfrog and a third-order accurate state for a given micro step, but the cruder leapfrog state is used to advance the flow (and the more accurate and computationally costly RKN step is subsequently discarded). As known in the numerical ODE literature, local extrapolation, whereby the more precise integrator is also used for advancing the integration process, may in many cases be beneficial \citep[][Chapter II.5]{hairer2008solving}. In this subsection, another error measure based on comparing the leapfrog integration flow using two different step sizes  is presented. In contrast to the previous criterion, the following quantity allows for advancing the integration process using the most precise leapfrog flow approximation computed. Superficially, this approach has some similarities to the Bulirsch-Stoer (BS) method \citep{hairer2008solving} in the sense that the same time interval is traversed multiple times using the same integrator, but with different step sizes and number of integration steps. However, BS methods also involve extrapolation steps, which are infeasible in the present context as these may break the reversibility of the resulting integration process.

More specifically, the proposed error measure, which will now be referred to as multi-resolution flow (MRF) error, applies only for $c_{\min}\geq 1$ and may be summarized for $l=2^c,\;c=c_{\min},c_{\min}+1,\dots$, by 
\begin{equation}
\kappa_{MRF}(\theta,\rho,h,l)=\max\left( 
\parallel (\theta',\rho') - \Phi^{l/2}_{2h/l} (\theta,\rho)  \parallel,
\parallel \mathcal F(\theta,\rho) - \Phi^{l/2}_{2h/l} \circ \mathcal F(\theta',\rho')\parallel \right), \text{ where } (\theta',\rho')=\Phi^{l}_{h/l} (\theta,\rho).
\label{eq:bs_criterion}
\end{equation}
In other words, the error estimate associated with traversing the macro step using $l=2^c,\;c=c_{\min},c_{\min}+1,\dots$ micro steps is obtained by comparing with the flow obtained using $2^{c-1}$ micro steps (i.e., with a micro step size that is twice as large) in either direction. Figure \ref{fig:bs_illustration} provides an illustration of the error measurement (\ref{eq:bs_criterion}) for $l=4$. The ``backward checks'' (second argument of the $\max$-function in (\ref{eq:bs_criterion})) ensure the following: 

\begin{prop} \label{prop_bs_criterion}
    For $l=2^c,\; c=c_{\min},c_{\min}+1,\dots$,  $\kappa_{MRF}$ defined in \eqref{eq:bs_criterion} is reversible in the sense of \eqref{eq:error_measure_reversible_def}.
\end{prop}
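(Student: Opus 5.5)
The plan is to verify \eqref{eq:error_measure_reversible_def} directly. The idea is to show that the two arguments of the $\max$ in \eqref{eq:bs_criterion} swap when $(\theta,\rho)$ is replaced by its reflected image $(\theta',-\rho')$. Fix $h>0$, $l=2^c$ with $c\geq c_{\min}\geq 1$ (so $l/2$ is a positive integer), and $(\theta,\rho)$. Set $z=(\theta,\rho)$, $z'=(\theta',\rho')=\Phi^{l}_{h/l}(z)$, and $\tilde z=(\theta',-\rho')=\mathcal F(z')$. The task is to compute $\kappa_{MRF}(\tilde z,h,l)$, which requires first identifying the fine-resolution endpoint $\tilde z'=\Phi^l_{h/l}(\tilde z)$.

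First, I will iterate the time reversibility property \eqref{eq:leapfrog_time_reversibility} of a single leapfrog step with fixed step size. This gives $\mathcal F\circ\Phi^{k}_{\epsilon}\circ\mathcal F=\Phi^{-k}_{\epsilon}$ for any integer $k\geq 0$ and $\epsilon>0$, because $\mathcal F$ is an involution and the conjugated copies telescope. Applied with $\epsilon=h/l$ and $k=l$, this yields
\[
\tilde z'=\Phi^l_{h/l}\circ\mathcal F(z')=\mathcal F\circ\Phi^{-l}_{h/l}(z')=\mathcal F(z).
\]
So the fine-resolution flow from the reflected endpoint returns to the reflected initial point, which is the usual involution property of fixed step leapfrog.

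Next, I substitute into \eqref{eq:bs_criterion} with $(\tilde z,\tilde z')$ in place of $(z,z')$:
\[
\kappa_{MRF}(\tilde z,h,l)=\max\Big(\big\|\mathcal F(z)-\Phi^{l/2}_{2h/l}\circ\mathcal F(z')\big\|,\;\big\|\mathcal F(\mathcal F(z'))-\Phi^{l/2}_{2h/l}\circ\mathcal F(\mathcal F(z))\big\|\Big).
\]
Since $\mathcal F\circ\mathcal F=\mathrm{id}$, the second argument equals $\|z'-\Phi^{l/2}_{2h/l}(z)\|$, which is the first argument of $\kappa_{MRF}(z,h,l)$. The first argument is exactly the second argument of $\kappa_{MRF}(z,h,l)$. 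Because $\max$ is symmetric, the two values coincide, which is \eqref{eq:error_measure_reversible_def}.

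No property of the norm is used beyond its being well defined, and in particular no $\mathcal F$-invariance is needed, mirroring the remark after Proposition~\ref{prop_flow_criterion}. The only step needing care is the first one. It requires that the fine flow is composed of identical leapfrog steps, so that reversibility extends to the $l$-fold composition, and that $\Phi^{-l}$ is interpreted as the inverse via $\Phi_\epsilon^{-1}=\Phi_{-\epsilon}$. Where trajectories are undefined, both sides are $+\infty$ or are excluded by the ``both sides defined'' clause, since the same maps appear on each side. I do not expect a genuine obstacle here.
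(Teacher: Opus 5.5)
Your proof is correct and follows essentially the same route as the paper: leapfrog time reversibility shows that the fine-resolution flow from $\mathcal F(z')$ returns to $\mathcal F(z)$, and the two arguments of the $\max$ (the forward coarse comparison and the backward coarse comparison) then swap roles. The paper only adds labels for the intermediate states (fine endpoint, coarse endpoint, and coarse endpoint from the flipped fine endpoint) and checks that these map onto each other; the computation is identical.
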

\begin{proof}
    See Appendix \ref{appendix:proof_prop_bs_criterion}. 
\end{proof}

In the numerical computations presented below, $L^\infty$-norms are again used in (\ref{eq:bs_criterion}). Note that (\ref{eq:bs_criterion}) may be interpreted as approximately measuring the error of traversing the macro step using $l/2$ leapfrog micro steps. As such, one would expect the error tolerance $\delta$ to be larger for MRF than for RKNF to obtain the same accuracy for the selected $(\theta',\rho')$. On the other hand, since $(\theta',\rho')$ is based on at least two leapfrog steps, MRF-based macro step sizes $h$ could also possibly be larger than the RKNF counterparts. 

Nevertheless, since the error estimation for MRF happens relative to a cruder discretization, one would still require macro step sizes that lead to numerical stability when performing $2^{c-1}$ leapfrog steps in order to select $(\theta',\rho')$ based on $2^{c}$ leapfrog steps.
Similarly to RKNF, the requirement for stability of the forward and backward numerical flows with $2^{c_{\min}-1}$ leapfrog steps in order to select $(\theta',\rho')$ based on $2^{c_{\min}}$ leapfrog steps may in certain cases put an upper bound on the integration error attainable and/or make the MRF-based controller insensitive to $\delta$ for large values of $\delta$. Consequently, it is conceivable in certain situations that MRF-based control can lead to unnecessarily precise/inefficient integration relative to the other controllers.

Finally, it should be noted that by careful implementation of $\kappa_{MRF}(\theta,\rho,h,2^c)$ for consecutive values of $c$,  computational savings can be realized relative to each sequential computation of (\ref{eq:bs_criterion}) from scratch as indicated in Algorithm \ref{alg:macro-step} (see also Table \ref{tab:criteria_computational_cost}). These savings originate primarily from the fact that the high-resolution forward flow in $\kappa_{MRF}(\theta,\rho,h,2^{c})$ is identical to the low-resolution forward flow in $\kappa_{MRF}(\theta,\rho,h,2^{c+1})$. See Appendix \ref{appendix:criteria_computational_cost} for more details regarding this matter.

\subsection{Computational cost}

\begin{table}
    \centering
    \begin{tabular}{lccc} \hline
         Criterion & \multicolumn{3}{c}{Number of gradient evaluations} \\
         & forward pass & backward pass & total\\
         \hline
         Hamiltonian ($c \geq 0$) & $2^{c+1}-1$ & $2^c-1$ & $3(2^c) -2$ \\
         RKNF ($c \geq 0$) & $2(2^{c + 1} - 1)$ & $2(2^{c} - 1)$ & $6 (2^c) - 4$ \\
        MRF ($c \geq 1$) & $\frac{3}{2}(2 ^ {c+1}) - 2$ & $2^c - 2$ & $4 (2^c) -4$ \\
        \hline
    \end{tabular}%
    \caption{Computational cost of a full macro step where the final new configuration $(\theta',\rho')$ is computed using $2^c$ micro steps \emph{assuming that $c_\text{min} = 0$ for the Hamiltonian energy and RKNF error criterion and $c_\text{min} = 1$ for MRF error criterion}. Note that for RKNF and $c>0$, the gradient count is an upper bound as the computation of the error measure (\ref{eq:flow_criterion_eq_1}) may be stopped early if one of $\left\{f_{k},b_{k}\right\}>\delta$.}
    \label{tab:criteria_computational_cost}
\end{table}


Under the assumption that a single given macro step is reversible, Table \ref{tab:criteria_computational_cost} provides a summary of the computational cost in terms of gradient evaluations to obtain the final state of the macro step. The new configuration $(\theta',\rho')$ is computed using $2^c$ micro steps. Although MRF and RKNF are more costly relative to the Hamiltonian error measure per se, these costs come with benefits that are explored further below. 

In addition, observe that the computation of $\kappa_{RKNF}$ for $c>0$ may be stopped early if a micro step error estimate 
$f_k$
is found to exceed $\delta$. Accordingly, the results of RKNF in Table \ref{tab:criteria_computational_cost} should be regarded as upper bounds. It is also worth noting that a variant of the energy error measure \citep[see Equation 9 of][]{bou2025within} could also allow for such early stopping. However, in line with Remark 13 of \cite{bou2025within}, only (\ref{eq:standard_hamiltonian_criterion}) is considered in this work.


\begin{figure}
    \centering
    \begin{tikzpicture}[node distance = 2cm, midarrow/.style={postaction={decorate},
                       decoration={markings, mark=at position 0.5 with {\arrow{>}}}}]
          \node[circle, fill=blue, inner sep=2pt, label=above:${(\theta,\rho)}$] (A) at (0,0) {};
          \node[circle, fill=black, inner sep=2pt] (B) at (5,-1) {};
          \node[circle, fill=red, inner sep=2pt, label=above:${\Phi^{l/2}_{2h/l}(\theta,\rho)}$] (C) at (10,-0.5) {};
          \node[circle, fill=black, inner sep=2pt] (D) at (2.5,-2) {};
    \node[circle, fill=black, inner sep=2pt] (E) at (5,-2.25) {};
    \node[circle, fill=black, inner sep=2pt] (F) at (7.5,-2) {};
    \node[circle, fill=orange, inner sep=2pt, label=below:${(\theta',\rho')}$] (G) at (10,-1) {};
    \node[circle, fill=black, inner sep=2pt] (H) at (5,-1.5) {};
    \node[circle, fill=green, inner sep=2pt, label=below:${\mathcal F \circ \Phi^{l/2}_{2h/l}\circ \mathcal F(\theta',\rho')} $] (I) at (0,-1) {};
    
    \draw[midarrow, thick] (A) -- (B);
    \draw[midarrow, thick] (B) -- (C);
    \draw[midarrow, thick] (A) -- (D);
    \draw[midarrow, thick] (D) -- (E);
    \draw[midarrow, thick] (E) -- (F);
    \draw[midarrow, thick] (F) -- (G);
    \draw[midarrow, thick] (G) -- (H);
    \draw[midarrow, thick] (H) -- (I);
    
    \end{tikzpicture}
    \caption{An illustration of the multi-resolution flow (MRF) criterion where the (high-resolution) final state $(\theta',\rho')$ (orange) is the result of $l= 4$ forward micro steps from the initial state $(\theta,\rho)$ (blue). The red state corresponds to the state obtained using $2$ forward micro steps from the initial state. Finally, the green state corresponds to the resulting state after $2$ backward micro steps from $(\theta',\rho')$. The criterion checks if both the norm of the difference between the two "final" states (red and orange) and the norm of the difference between the two "initial" states (blue and green) are less than a desired value $\delta$. Note that the same computations and comparisons would be performed regardless of whether one calculates $\kappa_{MRF}(\theta,\rho,h,4)$ or $\kappa_{MRF}(\theta',-\rho',h,4)$. }
    \label{fig:bs_illustration}
\end{figure}


\subsection{Preliminary illustrative examples} \label{sec:prelim_illustrations}
\begin{figure}
    \centering
    \includegraphics[scale=0.45]{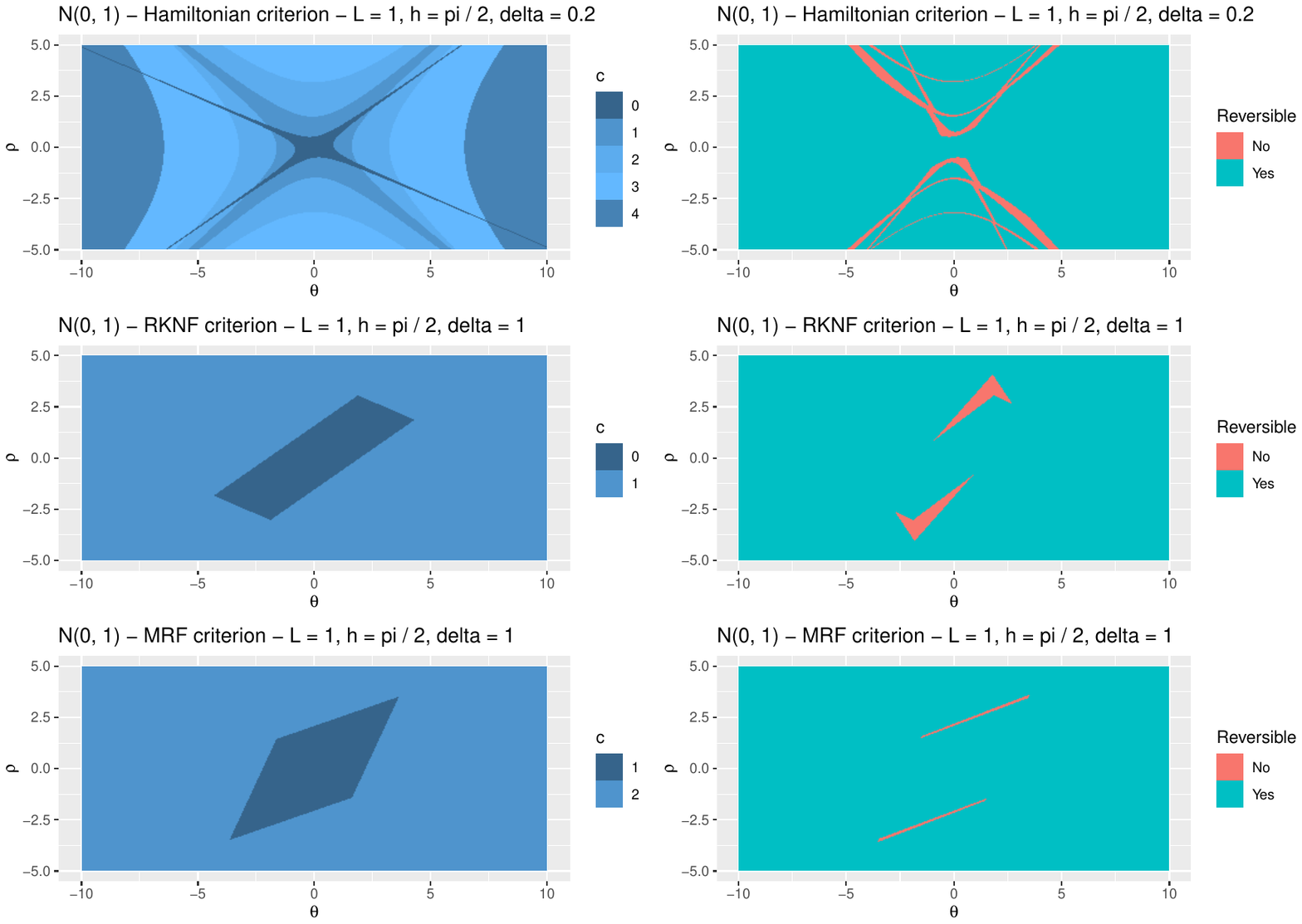}
    \caption{Results for a univariate standard Gaussian and a single integration step. Left plots: $c=\log_2(\mathcal L(\theta,\rho,h, \delta))$ for tuning parameters given in the plots according to a given error measure. Right plots: The red regions are the initial states in the phase space leading to reversibility failures.}
    \label{fig:comparison_1d_normal}
\end{figure} 
The left-hand side panels of Figure \ref{fig:comparison_1d_normal} show the level sets of $c^*$ (see Equation (\ref{eq:gen_I_def})) for a single macro step ($L=1$) with step size control based on the three error measures presented above. The target distribution is a standard Gaussian, where the exact flows are elliptical around the origin. The macro step size is the same in all cases. The choices of $\delta$ approximately lead to around 80\%  acceptance probability (\ref{eq:metropolis_adapt_standard_hmc}) for the Hamiltonian and RKNF criterion. 
For the MRF criterion, due to the $c_{\min} = 1$ floor, the energy acceptance probability from \eqref{eq:metropolis_adapt_standard_hmc} cannot be made smaller than approximately 95\%. 
Also, the flow-based criteria are observed to have less complicated $c$-boundaries, where the boundaries to a larger degree align with the flow of the underlying system.

The right-hand side panels of Figure \ref{fig:comparison_1d_normal} show the initial phase space configurations leading to reversibility failures. Weighted by the Boltzmann-Gibbs distribution $\propto \exp(-H)$, the reversibility failure probability is around  17\% for the Hamiltonian criterion. For RKNF and MRF, the same quantity corresponds to roughly 3\%
and 0.9\%, respectively. 

As such, it is clear that Figure \ref{fig:comparison_1d_normal} may serve as a proof of concept that quite drastic decreases in reversibility failure rates (and thereby increasing the expected number of subsequent reversible macro steps by a sizable factor, say 5-10) are possible with a more careful selection of error measure.

\begin{table}
    \centering
    \resizebox{1\columnwidth}{!}{%
    \begin{tabular}{ccccc}
    \hline
         Criterion & $\delta$ & Reversibility rate &  Metropolis rate & Number of gradient evaluations (in millions) \\
         \toprule
         Hamiltonian & 0.11 & 83.0\% & 80.6\% & 24.7 \\
         RKNF & 0.205 & 97.5\% & 79.8\% & 15.9 \\
        MRF & 1.22 & 97.8\% & 80.3\% & 11.5 \\
        \hline
    \end{tabular}%
    }
    \caption{Reversibility results of the 100-dimensional standard normal distribution using standard HMC with $h = \pi/2$ and $L=1$. }
    \label{tab:100d_normal_reversibility_table}
\end{table}

\begin{table}
\centering
\resizebox{1\columnwidth}{!}{%
\begin{tabular}{cccccccccc}
\toprule
 & \multicolumn{3}{c}{Mean (SD) [Coordinate]} & \multicolumn{3}{c}{{ESS / 100000}} & \multicolumn{3}{c}{$\text{ESS} * 10000 $/ $N_{\text{ode}}$} \\
 \cmidrule(lr){1-1} \cmidrule(lr){2-4} \cmidrule(lr){5-7} \cmidrule(lr){8-10}
 Criterion  & Hamiltonian & RKNF & MRF  & Hamiltonian & RKNF & MRF & Hamiltonian & RKNF & MRF\\
 \cmidrule(lr){1-1} \cmidrule(lr){2-4} \cmidrule(lr){5-7} \cmidrule(lr){8-10}
Parameter &  &  &  &  &  & & & & \\
\midrule
min ESS & 0.0002 (1.0006) [57] & -0.0013 (0.9993) [98] & 0.0008 (0.9999) [39] & 
6.73 & 6.50 & 6.58 & 
273 & 408 & 570 \\
max ESS & -0.0015 (0.9986) [59] & 0.0020 (0.9998) [94] & -0.0015 (0.9997) [84] & 
6.92 & 6.69 & 6.79 & 
281 & 420& 588\\
\bottomrule
\end{tabular}%
}
\caption{ESS results of the 100-dimensional standard normal distribution using standard HMC with $h = \pi/2$ and $L=1$.}
\label{tab:100d_normal_ess_table}
\end{table}

To further explore the effect of varying the error measure, a simple experiment based on a 100-dimensional standard normal target is considered. The macro step size is set to $h = \pi/2$ (which would lead to iid samples if the exact flow is used) with $L = 1$ and full momentum refreshes are performed. For all three criteria, $\delta$ is chosen so that the overall Metropolis acceptance rate is close to 80\%. Ten independent chains are run, each with 100000 iterations in both the warmup and sampling periods. Various diagnostic results for this simple example are summarized in Tables \ref{tab:100d_normal_reversibility_table} and \ref{tab:100d_normal_ess_table}.

Regarding reversibility, one can observe from Table \ref{tab:100d_normal_reversibility_table} that the ratio of macro steps with no reversibility issues is roughly 83\% for the Hamiltonian criterion, which is not far from the target Metropolis acceptance rate of $80\%$. In order to maintain the desired target acceptance rate, smaller micro step size is needed to compensate for this fact, thus leading to more gradient evaluations. For the RKNF and MRF criteria, the reversibility rate is 97.5\% and 97.8\%, respectively. Larger micro step sizes are therefore possible while retaining the desired Metropolis acceptance rate and high reversibility rate. Thus, fewer gradient evaluations are performed in such scenarios, leading to a higher ESS produced per gradient evaluation compared to the standard Hamiltonian criterion based on Table \ref{tab:100d_normal_ess_table}. 

The examples so far illustrate how large an impact the choice of micro step size criterion can potentially have, even for $L = 1$, in very simple situations.  The reversibility problem will become even more apparent for HMC when $L > 1$ as several consecutive macro steps need to be reversible in order not to outright reject the proposal.

\subsubsection{Effect of tuning parameters
  \texorpdfstring{$h$ and $\delta$}{h and delta}}
  \label{sec:tuning_parameter_effect}
  
  \begin{figure}
    \centering
    \includegraphics[width=1\columnwidth]{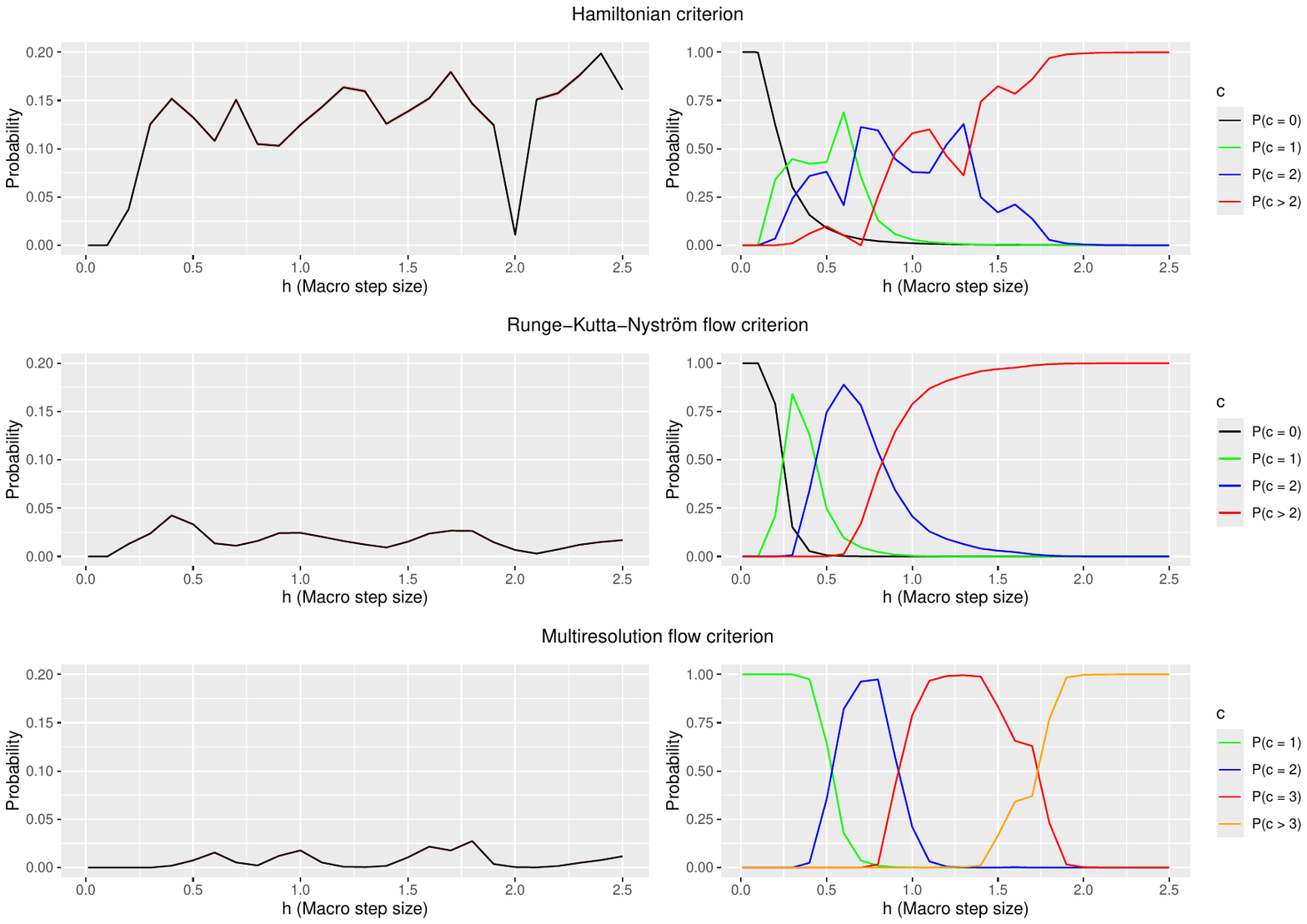}
    \caption{Illustrations of 100000 simulated macro steps with different initial states using the three criteria for the highly correlated bivariate normal distribution. Left plots: Probability of a macro step experiencing the reversibility issue with respect to the macro step size $h$. Right plots: Probability of taking a given number of micro steps to achieve the desired error for a given criterion with respect to the macro step size $h$. Here, $\delta_\text{H} = 0.1$, $\delta_\text{RKNF} = 0.05$ and $\delta_\text{MRF} = 1$. A 95\% confidence interval calculated using the Clopper-Pearson interval is plotted for the plots on the left as red bands, but the width is very small. }
    \label{fig:varying_h_example}
\end{figure} 

\begin{figure}
    \centering
    \includegraphics[width=1\columnwidth]{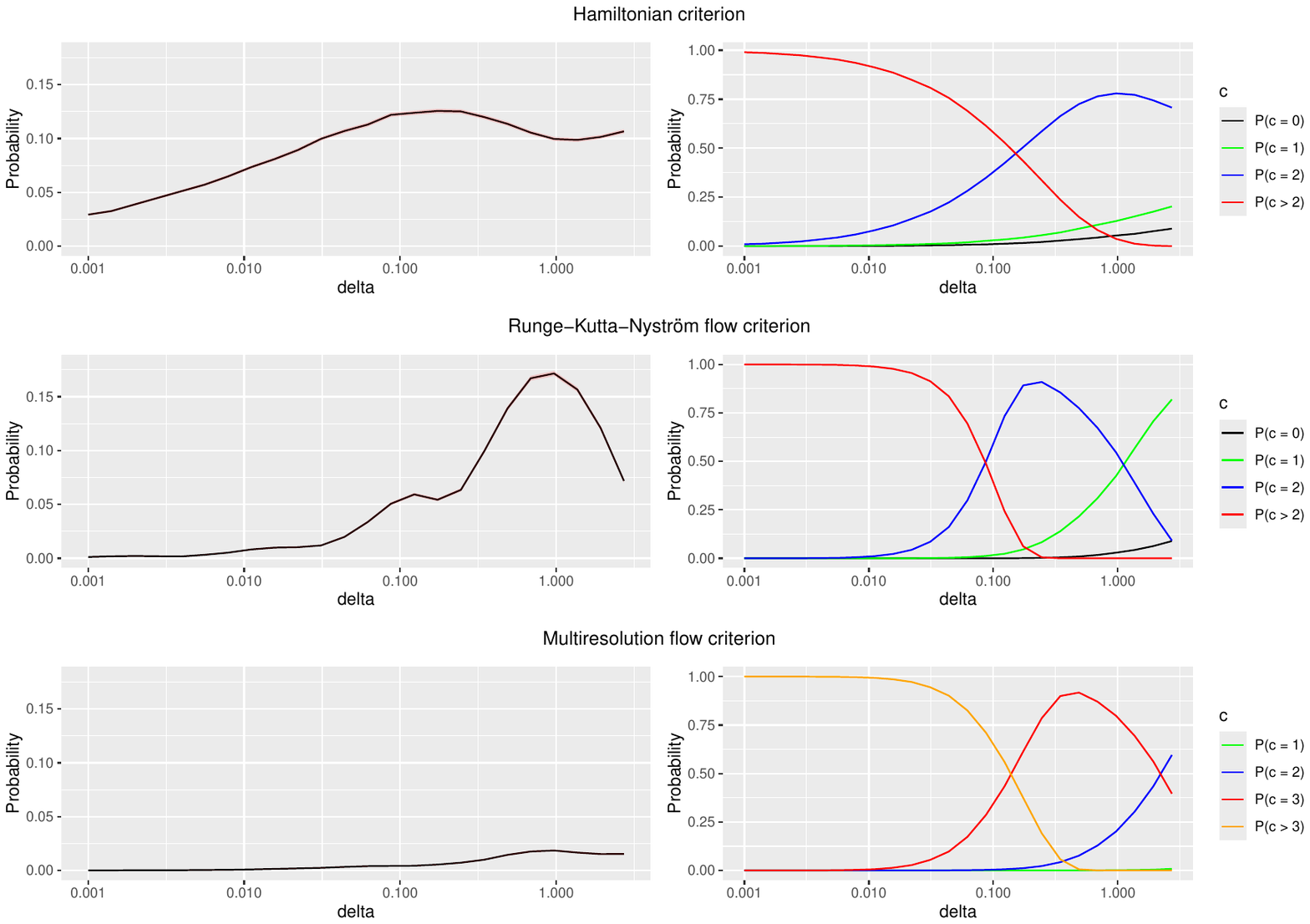}
    \caption{Illustrations of 100000 simulated macro steps with different initial states using the three criteria for the highly correlated bivariate normal distribution. The horizontal axis is plotted in log-scale. Left plots: Probability of a macro step experiencing the reversibility issue with respect to the threshold $\delta$. Right plots: Probability of taking a given number of micro steps to achieve the desired criteria with respect to the threshold $\delta$. Here, $h = 1$. A 95\% confidence interval calculated using the Clopper-Pearson interval is plotted for the plots on the left as red bands, but the width is very small.}
    \label{fig:varying_delta_example}
\end{figure} 

The micro step number selection mechanism given by (\ref{eq:gen_I_def}) involves two tuning parameters; the macro step size $h$ and the error threshold $\delta$. To gain further insight on how these influence the behaviour of the macro step in the context of the three different error measures described above, a set of preliminary and smaller simulation experiments is considered.

First, a bivariate normal distribution with zero mean, unit marginal variances and correlation set to $0.95$ is considered as the target distribution. 
For each criterion, Figure \ref{fig:varying_h_example} shows how the (simulated) reversibility failure probability (left) and the (simulated) distribution of the number of micro steps (right) given a fixed $\delta$ ($\delta=0.1$, $c_{\min}=0$ for $\kappa_H$, $\delta=0.05$, $c_{\min}=0$ for $\kappa_{RKNF}$ and $\delta=1.0$, $c_{\min}=1$ for $\kappa_{MRF}$) vary with respect to $h$. At each value of $h$, 100000 macro steps are performed with initial states drawn according to the Boltzmann-Gibbs distribution $\mu(\theta,\rho)$. 

From Figure \ref{fig:varying_h_example}, one may first observe that $\mathcal L=2^{c_{\min}}$, and therefore a macro step with guaranteed reversibility in line with Proposition \ref{prop:approx_reverse}, can be obtained for each of the error measures by selecting a sufficiently small macro step size. For the selected values of $\delta$, it is seen that $\kappa_H$ leads to a substantially higher prevalence of reversibility failures for larger macro step sizes than the flow-based error measures. Furthermore, Figure \ref{fig:varying_h_example} shows, possibly somewhat counterintuitively, that reversibility failure probabilities are not monotone with respect to the macro step size $h$. Also, presumably due to the rather complicated $c$-boundaries associated with the energy error measure, the distributions of selected micro step numbers (right panel) exhibit quite complicated structure, whereas the flow-based measures transition more smoothly towards more micro steps as $h$ grows.

In Figure \ref{fig:varying_delta_example}, reversibility failure probabilities and micro step number distributions are plotted for the same target distribution, but with fixed $h=1$ and varying $\delta$. In this particular example, except for very low error tolerances (which would lead to highly inefficient proposal generations), it is observed that the energy error criterion leads to quite high reversibility failure probabilities. For RKNF and relevant values of $\delta$, say e.g., below $0.1 \sqrt{\text{Var}(\theta_i)}$, the reversibility failure rate is lower than for the Hamiltonian energy criterion. On the other hand, the MRF reversibility failure probabilities seem to be less sensitive to the choice of $\delta$, at least for the range of values of $\delta$ considered here. Again, note that the reversibility failure probability is not monotone with respect to $\delta$. 

The above discussion provides some insights regarding reasonable tuning strategies for $\delta$ and $h$. As discussed in \cite{bou2025within}, a possible strategy, which will also be pursued to some degree here, is to tune:
\begin{itemize}
    \item $\delta$ towards a desired precision in the flow approximation to control the energy error appearing in the acceptance probabilities (e.g., Equation (\ref{eq:metropolis_adapt_standard_hmc})) or multinomial weights as in the case of WALNUTS.
    \item $h$ so that a sizable proportion of the macro steps that are reachable result in $c=c_{\min}$ (and thereby guaranteed reversibility). In the following, this is simply denoted as $P(c = c_{\min})$. 
\end{itemize}
However, it should be noted that other alternatives also exist. For instance, for the Hamiltonian criterion, the reference implementation of WALNUTS \citep{WALNUTS_cpp} uses fixed $\delta$ and varies $h$ and $c_{\min}$ to obtain efficient sampling. A full discussion of tuning adaptive step size methods is beyond the scope of this paper since the strategy for tuning necessarily also depends strongly on the overarching MCMC method and the error measure employed, as highlighted in Section \ref{sec:numerical_ill}. The choices regarding the chosen tuning parameters leading to the numerical results presented in the remainder of this text are made in the interest of illustrating differences between the criteria and their application in different MCMC methods in various situations. As such, the results should not be interpreted as providing a more formal ranking of the relative merits of the proposed methods, as this would require more tailored tuning strategies for each particular situation.

\subsection{Further remarks on locally adaptive step sizes}
\label{sec:further_remarks}
Before proceeding to examine how locally adaptive step sizes as discussed above may be integrated in various other Hamiltonian flow-based MCMC methods, a few remarks are in order. 

The macro/micro step construction described, as mentioned above, is not restricted to the (fixed mass) Hamiltonian flow. Similar constructions in conjunction with explicit reversible integrators can be made for e.g., Riemannian manifold HMC based on explicit integrators \citep[see e.g.,][]{pmlr-v151-hartmann22a}, various forms of non-volume-preserving dynamics such as the isokinetic/microcanonical flow \citep[see e.g.,][]{JMLR:v24:22-1450,2503.01707} and extended state space methods such as the Nos\'{e}-Hoover thermostat \citep[see e.g.,][]{Leimkuhler_2009}. In particular, if some form of conserved quantity exists, error measures similar to $\kappa_H$ may be utilized. In any case, error measures similar to $\kappa_{MRF}$ will be available with otherwise minimal changes to the above theory.

For reversible implicit integrators, e.g., the generalized leapfrog \citep[see e.g.,][]{Leimkuhler:2004,girolami_calderhead_11} or implicit midpoint, the macro/micro construction may be employed to monitor the convergence of iterative processes such as fixed point iterations used to advance the integrator instead of focusing directly on the flow or energy error quality. However, it should be noted that whether such an iterative process converges in a prescribed number of iterations is generally not reversible in the sense of (\ref{eq:error_measure_reversible_def}).


As briefly mentioned above, the reference WALNUTS C++ implementation \citep{WALNUTS_cpp} increases $c_{\min}$ (and increases $h$) in order to control the number of macro steps/reversibility checks per orbit. This allows for building longer orbits uninterrupted by reversibility failures. However, this practice could introduce computational inefficiencies as the locally adaptive nature of the integrator is to some degree lost, i.e., longer time intervals of Hamiltonian trajectory would need to be integrated with the same step size, even if only a short time interval within the macro step needs a reduction in the (micro) step sizes. A motivation for the present work is to reduce the need for such reduced adaptivity by increasing the reversibility probability.

\section{Locally adaptive step size for other HMC-type methods}\label{sec:other_HMC_methods}

The previous section utilizes the standard HMC method to illustrate how the locally adaptive step size technique may be embedded into an overarching MCMC method. In this section, such embedding is also discussed for other Hamiltonian flow-based MCMC methods with modest modifications. 

It is worth mentioning that enhanced variants of standard HMC such as Randomized HMC \citep{bou2017randomized}, where $L$ is drawn from some distribution $f_L(L)$ (not depending on $(\theta,\rho)$) in each MCMC iteration, can be used within Algorithm \ref{alg:alg_hmc} without altering the acceptance probability. To see this, observe that if $L$ is added to $v \in \mathbb V$ with the associated identity parameter mapping $g$, the factor arising from $f_L$ will cancel in the definition of the acceptance probability (\ref{eq:GIST-acc-prob}). Also, the same applies when step size jittering \citep[see e.g.,][Section 3.3]{neal2011mcmc} is employed so that $h$ is drawn from some distribution (not depending on $(\theta,\rho)$). In that case, for each MCMC iteration, one can rely on the same (random) $h$ for each of the $L$ macro steps or introduce individual (random) values of $h$ in each of the $L$ macro steps. Then, a similar argument to the situation of randomized $L$ applies with the modification that for the case of individual values of $h$, the parameter mapping $g$ must reverse the order of the array corresponding to the values of $h$ following the order of the macro steps (see e.g., Section \ref{sec:GIST-basics}).

\subsection{Langevin/diffusion-based methods}

\subsubsection{MALA}

As explained in \citet[][Section 5.2]{neal2011mcmc}, the Metropolis adjusted Langevin Algorithm (MALA) is equivalent to the standard HMC with $L=1$ and a full momentum refresh \citep{RSSB:RSSB123}. However, unless $c = 0$ for the given macro step (when $\kappa$ is the Hamiltonian or the RKNF criterion), Algorithm \ref{alg:alg_hmc} with $L=1$ is interpreted as simply HMC with $2^c$ leapfrog steps with step size $h / 2^c$. Therefore, note that Algorithm \ref{alg:alg_hmc} with $L=1$ is different from earlier time-step adaptive MALA variants such as \citet{SJOS:SJOS12204} or \citet{pmlr-v238-biron-lattes24a} because the macro/micro construction is not used there, see Remark 16 of \citet{bou2025within}.

\subsubsection{Kinetic Langevin}

Ignoring an adaptive step size scheme, it is well known that MALA may suffer from random walk-like behaviour, in particular for high-dimensional settings. Kinetic/Generalized/second-order Langevin dynamics \citep[see e.g.,][]{Leimkuhler2015molecular} introduces a persistent momentum variable to counteract such behaviour. More specifically, consider the stochastic differential equation
\begin{eqnarray}
    d\theta_t &=& M^{-1}\rho_t dt, \label{eq:kinLangSDE1}\\ 
    d \rho_t &=& \left[ \nabla_\theta \log\tilde{\pi}(\theta_t) - \gamma \rho_t \right] dt+ \sqrt{2\gamma} M^{1/2}dB_t, \label{eq:kinLangSDE2}
\end{eqnarray}
where $B_t$ is a $d$-dimensional standard Brownian motion, $\gamma>0$ is a scalar friction coefficient and $M^{1/2}$ is some matrix square root of $M$. The process defined by (\ref{eq:kinLangSDE1})-(\ref{eq:kinLangSDE2}) admits $\propto \exp(-H)$ as the unique stationary distribution under weak regularity conditions on $\tilde{\pi}(\theta)$. 
Consequently,  time-discretized versions of (\ref{eq:kinLangSDE1})-(\ref{eq:kinLangSDE2}) may serve as proposal mechanisms for MCMC methods with \citep[see e.g., ][]{horowitz1991generalized,pmlr-v151-hoffman22a} or without (see e.g., \citet{Leimkuhler2015molecular} and references therein) subsequent Metropolis corrections.

The approach of using an OBAB discretization with time step $h$
\citep{Leimkuhler2015molecular}, together with a Metropolis correction
of the leapfrog step
\eqref{eq:leapfrog_update_theta}--\eqref{eq:leapfrog_update_rho},
is commonly referred to as generalized HMC (GHMC). Unlike standard
HMC, GHMC retains the momentum between iterations and therefore
defines a Markov chain on phase space.

More precisely, let $(\theta^{(n)},\rho^{(n)})$ denote the current
phase-space state. First, partially refresh the momentum according to
\begin{equation}
    \widetilde{\rho}
    \sim
    N\left(
        \exp(-\gamma h)\rho^{(n)},
        \left(1-\exp(-2\gamma h)\right)M
    \right).
    \label{eq:kinLang_refresh}
\end{equation}
This update preserves the $N(0,M)$ momentum distribution. Set
$(\theta_0,\rho_0)=(\theta^{(n)},\widetilde{\rho})$, and perform one
locally adaptive macro step by computing
\[
    l^f=\mathcal L(\theta_0,\rho_0,h,\delta),
    \qquad
    (\theta_1,\rho_1)
    =
    \Phi_{h/l^f}^{l^f}(\theta_0,\rho_0),
\]
followed by
\[
    l^b=\mathcal L(\theta_1,-\rho_1,h,\delta).
\]
The proposal is accepted with probability
\[
    \alpha
    =
    \mathbf 1_{\{l^f=l^b\}}
    \min\left\{
        1,
        \exp\left(
            H(\theta_0,\rho_0)-H(\theta_1,\rho_1)
        \right)
    \right\}.
\]
If the proposal is accepted, set
\[
    (\theta^{(n+1)},\rho^{(n+1)})
    =
    (\theta_1,\rho_1);
\]
otherwise, set
\[
    (\theta^{(n+1)},\rho^{(n+1)})
    =
    (\theta_0,-\rho_0).
\]
Thus, the momentum is retained after each iteration, with a momentum
flip upon rejection. This construction leaves the phase-space target $\mu\propto\exp(-H)$ invariant: one iteration is the composition of the momentum refresh \eqref{eq:kinLang_refresh}, which is an Ornstein--Uhlenbeck step preserving $N(0,M)$, and hence $\mu$, with the locally adaptive GIST Metropolis step with $L=1$, which is $\mu$-reversible by Section~\ref{sec:local_adapt_standard_hmc}, followed by the momentum flip $\mathcal F$ on rejection; the last two together define a $\mu$-invariant, though not $\mu$-reversible, kernel exactly as in \citet{horowitz1991generalized}. In contrast to
Algorithm~\ref{alg:alg_hmc}, the position sequence alone is generally
not Markov because the momentum is only partially refreshed.

Finally, it should also be mentioned that the application of the macro/micro construction to diffusion processes where the time-discretization does not involve uninterrupted leapfrog steps (i.e., BAB) such as BAOAB \citep[see e.g.,][]{Leimkuhler2015molecular} is presently being investigated.

\subsection{Multinomial HMC methods, biased progressive HMC and WALNUTS}
\subsubsection{Background}

Multinomial HMC methods, in a broad sense encompassing biased progressive HMC, NUTS and WALNUTS, are characterized by the generation of proposals in two stages. Firstly, an \emph{orbit}, consisting of the current state and one or more new candidate states, is constructed. Denote by $\{(\theta_{[k]},\rho_{[k]})\}_{k=a}^b$, $a=0,-1,-2,\dots$ and $b=0,1,2,\dots$, the set of states constituting the orbit, with the convention that $(\theta_{[0]},\rho_{[0]})$ is the current state of the MCMC chain. Fixed step size implementations of multinomial HMC methods typically generate new candidates (separated by $h$ units of time) by sequentially applying $R\geq 1$ leapfrog steps both forward and backward in time, say $(\theta_{[k]},\rho_{[k]}) = \Phi^R_{h/R}(\theta_{[k-1]},\rho_{[k-1]}),\;k=1,\dots,b$ if $b\geq 1$ and $(\theta_{[k]},\rho_{[k]}) = \mathcal F \circ \Phi^R_{h/R} \circ \mathcal F (\theta_{[k+1]},\rho_{[k+1]}),\;k=-1,\dots,a$ if $a\leq -1$.


Secondly, a new state of the MCMC chain is drawn from the states constituting the orbit with multinomial weights $W_{[k]} \propto \exp(-H(\theta_{[k]},\rho_{[k]}))$. In general, there is no accept/reject step, but since the current state $(\theta_{[0]},\rho_{[0]})$ is a part of the orbit, it may be selected in the second step, causing the overall MCMC to be stuck for an iteration. As such, multinomial HMC methods are still reliant on relatively accurate simulation of the underlying Hamiltonian flow in order to keep the weights $\{W_{[k]}\}_{k=a}^b$ from deviating too much from being uniform.  

The different multinomial HMC methods deviate in how the orbit is constructed and also in how the multinomial sampling is performed. A full discussion of these methods is beyond the scope of this text, and the reader is referred to \cite{hoffman2014no,betancourt2017conceptual,bou2024gist,bou2025within} for technical details and proofs of correctness. 
In short, the conventional multinomial HMC approach builds orbits with a fixed number of states $L^* := L + 1 = b - a + 1$, where $L$ is the number of leapfrog steps as before. Given $L$, $b$ is then sampled uniformly from $0, \dots, L$ before setting $a = b - L$. On the other hand, biased progressive HMC and (WAL)NUTS build orbits by doubling the number of states of the orbit either forward or backward in time, i.e., $(a,b) \rightarrow (a,b + L^*)$ or $(a,b) \rightarrow (a-L^*,b)$, with equal probability. This doubling process is performed (at most) $m$ times, where $m$ is the prescribed number of doublings, leading to orbits consisting of $2^m$ states. 

The doubling process allows the (WAL)NUTS orbit building process to stop early if a U-turn criterion is satisfied, thereby saving computations relative to obtaining the $2^m-1$ new states as biased progressive HMC does. Furthermore, provided that the multinomial weights $\{W_{[k]}\}_{k=a}^b$ are close to being uniform, the doubling construction allows biased progressive HMC and most implementations of (WAL)NUTS to sample the new state from the states originating from the last doubling with high probability, thus increasing the probability that the new MCMC chain state is more distant (in $|k|$) compared to the independent multinomial sampling, where the final index defining the next state in the chain is known to have a discrete triangular distribution centred at zero as the marginal distribution \citep{bou2025within}.   

Common to the three above mentioned methods is that they ensure that the orbit building process itself is reversible, and the application of step size adaptivity as considered here would require this reversibility property to be retained. More precisely, for reversibility, it is necessary that for any state $(\theta_{[j]},\rho_{[j]}), \; j \in \{a,a+1,\dots,b\}$ in the original orbit $\{(\theta_{[k]},\rho_{[k]})\}_{k=a}^b$, there exists a corresponding sequence of orbit building decisions such that a (hypothetical) orbit $\{(\theta_{[k]}^*,\rho_{[k]}^*)\}_{k=a^*}^{b^*}$ with current state being $(\theta_{[0]}^*,\rho_{[0]}^*) = \mathcal F (\theta_{[j]},\rho_{[j]})$ contains exactly the momentum flipped states of the original orbit. In other words, for any $j \in \{a,a+1,\dots,b\}$, one has $a^*=j-b,\;b^*=j-a$ and $(\theta_{[j-k]}^*,\rho_{[j-k]}^*) = \mathcal F (\theta_{[k]},\rho_{[k]}), \; \forall \; k\in \{a,a+1,\dots,b\}$. Moreover, the reversibility requires that the probability of the forward and backward orbit building decisions is the same, which is fulfilled both for the conventional multinomial HMC and the doubling construction used in (WAL)NUTS. See the proof in Appendix \ref{appendix:multinomial-correctness} for the multinomial HMC case. The corresponding proof for biased progressive sampling follows closely the proof in \cite{bou2025within}.

\subsubsection{Locally adaptive step sizes for multinomial HMC}

Introducing locally adaptive step sizes in multinomial HMC methods involves first building the orbit using the macro step construction, i.e., 
\begin{equation*}
    (\theta_{[k]},\rho_{[k]}) = \Phi_{k,0}\mid_\mathbf l(\theta_{[0]},\rho_{[0]}),\;\forall  \; k \in \{a,a+1,\dots,-1,1,2,\dots,b\},
    \label{eq:step_size_orbit}
\end{equation*}
where the micro step counts are determined (in the case of the deterministic step size control given by (\ref{eq:pl_deterministic})) as $l_{k-1,k}=\mathcal L(\theta_{[k-1]},\rho_{[k-1]},h,\delta)$, $k=1,2,\dots,b$, and $l_{k,k+1}=\mathcal L(\mathcal F(\theta_{[k+1]},\rho_{[k+1]}),h,\delta)$, $k=-1,-2,\dots,a$. Both the leftmost index $a$ and the rightmost index $b$ of the orbit are decided in the same manner as for the fixed step size scenario indicated above.

Correction for possible reversibility failures is accomplished by modifying the multinomial weights used for sampling the next state $j$. More specifically, define the reversibility indicators
\begin{equation*}
w'_{[k-1,k]} = \begin{cases} 1, & \text{ if } \mathcal L(\theta_{[k-1]},\rho_{[k-1]},h,\delta)=\mathcal L(\mathcal F(\theta_{[k]},\rho_{[k]}),h,\delta),\\
0, & \text{ otherwise},
\end{cases}
\end{equation*}
with $k=a+1,\dots,b$
and the reversibility weights
\begin{equation}\label{eq:reversibility_weights}
    r_{[k]}= \begin{cases} 1, & \text{ if } k=0,\\
    \prod_{n=1}^k w'_{[n-1,n]}, & \text{ if } k\geq 1,\\
    \prod_{n=k+1}^0 w'_{[n-1,n]}, & \text{ if } k\leq -1.
    \end{cases}
\end{equation}
For conventional multinomial HMC, conditional on the constructed
orbit, the next-state index is sampled according to
\begin{equation}
    \mathbb P(J=k\mid\mathcal O)
    =
    \frac{
        r_{[k]}\exp\bigl(-H(\theta_{[k]},\rho_{[k]})\bigr)
    }{
        \displaystyle
        \sum_{j=a}^{b}
        r_{[j]}\exp\bigl(-H(\theta_{[j]},\rho_{[j]})\bigr)
    }.
    \label{eq:adaptive-multinomial-weights}
\end{equation}
Thus, states separated from the current state by a macro step that
fails the reversibility check receive zero probability.

For biased progressive HMC and WALNUTS, the quantities
\[
    r_{[k]}\exp\bigl(-H(\theta_{[k]},\rho_{[k]})\bigr)
\]
are instead used as the state weights within the recursive biased progressive selection rule of \citet[Section~3.3]{bou2025within}. The resulting selection distribution is not, in general, the multinomial distribution \eqref{eq:adaptive-multinomial-weights}, and correctness follows from the reversibility argument of \citet[Section~3.4]{bou2025within}, which will be indicated now. That argument uses three ingredients: (i) the map that re-indexes an orbit so that one of its states becomes the origin, together with the reversed sequence of doubling decisions, is a measure-preserving involution (their Lemma~17); (ii) the doubling decisions are symmetric, so that the original and the re-indexed orbit have the same probability; and (iii) the index selection kernel of biased progressive sampling satisfies an index detailed balance identity (their Lemma~6 in the fixed step size case, and Lemma~21 for WALNUTS) whose proof uses only that the state weights are invariant under re-indexing. With fixed step sizes the weights are $\exp(-H(\theta_{[k]},\rho_{[k]}))$ and this invariance is $H\circ\mathcal F=H$. With locally adaptive step sizes and weights $r_{[k]}\exp(-H(\theta_{[k]},\rho_{[k]}))$, ingredient (i) holds on the set $\{r_{[j]}=1\}$ by Lemma~\ref{lem:reindexing} and the change of variables in the proof of Proposition~\ref{prop:multinomial_correct} in Appendix~\ref{appendix:multinomial-correctness}; ingredient (ii) is unchanged; and the invariance required in (iii) is exactly the identity $r^*_{[j-k]}=r_{[k]}$ of Lemma~\ref{lem:reindexing}. For WALNUTS, the U-turn conditions are evaluated on the same states in both orbits (up to $\mathcal F$ and index reversal, under which the U-turn criterion of \citet{bou2025within} is invariant), so the stopping rule is symmetric as well. The routine verification is therefore omitted and the conclusion is now recorded.

\begin{prop} \label{prop:bp_correct}
The locally adaptive biased progressive HMC and WALNUTS transitions, with state weights $r_{[k]}\exp(-H(\theta_{[k]},\rho_{[k]}))$ and orbits built as described above, are $\pi$-reversible Markov kernels on $\mathbb R^d$.
\end{prop}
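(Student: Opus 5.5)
The plan is to follow the three-ingredient reversibility argument of \citet[Section~3.4]{bou2025within} sketched just before the statement, adapting each ingredient to the locally adaptive weights. Throughout, the momentum is fully refreshed, $\rho\sim N(0,M)$, and discarded after the selection, so it suffices to show detailed balance on phase space for the joint law of (orbit, selected index). Concretely, I will verify that for measurable $A,B\subset\mathbb R^{2d}$, the probability of starting in $A$ under $\mu$ and moving to a state $\mathcal F^{-1}$ of which lies in $B$ equals the same expression with $A$ and $B$ interchanged. The factor $\mathcal F$ here is harmless, because the final momentum is discarded and $\mu$ is $\mathcal F$-invariant. Marginalizing over $\rho$ then gives $\pi$-reversibility on $\mathbb R^d$.

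First, I fix the randomness. Write the transition as a sum over the doubling sequences $\mathbf d\in\{\text{forward},\text{backward}\}^m$, each with probability $2^{-m}$; for WALNUTS, I truncate at the U-turn stopping time. Given $(\theta_{[0]},\rho_{[0]})$ and $\mathbf d$, the orbit, the micro step counts $l_{k-1,k}$ and the reversibility weights $r_{[k]}$ are deterministic. For a selected index $j$ with $r_{[j]}=1$, I use Lemma~\ref{lem:reindexing} to map $(\theta_{[0]},\rho_{[0]},\mathbf d,j)$ to $(\mathcal F(\theta_{[j]},\rho_{[j]}),\mathbf d^*,-j)$, where $\mathbf d^*$ is the reversed, re-indexed doubling sequence. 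The re-indexed orbit consists of the same states, momentum flipped and in reversed order, with $r^*_{[j-k]}=r_{[k]}$. Next, I restrict to the countable pieces on which the vector $\mathbf l$ of micro step counts is fixed. On each such piece the map $(\theta_{[0]},\rho_{[0]})\mapsto\mathcal F(\theta_{[j]},\rho_{[j]})$ is a fixed composition of leapfrog maps and $\mathcal F$, and hence volume preserving. Arguing exactly as in the change of variables in the proof of Proposition~\ref{prop:multinomial_correct}, summing over the pieces shows that the re-indexing is a $\mathrm{Leb}_{2d}\otimes\#$-preserving involution on $\{r_{[j]}=1\}$, which is ingredient (i). Indices with $r_{[j]}=0$ are never selected, so they contribute nothing to either side.

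Ingredient (ii) is immediate, because $\mathbf d$ and $\mathbf d^*$ both have probability $2^{-m}$. For WALNUTS, the U-turn checks are evaluated on the same sub-orbits up to $\mathcal F$ and index reversal, and the criterion is invariant under these operations, so the stopping event is the same in both orbits. The density factor $\exp(-H)$ agrees at corresponding states because $H\circ\mathcal F=H$.

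The main work is ingredient (iii). I will re-examine the proof of the index detailed balance identity, Lemma~6 (or Lemma~21 for WALNUTS) of \citet{bou2025within}. The task is to confirm that it uses the weights only through the sub-tree sums $\sum_{k\in S}W_{[k]}$ for the dyadic blocks $S$ of the orbit, and that it needs only that these sums coincide under re-indexing. With $W_{[k]}=r_{[k]}\exp(-H(\theta_{[k]},\rho_{[k]}))$, the identity $r^*_{[j-k]}=r_{[k]}$ from Lemma~\ref{lem:reindexing}, together with $H\circ\mathcal F=H$, gives $W^*_{[j-k]}=W_{[k]}$. The dyadic block structure is preserved under the reversal, so the lemma applies verbatim. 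The one subtlety is that some blocks may have total weight zero, where the biased progressive ratio $\min(1,W_{\text{new}}/W_{\text{old}})$ could be ill-defined. I will handle this by noting that $W_{\text{old}}>0$ always holds, since it contains the origin with $r_{[0]}=1$, and that zero-weight blocks are selected with probability zero in both directions. Combining (i)--(iii) and summing over $\mathbf d$ and $j$ yields the detailed balance identity, and hence $\pi$-reversibility.
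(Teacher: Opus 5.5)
Your proposal is correct and follows essentially the same route as the paper, which invokes the three ingredients of \citet[Section~3.4]{bou2025within} (their Lemmas~17 and~21 and Theorem~22) with weights $r_{[k]}\exp(-H(\theta_{[k]},\rho_{[k]}))$ and uses Lemma~\ref{lem:reindexing} as the only new input; your explicit handling of zero-weight blocks via $W_{\mathrm{old}}>0$ is a sound extra detail that the paper leaves implicit. One small slip: under the paper's momentum-flip convention the old origin sits at index $j$, not $-j$, of the re-indexed orbit, since $z^*_{[j]}=\mathcal F(z_{[0]})$ (consistent with your own identity $r^*_{[j-k]}=r_{[k]}$), and the map is an involution only with index $j$.
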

\begin{proof}
Lemmas~17 and 21 and Theorem~22 of \citet{bou2025within} apply verbatim with $\exp(-H(\theta_{[k]},\rho_{[k]}))$ replaced by $r_{[k]}\exp(-H(\theta_{[k]},\rho_{[k]}))$, the only property of the weights used there being their invariance under re-indexing, which is Lemma~\ref{lem:reindexing}.
\end{proof}
A self-contained proof for the conventional multinomial case is given in Appendix~\ref{appendix:multinomial-correctness}.


The above observation is in contrast to locally adaptive step size procedures for the standard HMC counterpart, where the whole simulated Hamiltonian trajectory $\Phi_{L,0}\mid_\mathbf l$ must be rejected if a reversibility failure is encountered. One would therefore expect locally adaptive standard HMC to be more sensitive to reversibility failures, thus benefiting more from the improved error measures than multinomial HMC methods. This is indeed the case according to the numerical illustrations presented in the following section. In addition, based on the fact that biased progressive HMC favours final states that are more distant from the initial state compared to the standard multinomial HMC, it is also anticipated that biased progressive HMC and WALNUTS will gain slightly more from the longer trajectories produced by the flow-based measures due to fewer reversibility issues.

\section{Numerical illustrations}
\label{sec:numerical_ill}
To further illustrate the relative merits of the different variants of locally adaptive step size selection in different contexts, this section considers the application within HMC, GHMC, multinomial HMC, biased progressive HMC and WALNUTS for three challenging target distributions. In all combinations of the chosen error measure, the overarching MCMC method and the tuning parameters, 10 independent chains are computed. For the first two scenarios, each chain executes 100000 iterations during both the warmup and sampling period, leading to a total of 1 million samples across the 10 chains. For the stochastic volatility model, 10000 iterations are performed during both phases, collecting 100000 samples across the chains. Effective sample sizes (ESS) \citep{geyer1992} based on the combined chains are then computed using the \texttt{rstan::monitor} function in R package \texttt{rstan} \citep{stan2018}. 

The presented experiments below are by no means claiming that the tuning employed for each combination of the target distribution, MCMC method and error measure is the optimal in any sense. Instead, the choices with respect to the tuning parameters are made to give insight into the relative strengths and weaknesses of the different error measures in varying contexts. In particular, the actual strategies/objectives for optimally tuning the different combinations of error measure and MCMC methods may differ across this spectrum, e.g., controlling $P(c=c_{\min})$ to ensure infrequent reversibility failures for the energy error measure $\kappa_H$ as in \cite{bou2024incorporating} may be less relevant for the alternative error measures. Finding the best such strategy for each particular combination is left for future research. When tuning of $\delta$ is performed for the multinomial HMC methods, the following quantity $\Delta = \max_{j \in \mathcal C_0} H(\theta_j, \rho_j) - \min_{j \in \mathcal C_0} H(\theta_j, \rho_j)$ is relevant. In other words, $\Delta$ represents the largest deviation in the energy among the reachable states of an orbit, i.e., $    \mathcal C_0 := \left\{ k\in\{a,\ldots,b\}:r_{[k]}=1 \right\}
$, see e.g., Appendix \ref{appendix:multinomial-correctness}. Therefore, states obtained after a macro step that fails reversibility are not included in the definition of $\Delta$. The same applies for tuning of $h$, where $P(c = c_{\min})$ is the achieved rate of macro steps leading to reachable states with $c = c_{\min}$, see Appendix \ref{appendix:tuning_strategies} for further details. 

\subsection{AR(1) model} \label{sec:ar1}
The first model considered is the marginally standard Gaussian AR(1)/Toeplitz model, characterized by 
\begin{equation} \label{eq:ar1_setup}
    \theta_1 \sim N(0, 1),\;
    \theta_{t} | \theta_{t-1} \sim N( \varphi \theta_{t-1} , 1-\varphi^2), \, t = 2, \dots, d.  
\end{equation}
In particular, $d=100$ and $\varphi=0.95$ are considered here, leading to a correlated high-dimensional setting where the approach of locally adapting the step size should be beneficial. 

\subsubsection{Standard HMC and GHMC}

\begin{table}[ht]
    \centering
    \begin{tabular}{lcccc}
    \hline
         Criterion & $\delta$ & Reversibility rate & Metropolis rate &  $N_{\text{ode}}$ (millions) \\
         \toprule 
         \multicolumn{5}{c}{Standard HMC, $L = \max(1, \Tilde{L})$, $\Tilde{L} \sim \text{Poisson(16)}$, $h=0.25$}\\ 
         \hline
         H & 0.001 & 69.8\% & 69.8\% & 2913 \\
         RKNF  & 0.265 & 94.4\%& 67.2\% & 124 \\
        MRF  & 0.11 & 84.1\% & 71.7\% & 278 \\
        \hline 
        \multicolumn{5}{c}{GHMC ($L=1$), $h=1$}\\ 
        \hline 
         H & 0.0275 & 91.3\% & 90.6\% & 117 \\
         RKNF & 0.06 & 97.8\% & 90.3\% & 48 \\
        MRF  & 0.8 & 97.3\% & 89.8\% & 53 \\
        \hline
        
    \end{tabular}%
    \caption{Reversibility and diagnostic results of the AR(1) model \eqref{eq:ar1_setup} using standard HMC and GHMC. Results are based on 10 independent chains, each with 100000 iterations. Reversibility rate gives the proportion of transitions with $\prod_{j=1}^L w_j' = 1$ (see Equation \eqref{eq:metropolis_adapt_standard_hmc}) and Metropolis rate is the estimated expected value of (\ref{eq:metropolis_adapt_standard_hmc}). $N_{\text{ode}}$ is the number of gradient evaluations required to obtain the samples. Here, $c_{\min}=0$ for H and RKNF, and $c_{\min}=1$ for MRF.}
    \label{tab:ar1_hmc_reversibility}
\end{table}

\begin{table}[ht]
\centering
\begin{tabular}{cccccccccc}
\toprule
 & \multicolumn{3}{c}{Posterior Mean (Posterior SD) [Coordinate]} & \multicolumn{3}{c}{{ESS / 10000}} & \multicolumn{3}{c}{$\text{ESS} * 1000000 $/ $N_{\text{ode}}$} \\
 \cmidrule(lr){1-1} \cmidrule(lr){2-4} \cmidrule(lr){5-7} \cmidrule(lr){8-10}
 Criterion  & H & RKNF & MRF  & H & RKNF & MRF  & H & RKNF  & MRF  \\
 \cmidrule(lr){1-1} \cmidrule(lr){2-4} \cmidrule(lr){5-7} \cmidrule(lr){8-10}
Parameter &  &  &  &  &  & & & & \\
\midrule 
\multicolumn{10}{c}{Standard HMC, $L = \max(1, \Tilde{L})$, $\Tilde{L} \sim \text{Poisson(16)}$, $h=0.25$}\\ \hline \\
min ESS & 0.0 (1.0) [48] & 0.0 (1.0) [42] & 0.0 (1.0) [50] & 
14.64 & 14.24 & 15.06 & 
50 & 1150 & 542 \\
max ESS & 0.0 (1.0) [100] & 0.0 (1.0) [1] & 0.0 (1.0) [1] & 
26.99 & 25.35 & 28.45 & 
93 & 2047 & 1025 \\ \\
\hline
\multicolumn{10}{c}{GHMC ($L=1$), $h=1$}\\
\hline \\ 
min ESS & 0.0 (1.0) [64] & 0.0 (1.0) [42] & 0.0 (1.0) [63] & 
9.31 & 9.75 & 9.28 & 
794 & 2012 & 1756 \\
max ESS & 0.0 (1.0) [1] & 0.0 (1.0) [100] & 0.0 (1.0) [1] & 
17.64 & 17.75 & 17.35 & 
1504 & 3664 & 3281 \\ \\
\bottomrule
\end{tabular}%
\caption{ESS  results of AR(1) model \eqref{eq:ar1_setup} using standard HMC and GHMC. Results are based on 10 independent chains, each with 100000 iterations. Effective sample sizes are calculated across the chains. In both cases, $c_{\min}=0$ for H and RKNF, and $c_{\min}=1$ for MRF.}
\label{tab:ar1_hmc_ess}
\end{table}

Tables \ref{tab:ar1_hmc_reversibility} and \ref{tab:ar1_hmc_ess} provide diagnostic results and performance results for standard HMC and GHMC, respectively. 
For standard HMC, $h = 0.25$ is chosen for the macro step size while the number of macro steps $L$ is drawn independently in each iteration as $L=\max(1,\tilde L)$ with $\tilde L\sim\mathrm{Poisson}(16)$, in the spirit of randomized HMC \citep{bou2017randomized}, in order to avoid the resonances that a fixed integration time can produce \citep[Section~4.2]{neal2011mcmc}.
For GHMC (i.e., $L=1$), the macro step size is set to $h=1.0$ while the friction coefficient is selected so that the momentum refreshes have an autocorrelation of $\exp(-\gamma h)=0.95$. The same values of $L$ and $h$ are used for each error measure so that the proposals will have similar characteristics.

For standard HMC, tuning of the error threshold $\delta$ is performed to achieve an expected Metropolis acceptance rate  of around 70\%. For GHMC, the corresponding target value of the same quantity is 90\%. For both standard HMC and GHMC, it is seen from Table \ref{tab:ar1_hmc_reversibility} that in order to achieve the desired acceptance rate, $\delta$ has to be chosen rather small for the Hamiltonian error measure. In addition, for the Hamiltonian error measure, the reversibility rate is almost equal to the Metropolis rate, suggesting that close to all rejections are due to reversibility failures. The flow-based error measures do not experience such a scenario, allowing the selection of $\delta$ that attains the desired acceptance ratio with a large proportion of rejections being due to the inaccurate flow simulation. Since the Hamiltonian error measure requires substantially more gradient evaluations to obtain the desired acceptance ratio, the sampling performance (see Table \ref{tab:ar1_hmc_ess}) of the flow-based measures is better. In case of standard HMC, an order of magnitude is observed.

\subsubsection{Multinomial HMC, Biased progressive HMC and WALNUTS}

\begin{table}[ht]
    \centering
    \resizebox{1\columnwidth}{!}{%
\begin{tabular}{lcccccc}
        \hline
         Criterion & $h$ & $\delta$ & $P(\text{\# dead ends}\geq 1)$ &  $P(c = c_\text{min})$ & $P(\Delta \leq 0.2)$ & $N_{\text{ode}}$ (in millions) \\
         \toprule
        \multicolumn{7}{c}{Multinomial, fixed $h=0.25$, $L=32$}\\ \hline
         H & 0.25 &  0.1205 & 97.2\% & 0.1\% & 79.9\% & 231 \\
         RKNF & 0.25 & 0.0011 & 8.3\% & 0\% & 81.0\% & 1109 \\
        MRF  & 0.25 & 0.0825 & 6.0\% & 0\% & 80.0\% & 864 \\
         \hline
         \multicolumn{7}{c}{Biased progressive, fixed $h=0.25$, $m=5$}\\ \hline
         H & 0.25 & 0.12 & 96.9\% & 0.1\% & 80.6\% & 229 \\
         RKNF & 0.25 & 0.0062 & 8.3\% & 0\% & 81.4\% & 865 \\
        MRF & 0.25 & 0.0235 & 7.4\% & 0\% & 80.5\% & 1541 \\
        \hline
         \multicolumn{7}{c}{WALNUTS, fixed $h=0.25$, $m = 5$}\\ \hline
        H & 0.25 & 0.135 & 83.3\% & 0.1\% & 80.0\% & 329 \\
         RKNF & 0.25  & 0.00635 & 7.9\% & 0\% & 81.4\% & 700 \\
        MRF & 0.25 & 0.085 & 6.0\% & 0\% & 79.5\%& 663 \\
        
         \hline
         \multicolumn{7}{c}{Multinomial, tuned $h$}\\
         \hline
         H & (0.0454-0.0455) & (0.0390-0.0393) & 51.2\% & 80.0\% & 80.0\% & 225 \\
         RKNF & (0.0272-0.0273) & (0.0007-0.0007) & 3.3\% & 80.0\% & 79.8\% & 929 \\
        MRF & (0.0566-0.0568) & (0.0050-0.0050) & 7.3\% & 80.2\% & 78.8\% & 771 \\
        \hline
         \multicolumn{7}{c}{Biased progressive, tuned $h$}\\
         \hline

        H & (0.0440-0.0441) & (0.0355-0.0357) & 61.0\% & 80.0\%& 79.8\% & 305 \\
         RKNF & (0.0243-0.0243) & (0.0005-0.0005) & 4.1\% & 80.0\% & 79.8\% & 1614 \\
        MRF& (0.0497-0.0499) & (0.0033-0.0034) & 8.3\% & 80.2\% & 78.7\% & 1383 \\
        \hline
         \multicolumn{7}{c}{WALNUTS, tuned $h$}\\
         \hline
         H & (0.0472-0.0473) & (0.0432-0.0433) & 41.3\% & 80.0\% & 79.9\% & 271 \\
         RKNF & (0.0315-0.0316) & (0.0011-0.0011) & 2.7\% & 80.0\% & 79.9\% & 601 \\
        MRF  & (0.0653-0.0655) & (0.0077-0.0078) & 6.4\% & 80.1\% & 79.9\% & 535 \\
        \hline
    \end{tabular}%
    }
\caption{Diagnostic results of AR(1) model \eqref{eq:ar1_setup} for the multinomial HMC methods. Results are based on 10 independent chains, each with 100000 iterations. Effective sample sizes are calculated across the chains. Here,  $\Delta = \max_{j \in \mathcal C_0} H(\theta_j, \rho_j) - \min_{j \in \mathcal C_0} H(\theta_j, \rho_j)$ and $c_{\min}=0$ for H and RKNF, and $c_{\min}=1$ for MRF. For a given final value of $h$ after tuning, the corresponding value of $L$ for the multinomial HMC is set to $\lceil h_0 L_0 / h \rceil$, where $h_0=0.25$, $L_0=32$ and $\lceil \cdot \rceil$ denotes the ceiling function. Similarly for biased progressive HMC and WALNUTS, the corresponding value of $m$ is $\lceil \log_2((2^{m_0}-1)h_0 / h + 1) \rceil$, where $m_0 = 5$. This ensures that the (maximum) path length after tuning is at least as long as the initial path length defined by $h_0$ and either $L_0$ or $m_0$ depending on the given sampler.} 
\label{tab:diag_AR1_multinomial}
\end{table}

\begin{table}[ht]
\centering
\begin{tabular}{lccccccccc}
\toprule
 & \multicolumn{3}{c}{Posterior Mean (Posterior SD) [Coordinate]} & \multicolumn{3}{c}{{ESS / 10000}} & \multicolumn{3}{c}{$\text{ESS} * 1000000 $/ $N_{\text{ode}}$} \\
 \cmidrule(lr){1-1} \cmidrule(lr){2-4} \cmidrule(lr){5-7} \cmidrule(lr){8-10}
 Criterion  & H & RKNF  & MRF  & H & RKNF & MRF & H & RKNF  & MRF  \\
 \cmidrule(lr){1-1} \cmidrule(lr){2-4} \cmidrule(lr){5-7} \cmidrule(lr){8-10}
Parameter &  &  &  &  &  & & & & \\
\hline
\multicolumn{10}{c}{Multinomial, fixed $h=0.25$, $L=32$}\\ \hline
min ESS & 0.0 (1.0) [54] & 0.0 (1.0) [36] & 0.0 (1.0) [56] & 
2.88 & 13.65 & 14.04 & 
125 & 123 & 162 \\
max ESS & 0.0 (1.0) [100] & 0.0 (1.0) [100] & 0.0 (1.0) [1] & 
5.49 & 25.58 & 25.74 & 
238 & 231 & 298 \\
\hline
\multicolumn{10}{c}{Biased progressive, fixed $h=0.25$, $m=5$}\\ \hline
min ESS & 0.0 (1.0) [71] & 0.0 (1.0) [44] & 0.0 (1.0) [48] & 4.45 & 23.41 & 23.43 & 194 & 271 & 152 \\
max ESS & 0.0 (1.0) [100] & 0.0 (1.0) [100] & 0.0 (1.0) [100] & 8.56 & 45.00 & 45.34 & 374 & 520 & 294 \\
\hline
\multicolumn{10}{c}{WALNUTS, fixed $h=0.25$, $m =5$}\\ \hline
min ESS & 0.0 (1.0) [59] & 0.0 (1.0) [44] & 0.0 (1.0) [43] & 3.06 & 15.74 & 15.66 & 93 & 225 & 236 \\
max ESS & 0.0 (1.0) [1] & 0.0 (1.0) [1] & 0.0 (1.0) [1] & 5.68 & 27.86 & 28.16 & 173 & 398 & 425 \\
\hline
\multicolumn{10}{c}{Multinomial, tuned $h$}\\ \hline
min ESS & 0.0 (1.0) [59] & 0.0 (1.0) [43] & 0.0 (1.0) [50] & 
9.38 & 13.58 & 13.28 & 
416 & 146 & 172 \\
max ESS & 0.0 (1.0) [100] & 0.0 (1.0) [1] & 0.0 (1.0) [100] & 
17.65 & 25.22 & 24.78 & 
783 & 271 & 321 \\
\hline
\multicolumn{10}{c}{Biased progressive, tuned $h$}\\ 
\hline
min ESS & 0.0 (1.0) [47] & 0.0 (1.0) [43] & 0.0 (1.0) [53] & 
28.87 & 60.71 & 61.20 & 
946 & 376 & 442 \\
max ESS & 0.0 (1.0) [100] & 0.0 (1.0) [1] & 0.0  (1.0) [100] & 
50.69 & 98.32 & 99.47 & 
1662 & 609 & 719 \\
\hline
\multicolumn{10}{c}{WALNUTS, tuned $h$}\\ 
\hline
min ESS & 0.0 (1.0) [52] & 0.0 (1.0) [52] & 0.0 (1.0) [42] & 
14.67 & 14.48 & 15.37 & 
541 & 241 & 287 \\
max ESS & 0.0 (1.0) [100] & 0.0 (1.0) [100] & 0.0 (1.0) [1] & 
23.48 & 25.52 & 27.43 & 
866 & 425 & 513 \\
\bottomrule
\end{tabular}%
\caption{Performance results of AR(1) model \eqref{eq:ar1_setup} for the multinomial HMC methods. Results are based on 10 independent chains, each with 100000 iterations. Effective sample sizes are calculated across the chains. See the caption of Table \ref{tab:diag_AR1_multinomial} for further details.}
\label{tab:ESS_AR1_multinomial}
\end{table}

For the ``rejection-free'' methods like the multinomial HMC, biased progressive HMC and WALNUTS, two situations are considered. 
In the former, for all three samplers, $h = 0.25$. For the multinomial HMC, the number of leapfrog steps $L$ is set to 32 so that the total number of states in an orbit is $L^* = L + 1 = 33$. For the case of biased progressive HMC and WALNUTS, the (maximum for WALNUTS) number of doublings $m$ is chosen to be 5 so that the (maximum) number of states is $2^m = 2^5 = 32$ and comparable to the multinomial case. 
$\delta$ is tuned so that $\Delta \leq 0.2$ for around 80\% of the generated orbits. 
Various diagnostic information for this situation is given in the upper half of Table \ref{tab:diag_AR1_multinomial}, while information related to the sampling performance is presented in the upper half of Table \ref{tab:ESS_AR1_multinomial}. The macro step size is chosen to be rather large in this situation so that non-trivial adaptation of step sizes, that is $c>c_{\min}$, happens in the majority of the macro steps. It is observed from Table \ref{tab:diag_AR1_multinomial} that this practice leads to a large proportion of orbits having dead ends, i.e., some of the states in the generated orbit are not reachable as they are beyond the point of reversibility failures, for the energy error measure $\kappa_H$. This proportion is greatly reduced for the flow-based error measures, with MRF giving slightly fewer reversibility failures than RKNF for all three MCMC methods. 

Regarding the sampling performance, in this situation, where active use of the adaptive step size mechanism is very prevalent, it is seen from Table \ref{tab:ESS_AR1_multinomial} that the flow-based error measures recapture the higher computational cost per step (in terms of ESS/gradient evaluation) in many cases, with the effect possibly being stronger for biased progressive HMC and WALNUTS (as they benefit more from longer runs of reversible macro steps), except for the single case of MRF using biased progressive HMC. In other words, the prevalence of dead ends for $\kappa_H$ leads to much lower raw ESS values with the difference arguably being larger for the biased progressive methods.

The lower halves of Tables \ref{tab:diag_AR1_multinomial} and \ref{tab:ESS_AR1_multinomial} consider a situation where the macro step size $h$ is also tuned towards $P(c=c_{\min})=0.8$ (in addition to $\delta$ towards $P(\Delta<0.2)=0.8$ as before), which is similar to the strategy used in \cite{bou2025within}. 
Based on the tuned $h$, for the multinomial HMC, $L$ is chosen so that the path length is at least as long as the initial path length of $32 \cdot 0.25 = 8$ time units. Similarly, for biased progressive HMC and WALNUTS, $m$ is chosen so that the (maximum for WALNUTS) path length is larger than or equal to $(2 ^ 5 - 1) \cdot 0.25 = 7.75$ time units. 


Employing tuned, smaller macro step sizes reduces the prevalence of dead ends for H and RKNF (even if the orbits involve more macro steps), whereas the corresponding figures for MRF are more similar across choices of $h$ in this case. From Table \ref{tab:ESS_AR1_multinomial}, it is quite clear that the additional computational cost of the flow-based error measures does not pay off for the multinomial methods in this situation, as H produces similar values of raw ESS as the others with less computational cost. The exception is biased progressive HMC, where the number of raw ESS produced by H is roughly half of the other alternatives. Nevertheless, the energy criterion only performs only one fifth as many gradient evaluations as the flow-based criteria, leading to more ESS produced per gradient evaluation. 

\subsection{Neal's funnel model} \label{sec:funnel}
In this section, a similar study is performed for a version of Neal's funnel model \citep{neal2003slice}. More specifically, consider the following model: 
\begin{equation} \label{eq:funnel_setup}
    \theta_1 \sim N(0, 3^2),\; 
    \theta_{i}|\theta_1 \sim \text{ iid } N(0, \exp(\theta_1)), \, i = 2, \dots, d.  
\end{equation}
For this example, the dimension $d$ is set to $d = 10$. The joint distribution is known to have complicated geometric properties resembling a funnel shape such that standard HMC methods with fixed step size tend to struggle to explore the whole distribution (see e.g., \citet{kleppe2022connecting}, \citet{bou2025within} and references therein). As such, the funnel distribution typically requires substantially larger variation in the step sizes compared to the above considered AR(1) model. Fixed step size implementations would therefore require very small step sizes and large numbers of integration steps in each iteration to fully explore the target distribution.

\subsubsection{Standard HMC and GHMC}

\begin{table}[ht]
    \centering
    \begin{tabular}{lcccccc}
    \hline
         Criterion & $h$ & $\delta$ & $L$ & Reversibility rate & Metropolis rate &  $N_{\text{ode}}$ (millions) \\
         \toprule 
         \multicolumn{7}{c}{Standard HMC}\\ 
         \hline
         H & $0.5$ &  0.031 & 4 & 70.9\% & 70.2\% & 83 \\
         RKNF & $0.5$ & 0.4  & 4 & 90.9\% & 69.9\% & 23 \\
        MRF &$1.0$  & 30 & 2 & 94.0\% & 70.3\% & 27 \\
        \hline 
        \multicolumn{7}{c}{GHMC ($L=1$), $h=1$}\\ 
        \hline 
         H & 1.0 & 0.02 & 1 & 91.0\% & 90.6\% & 69 \\
         RKNF & 1.0 & 0.14 & 1 & 96.7\% & 90.6\% & 24 \\
        MRF & 1.0 & 2 & 1 & 97.2\% & 90.3\% & 27 \\
        \hline
        
    \end{tabular}%
    \caption{Diagnostic results for Neal's funnel model \eqref{eq:funnel_setup} using standard HMC and GHMC. Results are based on 10 independent chains, each with 100000 iterations. Reversibility rate gives the proportion of transitions with $\prod_{j=1}^L w_j' = 1$ (see Equation \eqref{eq:metropolis_adapt_standard_hmc}) and Metropolis rate is the estimated expected value of (\ref{eq:metropolis_adapt_standard_hmc}). $N_{\text{ode}}$ is the number of gradient evaluations required to obtain the samples.}
    \label{tab:funnel_hmc_reversibility}
\end{table}

\begin{table}[ht]
\centering
\begin{tabular}{cccccccccc}
\toprule
 & \multicolumn{3}{c}{Posterior Mean (Posterior SD) [Coordinate]} & \multicolumn{3}{c}{{ESS / 1000}} & \multicolumn{3}{c}{$\text{ESS} * 1000000 $/ $N_{\text{ode}}$} \\
 \cmidrule(lr){1-1} \cmidrule(lr){2-4} \cmidrule(lr){5-7} \cmidrule(lr){8-10}
 Criterion  & H & RKNF & MRF  & H & RKNF & MRF  & H & RKNF  & MRF  \\
 \cmidrule(lr){1-1} \cmidrule(lr){2-4} \cmidrule(lr){5-7} \cmidrule(lr){8-10}
Parameter &  &  &  &  &  & & & & \\
\midrule 
\multicolumn{10}{c}{Standard HMC }\\ \hline \\
$\theta_1$ & 0.0 (3.0) & -0.1 (2.9) & 0.0 (3.0) & 
2.67 & 2.99 & 1.10 & 
32 & 128 & 41 \\
min ESS & 0.0 (5.6) [9] & 0.0 (5.6) [3] & 1.2 (14.2) [2] & 
1.67 & 2.03 & 0.17 & 
20 & 87 & 6 \\
max ESS & -0.2 (5.0) [8] & 0.0 (5.3) [2] & 0.2 (6.3) [7] & 
3.44 & 3.30 & 1.41 & 
42 & 141 & 52 \\ \\
\hline
\multicolumn{10}{c}{GHMC ($L=1$), $h=1$}\\
\hline \\ 
$\theta_1$ & -0.1 (3.0) & 0.0 (2.9) & -0.1 (2.9) & 
1.10 & 1.19 & 1.32 & 
16 & 50 &  50  \\
min ESS & -0.1 (3.0) [1] & 0.0 (2.9) [1] & -0.1 (2.9) [1] & 
1.10 & 1.19 & 1.32 & 
16 & 50 & 50 \\
max ESS & 0.0 (5.8) [5] & 0.0 (4.7) [6] & -0.1 (4.9) [8] & 
4.62 & 9.58 & 7.57 & 
67 & 404 & 285 \\ \\
\bottomrule
\end{tabular}%
\caption{ESS results for Neal's funnel model \eqref{eq:funnel_setup} using standard HMC and GHMC. Results are based on 10 independent chains, each with 100000 iterations. Effective sample sizes are calculated across the chains.}
\label{tab:funnel_hmc_ess}
\end{table}

Tables \ref{tab:funnel_hmc_reversibility} and \ref{tab:funnel_hmc_ess} provide diagnostic information and performance results for standard HMC and GHMC ($h=1$ and $\exp(-\gamma h) = 0.95$). For standard HMC, $\delta$ is tuned towards a 70\% Metropolis acceptance probability. Note that for moderate $L$ (say $\leq 16$),  standard leapfrog HMC at macro step size $h \approx 0.40$ attains a 70\% Metropolis acceptance probability for this model (i.e., $\delta=\infty$), but produces biased samples as the narrow part of the funnel is rarely visited. Consequently, macro step sizes are chosen larger than $h=0.40$ to avoid the adaptive methods becoming insensitive to $\delta$.

For the H and RKNF criteria, $L = 4$ macro steps of size $h = 0.5$ are chosen. For H, at larger $L$ with $h=0.5$, the acceptance rate is non-monotone in $\delta$ and peaks below 70\%, so no threshold achieves the target. Reducing to $L = 4$ restores monotonicity and makes the target achievable. The required $\delta = 0.031$ incurs a reversibility failure rate of roughly 29\%. The RKNF criterion with the same combination of $h$ and $L$ is better behaved: the acceptance rate decreases monotonically as $\delta$ increases, and the target is achieved when $\delta = 0.4$ with a reversibility failure rate of only around 9\%. For the MRF criterion with $h \leq 0.5$ (and therefore a micro step size $\leq 0.25$), the acceptance rate is substantially larger than 70\% for any $\delta$. Reaching 70\% therefore requires a much larger step size, where $h = 1.0$ is chosen.


The results in Table \ref{tab:funnel_hmc_ess} are comparable with what is found for the standard HMC and GHMC in the case of the AR(1) model. Except for the min ESS parameter in the standard HMC scenario using the MRF criterion, the flow-based measures offer substantial improvements in sampling efficiency relative to H. 

\subsubsection{Multinomial HMC, Biased progressive HMC and WALNUTS}

\begin{table}[ht]
    \centering
    \resizebox{1\columnwidth}{!}{%
\begin{tabular}{lcccccc}
        \hline
         Criterion & $h$ & $\delta$ & $P(\text{\# dead ends}\geq 1)$ &  $P(c = c_\text{min})$ & $P(\Delta \leq 0.2)$ & $N_{\text{ode}}$ (in millions) \\
         \toprule
        \multicolumn{7}{c}{Multinomial, fixed $h=0.25$, $L=16$}\\ \hline
         H & 0.25 & 0.095 & 35.5\% & 75.1\% & 80.2\% & 58 \\
         RKNF & 0.25 & 0.02 & 6.1\% & 46.9\% & 80.6\% & 151 \\
        MRF & 0.25 & 0.21 & 1.9\% & 82.6\% &  81.4\% & 186  \\
         \hline
         \multicolumn{7}{c}{Biased progressive, fixed $h=0.25$, $m=4$}\\ \hline
         H & 0.25 & 0.0975 & 34.5\% & 75.0\% & 80.0\% & 56 \\
         RKNF & 0.25 & 0.02 & 5.7\% & 47.4\% & 81.9\% & 141 \\
        MRF & 0.25 & 0.23 & 1.8\% & 83.2\% & 80.9\% & 170 \\
        \hline
         \multicolumn{7}{c}{WALNUTS, fixed $h=0.25$, $m = 7$}\\ \hline
         H & 0.25 & 0.12 & 20.5\% & 93.8\% & 80.2\% & 55 \\
         RKNF & 0.25 & 0.032 & 6.7\% & 85.2\% & 80.7\% & 112\\
        MRF & 0.25 & 25 & 0.2\% & 99.7\% & 81.3\% & 136 \\
        
         \hline
         \multicolumn{7}{c}{Multinomial, tuned $h$}\\
         \hline
         H & (0.211-0.231) & (0.094-0.097) & 32.3\% & 80.0\% & 79.6\% & 57 \\
         RKNF & (0.123-0.171) & (0.018-0.024) & 2.9\% & 79.6\% & 79.0\% & 149 \\
        MRF & (0.256-0.300) & (0.204-0.234) & 2.4\% & 79.8\% & 79.5\% & 206 \\
        \hline
         \multicolumn{7}{c}{Biased progressive, tuned $h$}\\
         \hline

         H  & (0.215-0.225) & (0.080-0.082) & 40.3\% & 80.3\% & 79.4\% & 85 \\
         RKNF & (0.125-0.139) & (0.016-0.019) & 2.8\% & 79.6\% & 80.3\% & 153 \\
        MRF & (0.275-0.296) & (0.209-0.218) & 2.4\% & 79.6\% & 80.2\% & 215 \\
        \hline
         \multicolumn{7}{c}{WALNUTS, tuned $h$}\\
         \hline
         H  & (0.360-0.366) & (0.111-0.115) & 39.8\% & 80.1\% & 80.3\% & 66 \\
         RKNF & (0.268-0.273) & (0.033-0.033) & 8.2\% & 79.7\% & 79.9\% & 117 \\
        MRF & (0.535-0.544) & (0.357-0.369) & 6.7\% & 79.6\% & 80.4\% & 148 \\
        \hline
    \end{tabular}%
    }
\caption{Diagnostic results for multinomial HMC methods applied to Neal's funnel model (\ref{eq:funnel_setup}). Here $\Delta = \max_{j \in \mathcal C_0} H(\theta_j, \rho_j) - \min_{j \in \mathcal C_0} H(\theta_j, \rho_j)$. For the lower half where $h$ is tuned, the value of $L$ for the multinomial HMC is set to $\lceil h_0 L_0 / h \rceil$, where $h_0=0.25$ and $L_0=16$. Similarly for biased progressive HMC and WALNUTS, the corresponding values of $m$ are $\lceil \log_2((2^{m_0}-1)h_0 / h + 1) \rceil$, where $m_0 = 4$ for biased progressive HMC and $m_0 = 7$ for WALNUTS. This ensures that the (maximum) path length after tuning is at least as long as the initial path length defined by $h_0$ and either $L_0$ or $m_0$ depending on the given sampler.}
\label{tab:diag_funnel_multinomial}
\end{table}

\begin{table}[ht]
    \centering
    \resizebox{1\columnwidth}{!}{%
\begin{tabular}{lccccccccc}
\toprule
 & \multicolumn{3}{c}{Posterior Mean (Posterior SD) [Coordinate]} & \multicolumn{3}{c}{{ESS / 1000}} & \multicolumn{3}{c}{$\text{ESS} * 1000000 $/ $N_{\text{ode}}$} \\
 \cmidrule(lr){1-1} \cmidrule(lr){2-4} \cmidrule(lr){5-7} \cmidrule(lr){8-10}
 Criterion  & H & RKNF  & MRF  & H & RKNF & MRF & H & RKNF  & MRF  \\
 \cmidrule(lr){1-1} \cmidrule(lr){2-4} \cmidrule(lr){5-7} \cmidrule(lr){8-10}
Parameter &  &  &  &  &  & & & & \\
\hline
\multicolumn{10}{c}{Multinomial, fixed $h=0.25$, $L=16$}\\ \hline
$\theta_1$ & -0.1 (2.9) & -0.1 (2.9)  & 0.0 (3.0) & 1.89 & 2.65 & 1.47 & 32 & 18 & 8 \\
min ESS & -0.3 (7.6) [2] & -0.3 (6.2) [6] & 0.4 (6.3) [10] 
& 0.45 & 0.86 & 0.38 & 
8 & 6 & 2 \\
max ESS & 0.1 (5.2) [8] & 0.0 (4.7) [5] & 0.0 (5.3) [7] 
& 2.84 & 3.51 & 1.63 
& 49 & 23 & 9 \\
\hline
\multicolumn{10}{c}{Biased progressive, fixed $h=0.25$, $m = 4$}\\ \hline
$\theta_1$ & -0.1 (3.0) & 0.0 (3.0)  & 0.1 (3.0) & 2.15 & 1.60 & 1.45 & 39 & 11 & 9\\
min ESS & -0.1 (7.8) [4] & 0.3 (10.1) [7] & -0.5 (10.1) [9] 
& 0.55 & 0.37 & 0.41 & 
10 & 3 & 2 \\
max ESS & -0.1 (3.0) [1] & -0.1 (6.0) [6] & 0.0 (6.1) [3] 
& 2.15 & 2.05 & 2.99 
& 39 & 15 & 18 \\
\hline
\multicolumn{10}{c}{WALNUTS, fixed $h=0.25$, $m = 7$}\\ \hline
$\theta_1$ & 0.0 (3.0) & -0.1 (3.0)  & 0.0 (3.0) & 
8.38 & 9.17 & 8.11 & 
153 & 82 & 60\\
min ESS & 0.0 (3.0) [1] & 0.1 (9.0) [8] & -0.1 (9.4) [3] 
& 8.38 & 4.88 & 3.31 & 
153 & 43 & 24\\
max ESS & 0.0 (6.8) [3] & 0.0 (7.0) [5] & 0.0 (7.1) [4] 
& 38.98 & 38.82 & 31.62 
& 712 & 346 & 233 \\
\hline
\multicolumn{10}{c}{Multinomial, tuned $h$}\\ \hline
$\theta_1$ & -0.1 (2.9) & 0.1 (3.0)  & -0.2 (2.9)
& 2.47 & 1.90 & 4.99 
& 43 & 13 & 24 \\
min ESS & 0.1 (5.5) [7] & 0.2 (7.0) [7] & -0.1 (4.7) [6] 
& 1.32 & 1.00 & 2.46 
& 23 & 7 & 12 \\
max ESS & 0.0 (4.9) [6] & -0.2 (5.4) [10] & 0.0 (3.9) [10] 
& 3.07 & 2.81 & 7.97 & 
54 & 19 & 39 \\
\hline
\multicolumn{10}{c}{Biased progressive, tuned $h$}\\ 
\hline
$\theta_1$ & 0.0 (3.0) & -0.1 (2.9)   & 0.0 (3.1)
& 4.05 & 4.01 & 2.50 
& 48 & 26 & 12 \\
min ESS & 0.3 (7.9) [6] & 0.1 (5.2) [7] & -0.4 (8.0) [4] 
& 1.26 & 1.93 & 0.81 
& 15 & 13 & 4 \\
max ESS & 0.0 (3.0) [1] & -0.1 (5.0) [5] & 0.1 (6.7) [7] 
& 4.05 & 4.95 & 3.03 & 
48 & 32 & 14 \\
\hline
\multicolumn{10}{c}{WALNUTS, tuned $h$}\\ 
\hline
$\theta_1$ & 0.0 (3.0) & -0.1 (3.0)   & 0.0 (3.0)
& 8.93 & 9.59 & 5.33 
& 135 & 82 & 36 \\
min ESS &  0.0 (3.0) [1] &  -0.1 (3.0) [1] & -1.1 (22.8) [3] 
& 8.93 & 9.59 & 0.47
& 135 & 82 & 3 \\
max ESS & 0.0 (6.8) [4] & 0.0 (6.9) [3] & 0.0 (3.0) [1] 
& 44.87 & 33.57 & 5.33 & 
678 & 286 & 36 \\
\bottomrule
\end{tabular}%
}
\caption{Performance results for multinomial HMC methods applied to Neal's funnel model (\ref{eq:funnel_setup}). See caption of Table \ref{tab:diag_funnel_multinomial} for further details.}
\label{tab:ESS_funnel_multinomial}
\end{table}

Tables \ref{tab:diag_funnel_multinomial} and \ref{tab:ESS_funnel_multinomial} provide diagnostic and performance information of multinomial HMC methods for Neal's funnel model (\ref{eq:funnel_setup}). As for the AR(1) model, two situations are considered, either with fixed macro step size $h$ or with $h$ tuned towards $P(c=c_{\min})=0.8$. For the fixed step size simulations, the orbits of multinomial HMC are chosen to consist of $L^* = L + 1 = 17$ states. For the biased progressive HMC, $m = 4$ while $m = 7$ for the WALNUTS case as it can adaptively choose the final orbit size at each iteration. With these choices, a rather comparable number of gradient evaluations across the three samplers is obtained. In the fixed $h$ case with WALNUTS and MRF error measure, the chosen $h$ yields near compliance with the $P(\Delta\leq 0.2)$-criterion for a wide range of $\delta$-values. This is due to the $c_\text{min} = 1$ floor of MRF, which in this case is already adequate to obtain the desired accuracy. Thus, $\delta$ is set to 25, corresponding to the rather arbitrary largest value considered. In this setting, the adaptive step size mechanism only guards against single macro step divergences. Based on the presented results, reversibility failures for the H-measure appear not to be performance limiting for this particular model, and the increased computational cost of the flow-based error measures does not pay off in faster sampling efficiency. 


Overall, the picture of Neal's funnel distribution in the current setting is somewhat dissimilar to the one obtained from the AR(1) model. In this case, the orbits chosen by (WAL)NUTS do not involve a sufficient number of macro steps to warrant more elaborate flow-based error measures.

\subsection{Stochastic volatility model with leverage} \label{sec:sv_model}

As a more challenging real data example, a stochastic volatility model with leverage \citep[see e.g.,][]{YU2005165} is considered. Here, the same data set of $T=2515$ log-return observations $y = (y_1, \dots, y_T)$ of the S\&P500 index from October 1st 1999 to September 30th 2009 examined in \cite{Grothe26112019} is used. Let $x = (x_0, \dots, x_T)$ correspond to the  latent log-volatilities. Given the volatility scale parameter $\sigma > 0$ and the leverage parameter $\alpha \in (-1, 1)$, the joint process is modeled as 
\begin{equation*} \label{eq:joint_process_sv_model}
    (y_t, x_t)^T \mid x_{t-1} \sim N((0, x_{t-1})^T, V_{t-1}), \; V_{t-1}=\begin{pmatrix} 
    \exp(x_{t-1}) & \alpha \sigma\exp(x_{t-1}/2)\\
    \alpha \sigma\exp(x_{t-1}/2) & \sigma^2
    \end{pmatrix}. 
\end{equation*}
It is often seen for stock index data that negative returns tend to be associated with increases in volatility \citep{YU2005165}, hence negative correlation $\alpha$ is typically observed.

Instead of sampling directly from $x$, the a priori standard Gaussian driving variables $z = (z_1, \dots, z_T)$ are introduced so that the observationally equivalent model may be summarized as
\begin{align}
    \label{eq:sv_model_eq_2} z_{t} &\sim N(0,1), \, t = 1, \dots, T, \\ 
    \label{eq:sv_model_eq_3} x_t \mid z_t,x_{t-1} &= x_{t-1} + \sigma z_t, \, t = 1, \dots, T, \\ 
    \label{eq:sv_model_eq_4} y_t \mid z_t, x_{t-1} &\sim N( \alpha z_t \exp(x_{t-1} / 2), (1-\alpha^2) \exp(x_{t-1})), \, t = 1, \dots, T. 
\end{align}
To finalize the model specification, the priors 
$x_0 \sim N(\mu_0, \sigma_0^2)$, 
$\sigma^2 \sim 0.1 / \chi^2_{10}$ and $\alpha \sim \text{Unif}(-1,1)$ are specified together with the following transformations to obtain unconstrained parameters: 
\begin{equation*}
     \sigma = \exp(\upsilon) , \;
    \alpha = -1 + \frac{2 \exp(\eta)}{1 + \exp(\eta)}. 
\end{equation*}
Here, $\mu_0 \approx -8.54$ based on the logarithm of the variance of the observation vector $y_{1:T}$ and $\sigma_0$ is chosen to be 10. Therefore, the vector of sampled variables corresponds to $\theta = (\eta, \upsilon, x_0, z_{1:T})$. Funnel-effects stemming from sampling $\sigma$ jointly with $x_{0:T}$ are removed by the introduction of the driving variables $z_t$. However, $\alpha$ also interacts with both the location and scale of $z_t$ under the posterior distribution, and the model is therefore expected to exhibit funnel-like behaviour which is not easily removed through reparameterizations. Using the above parameterization, $\eta$ mixes substantially slower than the remaining components when using HMC-like methods.

For this particular example, only the scenarios of fixed $h$ GHMC and fixed $h$ and $m$ WALNUTS are considered. In both cases, the macro step size $h$ is chosen to be quite large. In other words, reduction of reversibility failures through increasing the prevalence of $c=c_{\min}$ is deliberately not attempted to achieve a situation where micro step sizes are chosen fully by the step size controller with $P(c=c_{\min})$ close to zero. As such, this situation would give more insight into the relative reversibility failure prevalence also in the case where the relevant micro step sizes are much smaller than the macro step size. In all cases, $\delta$ is tuned in order to obtain a desired acceptance probability/log-weight variability.

\subsubsection{GHMC}

\begin{table}
    \centering
    \begin{tabular}{lcccc}
    \hline
         Criterion & $\delta$ & Reversibility rate & Metropolis rate &  $N_{\text{ode}}$ (millions) \\
         \toprule 
        \multicolumn{5}{c}{GHMC ($L=1$), $h=1$}\\ 
        \hline 
         H & 0.0002 & 91.0\% & 91.0\% & 1583 \\
         RKNF & 0.06 & 95.6\% & 91.0\% & 50 \\
        MRF  & 0.1 & 92.0\% & 91.9\% & 740 \\
        \hline
        
    \end{tabular}%
    \caption{Diagnostic results for the stochastic volatility model defined by \eqref{eq:sv_model_eq_2}-\eqref{eq:sv_model_eq_4} using GHMC. Results are based on 10 independent chains, each with 10000 iterations. Reversibility rate gives the proportion of transitions with $\prod_{j=1}^L w_j' = 1$ (see Equation \eqref{eq:metropolis_adapt_standard_hmc}) and Metropolis rate is the estimated expected value of (\ref{eq:metropolis_adapt_standard_hmc}). $N_{\text{ode}}$ is the number of gradient evaluations required to obtain the samples.}
    \label{tab:sv_hmc_reversibility}
\end{table}

\begin{table}
\centering
\begin{tabular}{cccccccccc}
\toprule
 & \multicolumn{3}{c}{Posterior Mean (Posterior SD)} & \multicolumn{3}{c}{{ESS / 100}} & \multicolumn{3}{c}{$\text{ESS} * 1000000 $/ $N_{\text{ode}}$} \\
 \cmidrule(lr){1-1} \cmidrule(lr){2-4} \cmidrule(lr){5-7} \cmidrule(lr){8-10}
 Criterion  & H & RKNF & MRF  & H & RKNF & MRF  & H & RKNF  & MRF  \\
 \cmidrule(lr){1-1} \cmidrule(lr){2-4} \cmidrule(lr){5-7} \cmidrule(lr){8-10}
Parameter &  &  &  &  &  & & & & \\
\midrule 
\multicolumn{10}{c}{GHMC ($L=1$), $h=1$}\\
\hline \\ 
$\alpha$ & -0.75 (0.05) & -0.75 (0.04) & -0.75 (0.05) & 
1.88 & 2.04 & 1.70 & 
0.1 & 4.1 &  0.2  \\
$\sigma$ & 0.13 (0.01) & 0.13 (0.01) & 0.13 (0.01) & 
70.98 & 75.68 & 51.89 & 
4.5 & 152.3 & 7.0 \\
$x_0$ & -8.52 (0.33) & -8.51 (0.33) & -8.51 (0.33) & 
838.11 & 847.85 & 847.71 & 
52.9 & 1705.9 & 114.6 \\
$z_1$ & 0.02 (0.67) & 0.02 (0.67) & 0.02 (0.66) & 
758.57 & 755.62 & 766.92 & 
47.9 & 1520.3 & 103.6 \\
$z_T$ & 0.30 (0.68) & 0.30 (0.67) & 0.30 (0.66) & 
759.88 & 754.07 & 770.40 & 
48.0 & 1517.2 & 104.1 \\
\bottomrule
\end{tabular}%
\caption{ESS results for the stochastic volatility model defined by \eqref{eq:sv_model_eq_2}-\eqref{eq:sv_model_eq_4} using GHMC. Results are based on 10 independent chains, each with 10000 iterations. Effective sample sizes are calculated across the chains.}
\label{tab:sv_hmc_ess}
\end{table}

For GHMC with $h = 1$, $\delta$ is tuned to achieve a Metropolis acceptance rate close to 90\%. Ten independent chains are generated, where 10000 iterations are performed for both the warmup and the sampling period for each chain. Table \ref{tab:sv_hmc_reversibility} shows that the reversibility rate is almost identical to the obtained Metropolis rate for the Hamiltonian criterion, indicating again that almost all rejections are caused by the reversibility failures of the adaptive step size procedures. Interestingly, a similar behaviour is observed for the MRF criterion too.  The RKNF criterion appears to have rejections based on both reversibility issues and standard Metropolis rejections. Thus, in order to achieve the desired target of Metropolis acceptance rate, the RKNF criterion requires far fewer gradient evaluations. This is reflected in the performance results presented in Table \ref{tab:sv_hmc_ess}. Although the raw ESS values are rather comparable across the three approaches, the RKNF approach is indeed the most efficient when considering the number of ESS per gradient evaluation. 

\subsubsection{WALNUTS}

\begin{table}
    \centering
\begin{tabular}{lccccc}
        \hline
         Criterion & $h$ & $\delta$ & $P(\text{\# dead ends}\geq 1)$ &  $P(\Delta \leq 0.2)$ & $N_{\text{ode}}$ (in millions) \\
         \toprule
         \multicolumn{6}{c}{WALNUTS, fixed $h=0.25$, $m = 5$}\\ \hline
         H & 0.25 & 0.13 & 86.8\% &  79.0\% & 248 \\
         RKNF & 0.25 & 0.017 & 17.8\% &  80.0\% & 305\\
        MRF & 0.25 & 0.5 & 45.5\% & 79.4\% & 471 \\
        \hline
    \end{tabular}%
\caption{Diagnostic results for WALNUTS applied to the stochastic volatility model  \eqref{eq:sv_model_eq_2}-\eqref{eq:sv_model_eq_4}. Here $\Delta = \max_{j \in \mathcal C_0} H(\theta_j, \rho_j) - \min_{j \in \mathcal C_0} H(\theta_j, \rho_j)$. 
}
\label{tab:diag_sv_multinomial}
\end{table}

\begin{table}
\centering
\begin{tabular}{cccccccccc}
\toprule
 & \multicolumn{3}{c}{Posterior Mean (Posterior SD)} & \multicolumn{3}{c}{{ESS / 100}} & \multicolumn{3}{c}{$\text{ESS} * 1000000 $/ $N_{\text{ode}}$} \\
 \cmidrule(lr){1-1} \cmidrule(lr){2-4} \cmidrule(lr){5-7} \cmidrule(lr){8-10}
 Criterion  & H & RKNF & MRF  & H & RKNF & MRF  & H & RKNF  & MRF  \\
 \cmidrule(lr){1-1} \cmidrule(lr){2-4} \cmidrule(lr){5-7} \cmidrule(lr){8-10}
Parameter &  &  &  &  &  & & & & \\
\midrule 
\multicolumn{10}{c}{WALNUTS, fixed $h = 0.25$, $m = 5$}\\
\hline \\ 
$\alpha$ & -0.75 (0.05) & -0.75 (0.05) & -0.75 (0.04) & 
11.72 & 15.11 & 14.34 & 
4.7 & 5.0 &  3.0  \\
$\sigma$ & 0.13 (0.01) & 0.13 (0.01) & 0.13 (0.01) & 
240.83 & 379.31 & 341.30 & 
97.0 & 124.5 & 72.4 \\
$x_0$& -8.52 (0.33) & -8.51 (0.33) & -8.51 (0.33) & 
508.94 & 1367.02 & 989.40 & 
205.0 & 448.8 & 210.0 \\
$z_1$& 0.02 (0.66) & 0.02 (0.66) & 0.02 (0.66) & 
512.18 & 1986.52 & 1212.67 & 
206.3 & 652.1 & 257.4 \\
$z_T$ & 0.30 (0.67) & 0.30 (0.66) & 0.31 (0.66) & 
504.37 & 1855.95 & 1174.35 & 
203.2 & 609.3 & 249.2 \\
\bottomrule
\end{tabular}%
\caption{Performance results for WALNUTS applied to the stochastic volatility model \eqref{eq:sv_model_eq_2}-\eqref{eq:sv_model_eq_4}. See caption of Table \ref{tab:diag_sv_multinomial} for further details.}
\label{tab:ESS_sv_multinomial}
\end{table}

Next, the stochastic volatility model is examined using WALNUTS with a fixed macro step size of $h = 0.25$ and $m = 5$, and $\delta$ is tuned so that $P(\Delta \leq 0.2)$ is roughly 0.8 similar to the previous experiments. Table \ref{tab:diag_sv_multinomial} provides diagnostic statistics, with the RKNF approach leading to the smallest ratio of orbits with at least one dead end due to reversibility issues. The MRF criterion has again a higher ratio of orbits with at least one dead end than RKNF, but far lower than for the Hamiltonian criterion, for which roughly 87\% of the 100000 orbits will experience a reversibility issue in one of the directions. 

Consequently, based on the results of Table \ref{tab:ESS_sv_multinomial}, the RKNF criterion leads to the largest values of raw ESS and ESS per gradient evaluation. However, although the raw ESS of the MRF criterion is larger, it uses roughly twice as many gradient evaluations as the Hamiltonian criterion. 
Therefore, the reduction in the reversibility failure ratio only leads to a slight increase in the number of ESS per gradient evaluation for the latent variables. 

To conclude, for this particular, more realistic model, and without the aid of a carefully tuned macro step size, it clearly appears that RKNF provides substantial computational advantage, even for multinomial methods such as WALNUTS, especially for the latent states.

\section{Discussion}\label{sec:discussion}

This paper has carefully explained and illustrated the locally adaptive step size procedure of \cite{bou2025within} under more general assumptions. Also, further error measures with the aim of reducing reversibility failures have been developed and introduced. It is also shown how to introduce the locally adaptive step sizes into standard HMC and Metropolis-corrected kinetic Langevin/GHMC. Finally, a numerical study of the relative merits of the proposed error measures in the context of different Hamiltonian flow-based MCMC methods has been presented.



It is believed that any general purpose Hamiltonian flow-based MCMC code should employ such adaptive step size procedures. In particular, the most significant finding from the numerical study is that non-multinomial HMC methods such as the standard HMC and kinetic Langevin/GHMC may benefit more from improved error measures. In contrast, energy errors seem to be working well for certain models using the conventional multinomial HMC. Still, for the two models considered in Sections \ref{sec:ar1} and \ref{sec:funnel}, the orbit sizes are relatively modest, and it is conceivable that other models which would require larger orbits could benefit more from the flow-based error measures. For instance, the results from the stochastic volatility model examined in Section \ref{sec:sv_model} already show that one of the flow-based error measures outperforms the energy error when using WALNUTS. 

The paper should to some degree be read as an invitation to further work, where optimal tuning for given combinations of error measures and overarching MCMC methods should be considered. In addition, extending the locally adaptive step sizes also to other forms of flow, as indicated in Section \ref{sec:further_remarks}, and the BAOAB integrator for Langevin-type models are currently being investigated.

\bibliographystyle{plainnat}
\bibliography{main}

\newpage

\appendix
\begin{center}
\Large Supplementary material to "Locally adaptive step sizes for Hamiltonian Monte Carlo methods" by Jimmy Huy Tran, Tore Selland Kleppe and Nawaf Bou-Rabee
\end{center}

\section{Proofs} \label{appendix:proofs}
\subsection{Proof of Proposition \ref{prop_flow_criterion}} \label{appendix:proof_prop_flow_criterion}
\begin{proof}
Recall that $\bar\Phi_{h/l}$ denotes the embedded third-order RKN
comparison map defined by
\eqref{eq:third_order_integrator_flow_nr1}-\eqref{eq:third_order_integrator_flow_nr3}. This map is used only to
estimate the error of a leapfrog microstep; the numerical trajectory
itself is advanced using the leapfrog map $\Phi_{h/l}$.

Let
\[
z_0=(\theta,\rho),
\qquad
z_k:=\Phi_{h/l}^k(z_0),
\qquad
k=0,\ldots,l.
\]
For each $k=1,\ldots,l$, define
\begin{align*}
    f_k
    &:=
    \left\|
        z_k-\bar\Phi_{h/l}(z_{k-1})
    \right\|,\\
    b_k
    &:=
    \left\|
        \mathcal F(z_{k-1})
        -
        \bar\Phi_{h/l}\circ\mathcal F(z_k)
    \right\|.
\end{align*}
Thus,
\[
\kappa_{\mathrm{RKNF}}(z_0,h,l)
=
\max_{1\leq k\leq l}\max\{f_k,b_k\}.
\]

Now consider the leapfrog trajectory beginning at the
momentum-flipped endpoint of the original trajectory:
\[
\widetilde z_0:=\mathcal F(z_l),
\qquad
\widetilde z_j:=\Phi_{h/l}^j(\widetilde z_0),
\qquad
j=0,\ldots,l.
\]
The time reversibility of the leapfrog integrator implies that
\[
\Phi_{h/l}^j\circ\mathcal F
=
\mathcal F\circ\Phi_{h/l}^{-j}.
\]
Consequently,
\begin{align*}
    \widetilde z_j
    &=
    \Phi_{h/l}^j\circ\mathcal F\circ
    \Phi_{h/l}^l(z_0)\\
    &=
    \mathcal F\circ\Phi_{h/l}^{-j}\circ
    \Phi_{h/l}^l(z_0)\\
    &=
    \mathcal F\circ\Phi_{h/l}^{l-j}(z_0)\\
    &=
    \mathcal F(z_{l-j}).
\end{align*}

For the trajectory $\{\widetilde z_j\}_{j=0}^l$, define the
corresponding error terms, for $j=1,\ldots,l$, by
\begin{align*}
    \widetilde f_j
    &:=
    \left\|
        \widetilde z_j
        -
        \bar\Phi_{h/l}(\widetilde z_{j-1})
    \right\|,\\
    \widetilde b_j
    &:=
    \left\|
        \mathcal F(\widetilde z_{j-1})
        -
        \bar\Phi_{h/l}\circ\mathcal F(\widetilde z_j)
    \right\|.
\end{align*}
Using $\widetilde z_j=\mathcal F(z_{l-j})$, one obtains
\begin{align*}
    \widetilde f_j
    &=
    \left\|
        \mathcal F(z_{l-j})
        -
        \bar\Phi_{h/l}\circ\mathcal F(z_{l-j+1})
    \right\|\\
    &=
    b_{l-j+1},
\end{align*}
and, using $\mathcal F\circ\mathcal F=\operatorname{Id}$,
\begin{align*}
    \widetilde b_j
    &=
    \left\|
        \mathcal F\circ\mathcal F(z_{l-j+1})
        -
        \bar\Phi_{h/l}\circ\mathcal F\circ
        \mathcal F(z_{l-j})
    \right\|\\
    &=
    \left\|
        z_{l-j+1}
        -
        \bar\Phi_{h/l}(z_{l-j})
    \right\|\\
    &=
    f_{l-j+1}.
\end{align*}

Since $j\mapsto l-j+1$ is a bijection of
$\{1,\ldots,l\}$, it follows that
\begin{align*}
    \kappa_{\mathrm{RKNF}}(\widetilde z_0,h,l)
    &=
    \max_{1\leq j\leq l}
    \max\{\widetilde f_j,\widetilde b_j\}\\
    &=
    \max_{1\leq j\leq l}
    \max\{b_{l-j+1},f_{l-j+1}\}\\
    &=
    \max_{1\leq k\leq l}\max\{f_k,b_k\}\\
    &=
    \kappa_{\mathrm{RKNF}}(z_0,h,l).
\end{align*}
Therefore, $\kappa_{\mathrm{RKNF}}$ is reversible in the sense of
\eqref{eq:error_measure_reversible_def}.
\end{proof}

\subsection{Proof of Proposition \ref{prop_bs_criterion}} \label{appendix:proof_prop_bs_criterion}

\begin{proof}
Let $z=(\theta,\rho)$, and define the fine forward endpoint,
the coarse forward endpoint, and the coarse endpoint computed from
the momentum-flipped fine endpoint, respectively, by
\begin{align*}
    z_f
    &:=
    \Phi_{h/l}^{l}(z),\\
    z_c
    &:=
    \Phi_{2h/l}^{l/2}(z),\\
    z_b
    &:=
    \Phi_{2h/l}^{l/2}\circ\mathcal F(z_f).
\end{align*}
With this notation, the MRF error measure is
\begin{equation*}
\label{eq:mrf-proof-forward}
    \kappa_{\mathrm{MRF}}(z,h,l)
    =
    \max\left\{
        \left\|z_f-z_c\right\|,
        \left\|\mathcal F(z)-z_b\right\|
    \right\}.
\end{equation*}

To verify reversibility, consider the momentum-flipped fine endpoint
\[
\widetilde z:=\mathcal F(z_f).
\]
Define the corresponding fine and coarse endpoints starting from
$\widetilde z$ by
\begin{align*}
    \widetilde z_f
    &:=
    \Phi_{h/l}^{l}(\widetilde z),\\
    \widetilde z_c
    &:=
    \Phi_{2h/l}^{l/2}(\widetilde z),\\
    \widetilde z_b
    &:=
    \Phi_{2h/l}^{l/2}\circ
    \mathcal F(\widetilde z_f).
\end{align*}

By the time reversibility of the leapfrog integrator,
\[
\Phi_{h/l}^{l}\circ\mathcal F\circ
\Phi_{h/l}^{l}
=
\mathcal F.
\]
Therefore,
\[
\widetilde z_f
=
\Phi_{h/l}^{l}\circ\mathcal F(z_f)
=
\Phi_{h/l}^{l}\circ\mathcal F\circ
\Phi_{h/l}^{l}(z)
=
\mathcal F(z).
\]
Moreover, by the definitions of $z_b$ and $\widetilde z$,
\[
\widetilde z_c
=
\Phi_{2h/l}^{l/2}(\widetilde z)
=
\Phi_{2h/l}^{l/2}\circ\mathcal F(z_f)
=
z_b.
\]
Finally, since
$\widetilde z_f=\mathcal F(z)$,
\[
\widetilde z_b
=
\Phi_{2h/l}^{l/2}\circ
\mathcal F(\widetilde z_f)
=
\Phi_{2h/l}^{l/2}\circ
\mathcal F\circ\mathcal F(z)
=
\Phi_{2h/l}^{l/2}(z)
=
z_c.
\]

Applying the definition of the MRF error measure at
$\widetilde z$ gives
\begin{align*}
    \kappa_{\mathrm{MRF}}(\widetilde z,h,l)
    &=
    \max\left\{
        \left\|\widetilde z_f-\widetilde z_c\right\|,
        \left\|\mathcal F(\widetilde z)-\widetilde z_b\right\|
    \right\}\\
    &=
    \max\left\{
        \left\|\mathcal F(z)-z_b\right\|,
        \left\|z_f-z_c\right\|
    \right\}\\
    &=
    \kappa_{\mathrm{MRF}}(z,h,l).
\end{align*}
Thus, $\kappa_{\mathrm{MRF}}$ is reversible in the sense of
\eqref{eq:error_measure_reversible_def}.
\end{proof}

\section{Order verification for the embedded RKN map}
\label{appendix:rkn_order}

\begin{lemma}\label{lem:rkn_order}
Assume that $\log\widetilde{\pi}$ is four times continuously
differentiable in a neighborhood of the exact Hamiltonian trajectory.
Then, the map $\bar\Phi_\epsilon$ defined by
\eqref{eq:third_order_integrator_flow_nr1}--
\eqref{eq:third_order_integrator_flow_nr3} satisfies
\[
    \bar\Phi_\epsilon(\theta,\rho)
    =
    \varphi_\epsilon(\theta,\rho)
    +
    O(\epsilon^4)
    \qquad\text{as }\epsilon\to0.
\]
Thus, $\bar\Phi_\epsilon$ is a third-order one-step method. Moreover,
because leapfrog is a second-order method,
\[
    \Phi_\epsilon(\theta,\rho)
    -
    \bar\Phi_\epsilon(\theta,\rho)
    =
    O(\epsilon^3),
\]
so their difference provides an estimate of the local leapfrog error.
\end{lemma}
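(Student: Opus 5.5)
The plan is to compare Taylor expansions in $\epsilon$ of the exact flow and of $\bar\Phi_\epsilon$ through order $\epsilon^3$, with a remainder of order $\epsilon^4$. Write $f(\theta):=M^{-1}\nabla_\theta\log\tilde\pi(\theta)$ and $v:=M^{-1}\rho$. Hamilton's equations give $\ddot\theta=f(\theta)$. Differentiating along the flow then yields the exact expansions
\begin{align*}
\theta(\epsilon)&=\theta+\epsilon v+\tfrac{\epsilon^2}{2}f+\tfrac{\epsilon^3}{6}f'v+O(\epsilon^4),\\
M^{-1}\rho(\epsilon)&=v+\epsilon f+\tfrac{\epsilon^2}{2}f'v+\tfrac{\epsilon^3}{6}\bigl(f''(v,v)+f'f\bigr)+O(\epsilon^4).
\end{align*}
Here $f,f',f''$ are evaluated at $\theta$. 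The $C^4$ assumption makes the Taylor remainders uniform on a neighbourhood of the trajectory.

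Next I expand the internal stages of $\bar\Phi_\epsilon$. From \eqref{eq:leapfrog_update_theta}, $\theta'=\theta+\epsilon v+\tfrac{\epsilon^2}{2}f$ exactly. Hence $f(\theta')=f+\epsilon f'v+\tfrac{\epsilon^2}{2}(f'f+f''(v,v))+O(\epsilon^3)$. Using \eqref{eq:leapfrog_update_rho}, $M^{-1}(\rho-\rho')=-\tfrac{\epsilon}{2}(f+f(\theta'))=-\epsilon f+O(\epsilon^2)$. Substituting into \eqref{eq:third_order_integrator_flow_nr1} gives $\theta_m=\theta+\tfrac{\epsilon}{2}v+\tfrac{\epsilon^2}{4}f-\tfrac{\epsilon^2}{8}f+O(\epsilon^3)=\theta+\tfrac{\epsilon}{2}v+\tfrac{\epsilon^2}{8}f+O(\epsilon^3)$. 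This matches the exact midpoint to order $\epsilon^2$, which is what a Simpson-type quadrature node needs. Therefore $f(\theta_m)=f+\tfrac{\epsilon}{2}f'v+\tfrac{\epsilon^2}{8}(f'f+f''(v,v))+O(\epsilon^3)$.

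I then verify the two outputs. For $\theta''$ in \eqref{eq:third_order_integrator_flow_nr2}, the bracket is $\tfrac13f+\tfrac16f+\tfrac{\epsilon}{6}f'v+O(\epsilon^2)$. So $\theta''=\theta+\epsilon v+\tfrac{\epsilon^2}{2}f+\tfrac{\epsilon^3}{6}f'v+O(\epsilon^4)$, which matches the exact expansion. For $M^{-1}\rho''$ in \eqref{eq:third_order_integrator_flow_nr3}, the weights $\tfrac16,\tfrac23,\tfrac16$ sum to $1$, giving $\epsilon f$. The $\epsilon^2$ coefficient is $(\tfrac23\cdot\tfrac12+\tfrac16)f'v=\tfrac12f'v$. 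The $\epsilon^3$ coefficient is $(\tfrac23\cdot\tfrac18+\tfrac16\cdot\tfrac12)(f'f+f''(v,v))=\tfrac16(\cdot)$. These match the exact series, so $\bar\Phi_\epsilon=\varphi_\epsilon+O(\epsilon^4)$.

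The main obstacle is bookkeeping: the $\epsilon^2$ term of $f(\theta_m)$ must be accurate, which in turn requires $\theta_m$ to be accurate through $\epsilon^2$. That is exactly why the correction $\tfrac{\epsilon}{8}M^{-1}(\rho-\rho')$ appears, and I would check it carefully. The second claim then follows from the triangle inequality. Leapfrog has local error $\Phi_\epsilon-\varphi_\epsilon=O(\epsilon^3)$, being a second-order method \citep{hairer2008solving}, and $\bar\Phi_\epsilon-\varphi_\epsilon=O(\epsilon^4)$. Hence $\Phi_\epsilon-\bar\Phi_\epsilon=O(\epsilon^3)$, and it equals the leading leapfrog local error up to $O(\epsilon^4)$.
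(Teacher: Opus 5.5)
Your proof is correct and follows essentially the same route as the paper. Both arguments Taylor-expand the exact flow, the leapfrog stage $\theta'$ and the midpoint $\theta_m$ (accurate through order $\epsilon^2$), then match the coefficients of $\theta''$ and $M^{-1}\rho''$ through order $\epsilon^3$. The only cosmetic difference is the second claim: you get the $O(\epsilon^3)$ gap between leapfrog and the RKN map from the triangle inequality and the known leapfrog local error, whereas the paper leaves it implicit in its explicit expansion of the leapfrog velocity $v'$, whose cubic coefficient is $\epsilon^3/4$ rather than the exact $\epsilon^3/6$.
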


\begin{proof}
For the proof, write
\[
    v=M^{-1}\rho,
    \qquad
    a(\theta)=M^{-1}\nabla_\theta\log\widetilde{\pi}(\theta),
\]
and let
\[
    A=Da(\theta),
    \qquad
    B=D^2a(\theta).
\]
Taylor expansion of the exact Hamiltonian flow gives
\begin{align*}
    \theta(\epsilon)
    &=
    \theta
    +\epsilon v
    +\frac{\epsilon^2}{2}a(\theta)
    +\frac{\epsilon^3}{6}Av
    +O(\epsilon^4),\\
    v(\epsilon)
    &=
    v
    +\epsilon a(\theta)
    +\frac{\epsilon^2}{2}Av
    +\frac{\epsilon^3}{6}
        \left(
            Aa(\theta)+B[v,v]
        \right)
    +O(\epsilon^4),
\end{align*}
where $B[\cdot,\cdot]$ denotes the vector-valued bilinear form induced by the third derivative tensor $B$.

Let $(\theta',\rho')=\Phi_\epsilon(\theta,\rho)$ and set
$v'=M^{-1}\rho'$. The leapfrog position update gives
\[
    \theta'
    =
    \theta+\epsilon v+\frac{\epsilon^2}{2}a(\theta).
\]
Consequently,
\[
    a(\theta')
    =
    a(\theta)
    +\epsilon Av
    +\frac{\epsilon^2}{2}
        \left(
            Aa(\theta)+B[v,v]
        \right)
    +O(\epsilon^3).
\]
The leapfrog momentum update therefore implies
\[
    v'
    =
    v+\epsilon a(\theta)
    +\frac{\epsilon^2}{2}Av
    +\frac{\epsilon^3}{4}
        \left(
            Aa(\theta)+B[v,v]
        \right)
    +O(\epsilon^4).
\]

Using the definition of $\theta_m$, one obtains
\[
    \theta_m
    =
    \theta
    +\frac{\epsilon}{2}v
    +\frac{\epsilon^2}{8}a(\theta)
    +O(\epsilon^3),
\]
and hence
\[
    a(\theta_m)
    =
    a(\theta)
    +\frac{\epsilon}{2}Av
    +\frac{\epsilon^2}{8}
        \left(
            Aa(\theta)+B[v,v]
        \right)
    +O(\epsilon^3).
\]
Substituting into
\eqref{eq:third_order_integrator_flow_nr2} gives
\[
    \theta''
    =
    \theta
    +\epsilon v
    +\frac{\epsilon^2}{2}a(\theta)
    +\frac{\epsilon^3}{6}Av
    +O(\epsilon^4),
\]
while substituting into
\eqref{eq:third_order_integrator_flow_nr3} gives
\[
    M^{-1}\rho''
    =
    v
    +\epsilon a(\theta)
    +\frac{\epsilon^2}{2}Av
    +\frac{\epsilon^3}{6}
        \left(
            Aa(\theta)+B[v,v]
        \right)
    +O(\epsilon^4).
\]
These expansions agree with the exact-flow expansions through terms
of order $\epsilon^3$, proving the result.
\end{proof}

\section{Correctness of locally adaptive multinomial HMC}
\label{appendix:multinomial-correctness}

Throughout this appendix, $\mathcal L$ is the deterministic rule \eqref{eq:gen_I_def} (or its truncated version), $H\circ\mathcal F=H$, and the momentum is refreshed from $N(0,M)$ at the beginning of every iteration and discarded at the end.

\paragraph{Orbits.}
Given a current state $z_{[0]}=(\theta_{[0]},\rho_{[0]})$ and a window $(a,b)$ with $a\leq 0\leq b$, the orbit $\mathcal O=\{z_{[k]}\}_{k=a}^{b}$ is defined by
\begin{align*}
  z_{[k]} &= \Phi^{l_{k-1,k}}_{h/l_{k-1,k}}(z_{[k-1]}),\quad
  l_{k-1,k}=\mathcal L(z_{[k-1]},h,\delta), \qquad k=1,\dots,b,\\
  z_{[k]} &= \mathcal F\circ\Phi^{l_{k,k+1}}_{h/l_{k,k+1}}\circ\mathcal F(z_{[k+1]}),\quad
  l_{k,k+1}=\mathcal L(\mathcal F(z_{[k+1]}),h,\delta), \qquad k=-1,\dots,a,
\end{align*}
with reversibility indicators $w'_{[k-1,k]}=\mathbf 1\{\mathcal L(z_{[k-1]},h,\delta)=\mathcal L(\mathcal F(z_{[k]}),h,\delta)\}$, $k=a+1,\dots,b$, reversibility weights $r_{[k]}$ as in \eqref{eq:reversibility_weights}, and
\[
    \mathcal C_0 := \left\{ k\in\{a,\ldots,b\}:r_{[k]}=1 \right\}.
\]
Since the indicators are attached to consecutive pairs, $\mathcal C_0$ is an interval of integers containing $0$; it is the set of indices that can be reached from index zero without crossing a macro step that fails its reversibility check. For conventional multinomial HMC, the window is $a=b-L$ with $b$ uniformly distributed on $\{0,\dots,L\}$, and the next state is $z_{[J]}$, where
\begin{equation}
    \mathbb P(J=k\mid\mathcal O)
    =
    \frac{r_{[k]}\exp\bigl(-H(z_{[k]})\bigr)}{S(\mathcal O)},
    \qquad
    S(\mathcal O):=\sum_{n=a}^{b} r_{[n]}\exp\bigl(-H(z_{[n]})\bigr).
    \label{eq:multinomial-correctness-weights}
\end{equation}

\begin{lemma}[Re-indexing] \label{lem:reindexing}
Let $\mathcal O=\{z_{[k]}\}_{k=a}^{b}$ be an orbit with origin $z_{[0]}$ and let $j\in\mathcal C_0$. Let $\mathcal O^*=\{z^*_{[s]}\}_{s=a^*}^{b^*}$ be the orbit with origin $z^*_{[0]}:=\mathcal F(z_{[j]})$ and window $(a^*,b^*)=(j-b,\,j-a)$. Then
\begin{equation*}
  z^*_{[s]}=\mathcal F(z_{[j-s]}),\qquad
  w'^{*}_{[s-1,s]}=w'_{[j-s,\,j-s+1]},\qquad
  r^*_{[s]}=r_{[j-s]},\qquad s=a^*,\dots,b^*,
  \label{eq:reindexing_identities}
\end{equation*}
and in particular $\mathcal C_0^*=\{j-k:k\in\mathcal C_0\}$ and $S(\mathcal O^*)=S(\mathcal O)$.
\end{lemma}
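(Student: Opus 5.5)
The proof rests on the conditional time reversibility of each macro step. For any $z$ and $l$, $\mathcal F\circ\Phi^{l}_{h/l}\circ\mathcal F=\Phi^{-l}_{h/l}$ by \eqref{eq:leapfrog_time_reversibility}. We will prove $z^*_{[s]}=\mathcal F(z_{[j-s]})$ by induction on $|s|$, separately for $s>0$ (forward building in $\mathcal O^*$) and $s<0$ (backward building). Then we transfer the indicators and the weights.

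\textbf{Step 1 (positions).} The base case $s=0$ holds by definition. For the forward direction, suppose $z^*_{[s-1]}=\mathcal F(z_{[j-s+1]})$ with $1\le s\le b^*=j-a$. Then $l^*_{s-1,s}=\mathcal L(\mathcal F(z_{[j-s+1]}),h,\delta)$. This is exactly the micro step count $\tilde l$ that the original orbit uses to go between indices $j-s$ and $j-s+1$, but only when that macro step was built in the backward direction ($j-s+1\le 0$). If instead it was built forward ($j-s\ge0$), the original count is $\mathcal L(z_{[j-s]},h,\delta)$.

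The key observation is that $j\in\mathcal C_0$, so every macro step between $0$ and $j$ is reversible. For those steps the forward and backward counts coincide: $w'=1$ means $\mathcal L(z_{[k-1]})=\mathcal L(\mathcal F(z_{[k]}))$. Steps outside the segment between $0$ and $j$ are built in the same direction in both orbits relative to the reflected indexing, so there the counts agree trivially.

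\textbf{Step 1 (applying reversibility).} Once the counts agree, $z^*_{[s]}=\Phi^{l}_{h/l}(\mathcal F(z_{[j-s+1]}))=\mathcal F(\Phi^{-l}_{h/l}(z_{[j-s+1]}))=\mathcal F(z_{[j-s]})$, using the reversibility identity. The backward direction $s<0$ is symmetric.

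I expect the main obstacle to be this case split. One has to track which steps are built forward versus backward in each orbit and check that a mismatch in direction occurs only on the segment between $0$ and $j$. I would organise it by the sign of $k=j-s$ relative to $0$ and $j$, which gives three index ranges. The counts trivially match for $k\ge\max(0,j)$ and for $k<\min(0,j)$, and the $w'=1$ condition handles the middle range.

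\textbf{Step 2 (indicators).} Given the positions, the indicator follows directly:
$w'^*_{[s-1,s]}=\mathbf 1\{\mathcal L(\mathcal F(z_{[j-s+1]}))=\mathcal L(\mathcal F\mathcal F(z_{[j-s]}))\}=w'_{[j-s,j-s+1]}$, using $\mathcal F\circ\mathcal F=\mathrm{Id}$.

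\textbf{Step 2 (weights).} The weight $r^*_{[s]}$ is the product of $w'^*$ over the steps between $0$ and $s$, which equals the product of $w'$ over the steps between $j-s$ and $j$. The product of $w'$ over the steps between $0$ and $j$ equals $r_{[j]}=1$, so we may multiply by it freely. Because all factors are $0/1$ and intervals telescope, the product between $j-s$ and $j$ times the product between $0$ and $j$ equals the product between $0$ and $j-s$, which is $r_{[j-s]}$. This needs one case check on the relative order of $0$, $j$ and $j-s$. Concretely, when $0$ lies between $j-s$ and $j$, the product over the union interval is the product of the two subproducts, and the factors from $0$ to $j$ are all $1$.

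\textbf{Step 3 (conclusions).} The identity for $\mathcal C^*_0$ follows immediately from $r^*_{[s]}=r_{[j-s]}$. Finally, $S(\mathcal O^*)=S(\mathcal O)$ follows from $r^*_{[s]}=r_{[j-s]}$, $H\circ\mathcal F=H$, and reindexing the sum by $k=j-s$, which maps $\{a^*,\dots,b^*\}$ bijectively onto $\{a,\dots,b\}$.
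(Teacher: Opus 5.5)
Your proposal is correct and follows essentially the same route as the paper. Both arguments run an induction on the positions via $\Phi^l_{h/l}\circ\mathcal F=\mathcal F\circ\Phi^{-l}_{h/l}$, with $w'=1$ supplying matching counts exactly on the macro steps between $0$ and $j$ (the only ones traversed in the opposite direction). They then use symmetry of the indicator in its two states, and note that $r^*_{[s]}$ and $r_{[j-s]}$ differ only by the all-ones product over the steps between $0$ and $j$.

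There is one small bookkeeping slip in your three-range plan. As written, you take $k=j-s$ as the upper endpoint for backward steps of $\mathcal O^*$. For $j\le -1$ this files the step between $-1$ and $0$ under ``trivially matching'', although it actually needs $w'_{[-1,0]}=1$. Labelling every macro step by its lower original endpoint makes your ranges $k<\min(0,j)$, $\min(0,j)\le k<\max(0,j)$ and $k\ge\max(0,j)$ exactly right.
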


\begin{proof}
Assume $j\geq 1$; the case $j\leq -1$ is symmetric and the case $j=0$ is trivial. Since $j\in\mathcal C_0$, $w'_{[k-1,k]}=1$ for $k=1,\dots,j$, that is,
\begin{equation}
  \mathcal L(\mathcal F(z_{[k]}),h,\delta)=l_{k-1,k},\qquad k=1,\dots,j.
  \label{eq:reindexing_counts}
\end{equation}
Recall also that time reversibility of the leapfrog integrator gives $\Phi^{l}_{h/l}\circ\mathcal F=\mathcal F\circ(\Phi^{l}_{h/l})^{-1}$ for every $l$.

\emph{Indices $0\leq s\leq j$.} $z^*_{[s]}=\mathcal F(z_{[j-s]})$ can be shown by induction on $s$. For $s=0$, this is the definition of $\mathcal O^*$. If it holds for $s-1$, then the forward count from $z^*_{[s-1]}=\mathcal F(z_{[j-s+1]})$ is $\mathcal L(\mathcal F(z_{[j-s+1]}),h,\delta)=l_{j-s,j-s+1}$ by \eqref{eq:reindexing_counts}, and with $l=l_{j-s,j-s+1}$,
\[
  z^*_{[s]}=\Phi^{l}_{h/l}(\mathcal F(z_{[j-s+1]}))=\mathcal F\big((\Phi^{l}_{h/l})^{-1}(z_{[j-s+1]})\big)=\mathcal F(z_{[j-s]}).
\]

\emph{Indices $s<0$} (original indices $k=j-s>j$). Backward steps of $\mathcal O^*$ from $z^*_{[s+1]}=\mathcal F(z_{[j-s-1]})$ use the count $\mathcal L(\mathcal F(z^*_{[s+1]}),h,\delta)=\mathcal L(z_{[j-s-1]},h,\delta)=l_{j-s-1,j-s}$, which is the original forward count, and produce
$\mathcal F\circ\Phi^{l}_{h/l}\circ\mathcal F(\mathcal F(z_{[j-s-1]}))=\mathcal F\big(\Phi^{l}_{h/l}(z_{[j-s-1]})\big)=\mathcal F(z_{[j-s]})$.

\emph{Indices $s>j$} (original indices $k=j-s<0$). Forward steps of $\mathcal O^*$ from $z^*_{[s-1]}=\mathcal F(z_{[j-s+1]})$ use the count $\mathcal L(\mathcal F(z_{[j-s+1]}),h,\delta)=l_{j-s,j-s+1}$, which is the original backward count, and produce
$\Phi^{l}_{h/l}(\mathcal F(z_{[j-s+1]}))=\mathcal F\big((\Phi^{l}_{h/l})^{-1}(z_{[j-s+1]})\big)=\mathcal F(z_{[j-s]})$.

Thus, only the macro steps between indices $0$ and $j$ are traversed in the opposite direction in $\mathcal O^*$; all other macro steps are traversed in the same direction as in $\mathcal O$, with the same micro step counts. For the indicators,
\[
  w'^{*}_{[s-1,s]}
  =\mathbf 1\{\mathcal L(z^*_{[s-1]},h,\delta)=\mathcal L(\mathcal F(z^*_{[s]}),h,\delta)\}
  =\mathbf 1\{\mathcal L(\mathcal F(z_{[j-s+1]}),h,\delta)=\mathcal L(z_{[j-s]},h,\delta)\}
  =w'_{[j-s,\,j-s+1]},
\]
where the last equality uses only that the indicator is a symmetric function of the pair of states. Hence, $r^*_{[s]}$ is the product of the original indicators over the macro steps between indices $j-s$ and $j$, while $r_{[j-s]}$ is the product over the macro steps between $j-s$ and $0$. 
Now, when $s > j$, $r_{[j-s]} = \prod_{n=j-s+1}^0 w'_{[n-1,n]}$ as $k = j - s < 0$. On the other hand,
\begin{align*}
    r^*_{[s]} &= \prod_{n^*=1}^s w'^*_{[n^*-1, n^*]} = \prod_{n^*=1}^s w'_{[j-n^*, j-n^*+1]} = \prod_{n = j-s+1}^{j} w'_{[n - 1, n]} \\
    &=\prod_{n=1}^j w'_{[n-1, n]} \prod_{n=j-s+1}^0 w'_{[n - 1, n]} \\
    &= (\prod_{n=1}^j w'_{[n-1, n]}) r_{[j-s]},
\end{align*}
where the order of the product in the first line from $n^*$ to $n$ is due to $s > j$ and $j > 0$. 
Because all indicators between $0$ and $j$ are equal to one, one arrives at $r_{[j-s]} = r^*_{[s]}$ as the two differ by the product of the weights between 0 and $j$. If $0 \leq s \leq j$, both contain weights between $0$ and $j$, so $r_{[j-s]} = r^*[s] = 1$. In a similar manner, when $s < 0$, $r_{[j - s]} = \prod_{n=1}^{j-s} w'_{[n-1, n]} = \prod_{n=1}^{j} w'_{[n-1, n]} \prod_{n=s+1}^{0} w'_{[j-n, j-n+1]}$ since $s < 0$ corresponds to $k = j -s > j$. However, $r^*[s] = \prod_{n^*=s+1}^0 w'^*_{[n^*-1,n^*]} = \prod_{n^*=s+1}^0 w'_{[j-n^*,j-n^*+1]}$. Therefore, $r_{[j-s]} = r^*_{[s]}$ as they differ again by the products of weights between 0 and $j$. In any case, both reduce either to 1 or the product over the macro steps between $j-s$ and the nearest endpoint of the interval $[0, j]$, so $r_{[j-s]} = r^*_{[s]}$ overall. 
Finally, the statements about $\mathcal C_0^*$ and $S$ follow, the latter using $H\circ\mathcal F=H$.
\end{proof}

\begin{prop} \label{prop:multinomial_correct}
The locally adaptive multinomial HMC transition, defined by refreshing $\rho\sim N(0,M)$, drawing $b$ uniformly on $\{0,\dots,L\}$, building $\mathcal O$ with $a=b-L$, drawing $J$ from \eqref{eq:multinomial-correctness-weights} and returning $\theta_{[J]}$, is a $\pi$-reversible Markov kernel on $\mathbb R^d$.
\end{prop}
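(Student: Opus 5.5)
We plan to prove detailed balance for the joint law of the initial phase-space state and the selected state, and then marginalize. Write $\mu(z)\propto\exp(-H(z))$ for the Boltzmann--Gibbs density on $\mathbb R^{2d}$. Since $\rho$ is refreshed from $N(0,M)$ and discarded at the end, it suffices to show the following. If $z_{[0]}\sim\mu$ and $z_{[J]}$ is produced by the orbit construction and \eqref{eq:multinomial-correctness-weights}, then the pair $(z_{[0]},\mathcal F(z_{[J]}))$ has the same law as its swap. Projecting onto positions then gives $\pi$-reversibility of the $\theta$-kernel, because $\mathcal F$ does not change $\theta$.

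\textbf{Decomposition over discrete choices.} For a bounded measurable test function $\phi$ on $\mathbb R^{2d}\times\mathbb R^{2d}$, expand
\[
\mathbb E\,\phi(z_{[0]},\mathcal F(z_{[J]}))=\sum_{b=0}^{L}\sum_{j=a}^{b}\frac{1}{L+1}\int \mu(z)\,\frac{r_{[j]}\exp(-H(z_{[j]}))}{S(\mathcal O)}\,\phi(z,\mathcal F(z_{[j]}))\,dz ,
\]
where $a=b-L$ and the orbit $\mathcal O$ is built from $z$. Only indices $j\in\mathcal C_0$ contribute, since $r_{[j]}=0$ otherwise. Fix $(b,j)$ and partition $\mathbb R^{2d}$ into the countably many measurable pieces on which the whole vector of micro-step counts $\mathbf l$ (forward and backward) is constant, and on which the indicators $r_{[j]}$ equal one. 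On such a piece, the map $T:z\mapsto\mathcal F(z_{[j]})$ is a fixed composition of leapfrog maps and $\mathcal F$. It is therefore a volume preserving bijection onto its image, by the measurability discussion after \eqref{eq:gen_I_def}.

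\textbf{Change of variables.} Substitute $z^*=T(z)$ and set $(b^*,j^*)=(j-a,\,j)$, so that $a^*=j-b$ and $b^*-a^*=L$. We need $b^*\in\{0,\dots,L\}$, which holds because $a\le j\le b$; hence $b^*$ is drawn with the same probability $1/(L+1)$. Lemma~\ref{lem:reindexing} now gives four facts:
\begin{itemize}
\item the orbit from $z^*$ with window $(a^*,b^*)$ is $\{\mathcal F(z_{[j-s]})\}$;
\item its index $j$ corresponds to $z^*_{[j]}=\mathcal F(z_{[0]})=\mathcal F(z)$;
\item $r^*_{[j]}=r_{[0]}=1$;
\item $S(\mathcal O^*)=S(\mathcal O)$.
\end{itemize}
The integrand is $\exp(-H(z))\exp(-H(z_{[j]}))/S(\mathcal O)$ times $\phi(z,T(z))$. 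Using $H\circ\mathcal F=H$, this becomes $\exp(-H(z^*))\exp(-H(z^*_{[j]}))/S(\mathcal O^*)$ times $\phi(\mathcal F(z^*_{[j]}),z^*)$, which is exactly the summand for $(b^*,j)$ with the arguments of $\phi$ swapped.

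\textbf{Bijectivity of the correspondence.} We must also check that the map $(b,j,z)\mapsto(b^*,j,z^*)$ is an involution between the relevant index and partition sets. In particular, the image of the piece where $\mathbf l$ is constant must be precisely the piece for the re-indexed orbit with the reversed count vector and $j\in\mathcal C_0^*$. The main obstacle is this bookkeeping: the partition pieces must match up exactly, with no double counting or missing mass. This is handled by applying Lemma~\ref{lem:reindexing} a second time, starting from $z^*$ with index $j\in\mathcal C_0^*$. That application returns the original orbit, because $\mathcal C_0^*=\{j-k:k\in\mathcal C_0\}$ contains $j-0=j$. Consequently $T$ restricted to $\{j\in\mathcal C_0\}$ is an involution up to the reindexing, and the change of variables is a bijection between pieces with unit Jacobian.

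Summing over $(b,j)$ then yields the symmetry, and hence reversibility. The exceptional null set where $\mathcal L$ is undefined can be handled by rejection, as noted earlier in the paper.
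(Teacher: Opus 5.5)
Your proposal is correct and follows essentially the same route as the paper. The paper also fixes $(b,j)$ and uses $\Psi_{b,j}(z)=\mathcal F(z_{[j]})$ with $b^*=L+j-b$ (your $j-a$), obtains $\Psi_{b^*,j}\circ\Psi_{b,j}=\mathrm{id}$ on $E_{b,j}$ and piecewise volume preservation from Lemma~\ref{lem:reindexing}, and proves detailed balance for the phase-space kernel $z\mapsto\mathcal F(z_{[J]})$ before integrating out the momenta; the only difference is cosmetic, namely that you phrase detailed balance as exchangeability of $(z_{[0]},\mathcal F(z_{[J]}))$ via test functions rather than via sets $A,B$.
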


\begin{proof}
Fix $b\in\{0,\dots,L\}$ and $j\in\{b-L,\dots,b\}$. For an initial state $z\in\mathbb R^{2d}$, write $\mathcal O_b(z)$ for the orbit with origin $z$ and window $(b-L,b)$, and let $E_{b,j}:=\{z: r_{[j]}=1 \text{ in } \mathcal O_b(z)\}$. Define $\Psi_{b,j}:E_{b,j}\to\mathbb R^{2d}$ by $\Psi_{b,j}(z):=\mathcal F(z_{[j]})$. Let $b^*:=L+j-b$; then $b\mapsto b^*$ is a bijection of $\{b: b-L\leq j\leq b\}$ onto itself, and $(b^*-L,b^*)=(j-b,\,j-(b-L))$ is exactly the window of Lemma~\ref{lem:reindexing}. By that lemma, $\Psi_{b,j}(E_{b,j})\subset E_{b^*,j}$ and $\Psi_{b^*,j}\circ\Psi_{b,j}=\mathrm{id}$ on $E_{b,j}$, because $\mathcal F(z^*_{[j]})=\mathcal F(\mathcal F(z_{[0]}))=z$. Hence $\Psi_{b,j}$ is a bijection of $E_{b,j}$ onto $E_{b^*,j}$ with inverse $\Psi_{b^*,j}$. Moreover, $E_{b,j}$ is the countable disjoint union of the measurable sets on which the vector of micro step counts of $\mathcal O_b(z)$ is constant. Since $\mathcal L$ is measurable, and on each such set, $\Psi_{b,j}$ is the restriction of a fixed composition of leapfrog maps and $\mathcal F$, hence of a volume preserving map, $\Psi_{b,j}$ preserves the Lebesgue measure.

Let $P$ be the phase-space kernel that maps $z$ to $\mathcal F(z_{[J]})$; the momentum flip does not affect the position and is included so that the maps $\Psi_{b,j}$ are involutive in the above sense. Explicitly,
\[
  P(z,B)=\sum_{b=0}^{L}\frac{1}{L+1}\sum_{j=b-L}^{b}
  \mathbf 1_{E_{b,j}}(z)\,\frac{\exp(-H(z_{[j]}))}{S(\mathcal O_b(z))}\,
  \mathbf 1_B\bigl(\Psi_{b,j}(z)\bigr).
\]
For measurable $A,B\subset\mathbb R^{2d}$,
\[
  \int_A e^{-H(z)}P(z,B)\,dz
  =\sum_{b}\sum_{j}\frac{1}{L+1}\int_{A\cap E_{b,j}\cap\Psi_{b,j}^{-1}(B)}
  \frac{e^{-H(z)}e^{-H(z_{[j]})}}{S(\mathcal O_b(z))}\,dz .
\]
First, substitute $z'=\Psi_{b,j}(z)$ in the $(b,j)$ term. The Jacobian is one; $e^{-H(z)}e^{-H(z_{[j]})}=e^{-H(z')}e^{-H(z'_{[j]})}$ because $H(z')=H(\mathcal F(z_{[j]}))=H(z_{[j]})$ and $z'_{[j]}=\mathcal F(z)$ by Lemma~\ref{lem:reindexing}; $S(\mathcal O_b(z))=S(\mathcal O_{b^*}(z'))$ by the same lemma; and the domain of integration becomes $B\cap E_{b^*,j}\cap\Psi_{b^*,j}^{-1}(A)$. Summing over $(b,j)$ is the same as summing over $(b^*,j)$. Hence
\[
  \int_A e^{-H(z)}P(z,B)\,dz=\int_B e^{-H(z)}P(z,A)\,dz,
\]
that is, $P$ satisfies detailed balance with respect to $\mu\propto e^{-H}$ on $\mathbb R^{2d}$.

Finally, the position kernel is $K(\theta,d\theta')=\int N(d\rho;0,M)\,P\bigl((\theta,\rho),d\theta'\times\mathbb R^d\bigr)$. Since $\mu(d\theta\,d\rho)=\pi(d\theta)N(d\rho;0,M)$, integrating the phase-space detailed balance identity over $\rho$ and $\rho'$ gives $\pi(d\theta)K(\theta,d\theta')=\pi(d\theta')K(\theta',d\theta)$.
\end{proof}

\paragraph{Remark.}
Dead ends require no special treatment: an index $k\notin\mathcal C_0$ has $r_{[k]}=0$, is never selected, and does not contribute to $S(\mathcal O)$. The implementation may therefore stop building the orbit at the first failing macro step in each direction.

\section{Comments regarding Table \ref{tab:criteria_computational_cost}} \label{appendix:criteria_computational_cost}

For the Hamiltonian criterion, if $c^*$ is indeed the smallest value such that the criterion is satisfied for a given threshold $\delta$, the criterion must not be satisfied for $c = 0, \dots, c^* - 1$. This implies that for each value of $c$, $l = 2^c$ gradient evaluations are done in order to perform $l$ leapfrog micro steps. Thus, the forward pass will evaluate the gradient a total of $\sum_{c = 0}^{c^*} 2^c = 2^{c^*+1}-1$ times. In a similar manner, because of Proposition \ref{prop:approx_reverse}, to satisfy the reversibility condition, the backward pass must perform $\sum_{c = 0}^{c^* - 1} 2^c = 2^{c^*}-1$ gradient evaluations. 

For the MRF criterion, if $c^* = 1$, then it is easy to see that the required number of gradient evaluations for the forward pass is 4. However, for $c^* > 1$, the finer state from $c - 1$ can then be reused as the coarser state when checking the criterion for a given value of $c = 2, \dots, c^*$. Therefore, based on Figure \ref{fig:bs_illustration}, which corresponds to the case of $c = 2$, the upper path is the finer state from $c = 1$ and has therefore been computed. Thus, only $4+2=6$ evaluations are needed and the total number of gradient evaluations for the forward pass when $c^* = 2$ is $4+6=10$. More generally, for a given $c^*$, the number of gradient evaluations for the forward pass becomes $3(2^{c^*}) - 2$. 

Regarding the backward pass, if $c^* = 1$, it is clear from Proposition \ref{prop:approx_reverse} (using the reversibility of $\kappa_{MRF}$ stated in Proposition \ref{prop_bs_criterion}) that one does not need to compute the backward pass as the step is naturally reversible, i.e., the number of gradient evaluations for the backward pass is zero in this case. For $c^* = 2$, Proposition \ref{prop:approx_reverse} states that it is only necessary to compute the backward pass for $c = 1$. In that case, one can observe that the finer forward state during the backward pass is indeed the backward state from the forward pass for $c = c^* = 2$. Consequently, reusing this computed path, one only needs to compute the coarser forward state and the backward state of this backward pass, where each one corresponds to a single gradient evaluation. The total number of gradient evaluations during this specific backward pass is therefore 2. For a general $c^* > 2$, one needs to compute the backward pass for $c = 1, \dots, c^*-1$ without any early termination in order to obtain a valid reversible macro step.  Thus, when performing the backward pass for the values $c = 1, \dots, c^{*}-2$, the total number of gradient evaluations is $3(2^{c^* - 2}) - 2$. For $c = c^* - 1$, since $c^* > 2$, both the coarser and finer forward state of the backward pass for this value of $c$ have already been computed from the previous backward pass of $c = c^* - 2$ and the backward state from the final forward pass, respectively. Thus, only the backward state must be computed, which requires $2^{c^* - 2}$ gradient evaluations. The total number of gradient evaluations during the backward pass is then $3(2^{c^* - 2}) - 2 + 2^{c^* - 2} = 4(2^{c^* - 2}) - 2 = 2^2(2^{c^* - 2})-2 = 2^{c^*} - 2 $. 

Finally, for the RKNF criterion, the computation can be terminated early if an early micro step has already violated the criterion. This process is of course situation-dependent so that computing the exact number of gradient evaluations for both the forward and backward pass is not possible. However, it is possible to derive an upper bound of this cost. Given $c^*$, it is assumed that when computing the forward pass for $c = 0, \dots, c^*-1$, the criterion is violated at the very last micro step. Then, if $c^*=0$, the number of gradient evaluations for the forward pass is simply $1 + 1 = 2$. Because of Proposition \ref{prop:approx_reverse} (with Proposition \ref{prop_flow_criterion}), there is no need to perform the backward pass in this case. Given a general $c^* > 0$, for $c = 0, \dots, c^* - 1$, the maximum number of gradient evaluations performed while still violating the desired criterion is $2^c + 2^c = 2^{c+1}$, where the first term corresponds to the number of gradient evaluations due to the standard leapfrog step while the second term represents the number of gradient evaluations due to \eqref{eq:third_order_integrator_flow_nr2}. Thus, the maximum number of gradient evaluations becomes $\sum_{c=0}^{c^*} 2^{c+1} = 2^{c^* +2}-2$. Using the same assumption and following the identical strategy, the upper bound of the number of gradient evaluations for the backward pass is $2^{c^* +1}-2$. The final results of the computational cost analysis for the three criteria are summarized in Table \ref{tab:criteria_computational_cost}. 

\section{Strategies of tuning \texorpdfstring{$h$ and $\delta$}{h and delta}} \label{appendix:tuning_strategies}
As mentioned in Section \ref{sec:tuning_parameter_effect} and in some of the examples related to the multinomial HMC methods, the macro step size $h$ and the error threshold $\delta$ can be tuned to obtain a desired combination of $P(c = c_\text{min})$ and $P(\Delta < \Delta_\text{threshold})$. Now, further details regarding the tuning strategy similar to \citet{bou2025within} are discussed. 

\subsection{Tuning \texorpdfstring{$\delta$}{delta}}
Recall that for a general criterion, $\delta$ controls the number of micro steps needed to cover a full macro step. For a given iteration/orbit, this implies that it will also affect the largest deviation in the energy of the reachable states within an orbit, i.e., $\Delta$. Thus, similar to \citet{bou2025within}, this variation in the energy is here modeled as $\Delta \approx \lambda \delta$. 

Now, the goal is to choose $\delta$ such that $P(\Delta \leq \Delta_\text{threshold}) = \beta \in (0, 1)$. Based on the assumption above, during the warmup phase, the observed $\Delta$ is recorded for each iteration to compute the observed $\lambda$ corresponding to $\Delta_\text{obs} / \delta$. Then, $\delta$ is updated by $\Delta_\text{threshold} / q_\lambda(\beta)$, where $q_\lambda(\beta)$ corresponds to the empirical $\beta$-quantile of the recorded values of $\lambda$. 

\subsection{Tuning \texorpdfstring{$h$}{h}}
With a given $\delta$ and criterion, one can tune $h$ by relating it to $P(c = c_\text{min}) = \omega$. Because the leapfrog is a second-order integrator, it is plausible that for the coarsest step given $\delta$, $\delta = O(h^3)$ \citep{bou2025within}. Setting $\delta \approx Ch^3$ for some positive constant $C$ and similar to the above, one can record the observed value of $C$ in each iteration by $\kappa_\text{obs} / h^3$, where $\kappa_\text{obs}$ is the observed value of the chosen error criterion during a single macro step with $c = c_\text{min}$. Then, $h$ is updated according to $(\delta / q_C(\omega))^{1/3}$ with $q_C(\omega)$ representing the empirical $\omega$-quantile of the recorded values of $C$. 

\end{document}